%% file: main.tex
\documentclass[runningheads,orivec]{llncs}
\usepackage[ruled]{algorithm}
\usepackage[noend]{algpseudocode}
\usepackage{subcaption}
\usepackage{graphicx} 
\usepackage{lineno}
\usepackage[T1]{fontenc}
\usepackage{amssymb} % for \square
\usepackage{etoolbox}
\usepackage{booktabs} % for \toprule, \midrule, \bottomrule
\usepackage{makecell}

\makeatletter
\newcommand{\lncsqed}{%
  \ifhmode
    \qed
  \else
    \hfill\qed\par
  \fi
}
\makeatother

\let\lncsproof\proof
\let\endlncsproof\endproof
\renewenvironment{proof}{\lncsproof}{\lncsqed\endlncsproof}
\usepackage{tikz-cd}
\usepackage{array}
\usepackage{mdframed}
\usepackage{thm-restate}
\usepackage{pgf-pie}
 \usepackage{mathrsfs}
 \usepackage{enumerate}
 \usepackage{enumitem}

\usepackage{underscore}
\usepackage{pgfornament}
\usepackage{scalerel}
\usepackage{MnSymbol}
\usepackage{csquotes}
\usepackage{colortbl}
\usepackage{hyperref}
\usepackage{cleveref}
\usepackage{color}

\usepackage{afterpage} 
\usepackage{hhline}
\usepackage{multirow}
\usepackage{longtable}
\usepackage{adjustbox}
\usepackage[T1]{fontenc}
\usepackage[ttdefault=true]{AnonymousPro}

\input{mycommands}

\newcommand{\alttext}[1]{}

\begin{document}
%\linenumbers

\title{
Bringing closure to theory combination properties
}

\author{
Guilherme V. Toledo\inst{1} \and
Benjamin Przybocki\inst{2} \and
Yoni Zohar\inst{1}
}
\institute{
%\hfill
\begin{minipage}{0.4\textwidth}
\centering
\inst{1}Bar-Ilan University, Israel
\end{minipage}
\begin{minipage}{0.5\textwidth}
\centering
\inst{2}Carnegie Mellon University, USA
\end{minipage}
}
\maketitle              

\begin{abstract}
We consider the closure of three classical combination properties, namely, stable infiniteness, gentleness and shininess 
(or, equivalently for decidable theories, strong politeness), 
under intersection and combinability.
We compute every possible intersection, and then compute the maximal set of theories that can be combined with each resulting intersection.
We iterate this process until no new sets are identified.
How many properties will we end up with?

\end{abstract}

\section{Introduction}
\label{sec:intro}

\begin{table}[t]
\caption{Classical theory combination results and their sharpened versions}
\centering
\small
\setlength{\tabcolsep}{10pt}  % increase spacing
\begin{tabular}{@{}llll@{}}\toprule
\textbf{Source} & \textbf{Original Result} & \textbf{Sharpness Results~\cite{POPL}} \\ %& \textbf{More Results} \\
\midrule
%Nelson \& Oppen 
\cite{NelsonOppen}
& $\class{SI}\subseteq G(\class{SI})$
& $\class{SI}=G(\class{SI})$  \\
% & ---\\

%Tinelli \& Zarba 
\cite{shiny}
& $\class{shiny}\subseteq G(\class{})$
& $\class{shiny}=G(\class{})$,  $G(\class{shiny})=\class{}$\\

%Fontaine 
\cite{gentle}
& $\class{\G}\subseteq G(\class{\G})$
& $\class{CFS}=G(\class{\G})$,
 $G(\class{CFS})=\class{\G}$ \\
%,\quad \class{\G}=G(\class{CFS})$ \\
\bottomrule
\end{tabular}
\vspace{2mm}
\label{tab:sharp-summary}
\vspace{-5mm}
\end{table}

%\subsection{Background}
Theory combination in Satisfiability Modulo Theories 
\cite{BSST21,bonacina2019theory} studies how to combine decision procedures of two theories into a decision procedure for their combination (or, axiomatically, their union).
Most research in this field focuses on {\em disjoint} combination,
where the theories share no symbols other than equality.

% Denote by $G$, the operator that takes a
% set $X$ of decidable theories, and returns the set $G(X)$
% of decidable theories that can be disjointly combined with every theory of $X$.
% In other words,
% $G(X)$ is the set of theories $\T$ such that for every theory $\T'$ of $X$,
% the combination of $\T$ and $\T'$ (after shared symbols, if they exist, are renamed to be distinct) is decidable.
% %
% Three classical results in theory combination
% are given in the second column of \Cref{tab:sharp-summary}, where $\class{}$ denotes the set of all decidable theories,
% and 
% $\class{SI}$, $\class{shiny}$ and $\class{\G}$
% denote the sets of decidable theories that are, in addition,
% are stably infinite, shiny, and gentle, respectively. 

Nelson and Oppen proved \cite{NelsonOppen,OppenSI}
that any pair of decidable stably infinite\footnote{All technical notions in the introduction will be formally defined in \Cref{sec:prem}.} theories can be disjointly combined.
Tinelli and Zarba~\cite{shiny} proved that every decidable shiny theory can be disjointly combined with any decidable theory.\footnote{Following \cite{CasalRasga,polite,JB10-LPAR}, the same can be said about decidable strongly polite theories.}
Fontaine~\cite{gentle} proved that every decidable gentle theory can be disjointly combined with a general class of theories that includes gentle theories.%\footnote{In fact, Fontaine proved a stronger result, that allowed gentle theories to be combined with a larger set of theories.}

Recently, we studied the other directions of these results~\cite{POPL}.
We have shown that the first two results are {\em sharp}: every decidable theory that can be disjointly combined with all decidable stably infinite theories must be stably infinite;
and every decidable theory that can be disjointly combined with all decidable theories must be shiny.
As for gentleness, we proved in \cite{10.1007/978-3-031-99984-0_2,POPL} that the result of \cite{gentle} can be made sharp, by introducing the notion of theories with {\em computable finite spectra}.
% and proving that:
% We
% showed that this is sharp, in the sense that gentle theories are the largest set of decidable theories that can be combined with all decidable theories with computable finite spectra (and vice versa).
%

All these results are  summarized in \Cref{tab:sharp-summary} and \Cref{fig:not-hasse-diagram},
using the notations of \Cref{tab:notation-prop-th-def}.
% For example,
% $\class{}$ is the set of decidable theories, and
% $\class{SI}$ is the set of stably infinite theories.
They also employ a function, denoted $G$,
that takes a set $X$
of decidable theories, and returns the set $G(X)$
of all decidable theories that can be disjointly combined with every theory of $X$.
For example, the Nelson--Oppen theorem is phrased as
$\class{SI}\subseteq G(\class{SI})$ and its sharpened version as
$\class{SI}=G(\class{SI})$.

In \Cref{fig:not-hasse-diagram},
an edge between two sets of theories means that the lower is a subset of the upper (all inclusions were proven in \cite{POPL}).
% The dashed line 
% % marks a symmetry that 
% precisely captures $G$: 
$G(X)$ is  the  mirror of $X$ against the dashed line.
This structure 
 is contained in the lattice of sharp theory combination properties,
introduced in \cite{POPL}:
the order is $\subseteq$, the meet is the intersection, and the join (which is not as useful in our context) is defined in terms of the meet and $G$.
But, \Cref{fig:not-hasse-diagram} itself is not a lattice. 
For example, the intersection of gentleness and stable infiniteness is not there. 
The main motivating question of this paper is: 
{\bf what is the smallest lattice, closed under $G$, containing \Cref{fig:not-hasse-diagram}?}
The answer, as we prove in \Cref{sec:what-you-get}, is \Cref{fig:hasse-diagram}.

As a practical outcome,
we introduce two new combination methods in \Cref{sec:newcombmethods},
that allow combinations absent from 
 \Cref{fig:not-hasse-diagram}.
We improve the understanding of theory combination properties at a general, abstract level;
in the future, when theories are considered for implementation, researchers will have a larger suite of tests they can perform to determine the theories' combinability properties. 
And, if a pair of theories cannot be combined, researchers will have a more precise understanding of how the expressiveness of one of the theories needs to be weakened to allow combination.
Following \cite{POPL},
we focus on one-sorted logic, leaving the generalization to many-sorted logic to future work. %Such future work would first need to generalize the definition of the lattice from \cite{POPL}.

\begin{figure}[t]
  \centering

  \begin{subfigure}[t]{0.43\textwidth}
    \centering
    \alttext{A Hasse diagram of theory combination properties that were described before the present paper. The set of all decidable theories lies above both the set of theories with computable finite spectra and the set of stably infinite theories. The set of theories with computable finite spectra lies above the set of gentle theories. The set of gentle theories and the set of stably infinite theories lies above the set of shiny theories.}
    \begin{tikzcd}[sep=tiny]
      & \phantom{0} \arrow[bend right=90,swap,red,<->]{dddd}{\Gal} & \class{}\arrow[dash]{dl}{}\arrow[dash]{ddr}{}  & & & &\\
      & \class{CFS}\arrow[dash]{dd}{} & & &  &\\
      \phantom{O}\arrow[dashed,dash]{r}{}&\phantom{O}\arrow[dashed,dash]{rr}{}& & \class{SI}\arrow[dash]{ddl}{}\arrow[dash,dashed]{r} &\phantom{0}\\
      & \class{\G}\arrow[dash]{dr}{} & & &  & \\
      & \phantom{0} & \class{shiny} & &  & & 
    \end{tikzcd}
    \vspace{2.6em}
    \caption{Before this paper (not a lattice)}
    \label{fig:not-hasse-diagram}
  \end{subfigure}\hfill
  \begin{subfigure}[t]{0.53\textwidth}
    \centering
    \alttext{A Hasse diagram of theory combination properties, including those described in the present paper. The set of all decidable theories lies above the set of theories with computable bounded finite spectra, which lies above both the set of theories with computable finite spectra and the set of theories with computable bounded spectra. The set of theories with computable finite spectra lies above both the set of gentle theories and the set of theories with computable spectra. The set of theories with computable bounded spectra lies above both the set of theories with computable spectra and the set of stably infinite theories. The set of theories with computable spectra lies above both the set of gentle theories and the set of stably infinite theories with computable spectra. The set of stably infinite theories lies above the set of stably infinite theories with computable spectra. The set of gentle theories and the set of stably infinite theories with computable spectra lies above the set of stably infinite gentle theories, which lies above the set of shiny theories.}
    \begin{tikzcd}[sep=tiny]
      && \mathfrak{T}\arrow[dash]{d}{} &  & & &\\
      &\phantom{O}\arrow[bend right=90,swap,red,<->]{dddddd}{\Gal}& \class{CBFS}\arrow[dash]{ddl}{}\arrow[dash]{ddrr}{} & & & &\\
      & & & & & & \\
      &\class{CFS}\arrow[dash]{dr}{}\arrow[dash]{dd}{} & & & \class{CBS}\arrow[dash]{dr}{}\arrow[dash]{dll}{} &\\
      \phantom{O}\arrow[dashed,dash]{r}{}&\phantom{O}\arrow[dashed,dash]{r}{}& \class{CS}\arrow[dashed,dash]{rrr}{}\arrow[dash]{dl}{}\arrow[dash]{drr}{} & & & \class{SI}\arrow[dash]{dl}{}\arrow[dashed,dash]{r}{} & \phantom{O}\\
      &\class{\G}\arrow[dash]{ddr}{} & & & \class{SI+CS}\arrow[dash]{ddll}{} & \\
      & & & & & & \\
      & \phantom{O}& \class{SI+\G}\arrow[dash]{d}{} & &  & & \\
      & & \class{shiny} &  & & & \\
    \end{tikzcd}
    \caption{After this paper (a lattice)}
    \label{fig:hasse-diagram}
  \end{subfigure}

  \caption{Diagrams of theory combination properties
  % On the right is the smallest lattice that contains the structure on the left, which is itself not a lattice.
  }
  \vspace{-3mm}
\end{figure}

\paragraph*{\textbf{Outline.}}
In \Cref{sec:prem} we present preliminaries, from  first-order logic, through theory combination, to the Galois connection of theory combination.
\Cref{sec:what-you-get} presents the process through which we close the lattice;
\Cref{subsec:rel-between-prop} gives relationships between properties  and may be skipped in a first reading, being quite technical, much like
the new theories of \Cref{tab:extheories}, which are useful only for proofs. 
\Cref{sec:newcombmethods} gives further details on new combination methods.
\Cref{subsec:sharpness} shows sharpness.
\Cref{sec:odd} introduces a separating property  that we call
{\em self-combinability}, roughly meaning the ability of a theory to be combined with (a copy of) itself.
\Cref{sec:conclusion} concludes and sketches directions for future work.

\section{Preliminaries}
\label{sec:prem}

\subsection{First-order logic}
\label{sec:fol}
We review standard definitions regarding first-order logic \cite{Enderton}.
A \textbf{signature} $\Sigma$ is a set of function and predicate symbols,  each equipped with an arity, that includes $=$ as a binary predicate symbol.
We only consider signatures that are at most countably infinite.
Two signatures are \textbf{disjoint} if they share only the equality symbol.
% The \textbf{empty signature} has no functions and no predicates other than equality.
We define terms, formulas, literals, and sentences (formulas with no free variables) in the standard way;
$QF(\Sigma)$ is the set of quantifier-free $\Sigma$-formulas.

A \textbf{$\Sigma$-structure} $\mathbb{A}$ is a non-empty set $\dom{\mathbb{A}}$ (its domain) equipped with functions $f^{\mathbb{A}}:\dom{\mathbb{A}}^{n}\rightarrow\dom{\mathbb{A}}$ and predicates $P^{\mathbb{A}}\subseteq \dom{\mathbb{A}}^{m}$, for all 
function symbols $f$ and predicate symbols $P$ in $\Sigma$, of arities $n$ and $m$, respectively.
A \textbf{$\Sigma$-interpretation} $\A$ is an extension of a $\Sigma$-structure, where each variable $x$ is assigned a value $x^{\A}\in\dom{\A}$.
We define the value $\tau^{\A}\in\dom{\A}$ of a term, and the truth-value 
$\varphi^{\A}$
of a formula in an interpretation $\A$ in the usual way;
if $\varphi$ is true in $\A$ we write $\A\vDash\varphi$.
In \Cref{card-formulas} the formula $\neq(x_{1},\ldots,x_{n})$ for $n\geq 2$ states the variables $x_{1}$ through $x_{n}$ all take different values (for $n=1$ it is simply $true$).
The formulas $\psi_{\geq n}$, $\psi_{=n}$ and $\psi_{\leq n}$ state that there are  at least, exactly, and at most $n$ elements in the domain, respectively.

\begin{figure}[t]
\begin{mdframed}
\begin{equation*}
\begin{aligned}
    \neq(x_{1},\ldots,x_{n})= \bigwedge_{i=1}^{n-1}\bigwedge_{j=i+1}^{n}\neg(x_{i}=x_{j})\\
    \\\vspace{-4mm}
    \psi_{\geq n}= \Exists{{x_1,\ldots, x_n}}\NNEQ{x}
\end{aligned}
\quad
\begin{aligned}
\psi_{\leq n}= \Exists{x_1,\ldots,x_n}\Forall{y}\bigvee_{i=1}^{n}y=x_{i}\\
\\\vspace{-4mm}
\psi_{=n}=\psi_{\geq n}\wedge\psi_{\leq n}
\end{aligned}
\end{equation*}
\end{mdframed}
\vspace{-3mm}
\caption{Cardinality formulas}
\label{card-formulas}
\vspace{-5mm}
\end{figure}

A {{\textbf{$\Sigma$-theory}}} is a set of $\Sigma$-sentences (\textbf{axioms}).
A $\Sigma$-interpretation that satisfies all axioms of $\T$ is called a \textbf{$\T$-interpretation}, or a \textbf{model of $\T$}.
A formula is \textbf{$\T$-satisfiable} if there is a $\T$-interpretation that satisfies it, and a set of formulas is $\T$-satisfiable if there is a $\T$-interpretation that satisfies all formulas in the set simultaneously.
A $\Sigma$-theory is \textbf{decidable} when there is an algorithm that takes a quantifier-free $\Sigma$-formula and returns whether or not it is $\T$-satisfiable.
$\T_{Eq}$ is the theory over the empty signature (that only has $=$ as a symbol) having no axioms.
% commonly referred to as the theory of uninterpreted functions (UF).
Given signatures $\Sigma_{1}$ and $\Sigma_{2}$, $\Sigma_{1}\sqcup\Sigma_{2}$ is their {\bf disjoint union}, that is, if there are any symbols in common between the two, we rename them in $\Sigma_2$ before adding them to the symbols of $\Sigma_1$.
Given a $\Sigma_{1}$-theory $\T_{1}$ and a $\Sigma_{2}$-theory $\T_{2}$, their \textbf{disjoint combination} $\T_{1}\sqcup\T_{2}$ is the $\Sigma_{1}\sqcup\Sigma_{2}$-theory whose axiomatization is the union of $\T_{1}$ and $\T_{2}$ (after renaming the symbols in $\Sigma_{2}$ occurring in $\T_{2}$). 
% If $\T_{1}\sqcup\T_{2}$ is decidable we say $\T_{1}$ and $\T_{2}$ are \textbf{combinable}.

We denote $\mathbb{N}\cup\{\aleph_{0}\}$ by $\N$, and $\{n\in \N : p\leq n\leq q\}$ by $[p,q]$.
The \textbf{spectrum} of a formula $\varphi$ in a theory $\T$, denoted $\spec_{\T}(\varphi)$, is the subset of $\N$ of those elements $n$ such that there exists a $\T$-interpretation $\A$ that satisfies $\varphi$ with $|\dom{\A}|=n$.
Our restriction from the class of all cardinals to $\N$ in this definition is possible thanks to the downward L{\"o}wenheim--Skolem theorem, and the assumption that all signatures we deal with are countable. 
We state the 
L{\"o}wenheim--Skolem theorem as well as the compactness theorem:

\begin{theorem}[{\cite[Theorems~2.3.4 and 2.3.7]{marker2002}}] \label{LowenheimSkolem}
    Let $\Delta$ be a set of formulas over a countable signature, and let $\kappa \ge \aleph_0$. 
    Then, $\Delta$ is satisfied by an interpretation of size $\kappa$ if and only if it is satisfied by an interpretation of size $\aleph_0$.
\end{theorem}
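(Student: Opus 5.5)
This is the downward/upward Löwenheim--Skolem theorem restricted to countable signatures, which the paper cites from Marker. I would prove the two directions separately.

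\textbf{Right to left (from $\aleph_0$ to $\kappa$).} Let $\A$ satisfy $\Delta$ with $|\dom{\A}|=\aleph_0$. I would proceed as follows.
\begin{enumerate}[label=(\roman*)]
\item Expand the signature by $\kappa$ fresh constants $c_\alpha$, for $\alpha<\kappa$.
\item Consider $\Delta' = \Delta \cup \{\neg(c_\alpha=c_\beta) : \alpha\neq\beta\}$, keeping the variable assignment of $\A$ (equivalently, replace the free variables of $\Delta$ by new constants).
\item Every finite subset of $\Delta'$ mentions only finitely many $c_\alpha$. Since $\A$ is infinite, it satisfies that subset once those constants are interpreted as distinct elements.
\item By compactness, $\Delta'$ has a model, and that model has size at least $\kappa$.
\item The expanded signature has size $\kappa$. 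The downward step below, which works for any signature of size at most $\kappa$, then yields a model of $\Delta'$ of size exactly $\kappa$.
\item Reducting this model back to the original signature gives an interpretation of size $\kappa$ satisfying $\Delta$.
\end{enumerate}

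\textbf{Left to right (from $\kappa$ to $\aleph_0$).} Let $\mathbb{B}$ satisfy $\Delta$ with $|\dom{\mathbb{B}}|=\kappa\ge\aleph_0$. I would build a countable elementary substructure.
\begin{enumerate}[label=(\roman*)]
\item Start with a countably infinite set $X_0\subseteq\dom{\mathbb{B}}$ containing the values of all variables. Only countably many variables exist, so this is possible.
\item Given $X_n$, choose a witness in $\mathbb{B}$ for every formula $\exists y\,\varphi(y,\bar a)$ with parameters $\bar a$ from $X_n$ that holds in $\mathbb{B}$. Let $X_{n+1}$ be $X_n$ together with these witnesses; this uses choice.
\item Because the signature is countable, there are only countably many formulas with parameters from a countable set. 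Hence each $X_{n+1}$ is countable, and so is $X=\bigcup_n X_n$.
\item By the Tarski--Vaught test, $X$ is closed under the function symbols (apply it to $\exists y\, y=f(\bar a)$). It is therefore the domain of a substructure $\A\preceq\mathbb{B}$, with the variables interpreted as in $\mathbb{B}$.
\item Elementarity gives $\A\vDash\Delta$. Since $X_0$ was infinite, $|\dom{\A}|=\aleph_0$.
\end{enumerate}

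\textbf{Main obstacle.} The delicate step is the Tarski--Vaught argument: an induction on formulas showing that satisfaction agrees between $\A$ and $\mathbb{B}$ for all formulas with parameters in $X$, where the existential case is exactly what the witness closure handles. The other points needing care are that free variables are treated correctly and that the countable-signature hypothesis is what keeps $X$ countable. Since the statement is classical, in the paper one would simply invoke the cited theorems.
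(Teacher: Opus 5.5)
Your proposal is correct and is the standard argument behind the results the paper cites from Marker. The upward direction uses $\kappa$ fresh constants, compactness and then cutting back down, and the downward direction uses a countable Tarski--Vaught witness closure; the paper itself gives no proof and simply invokes those theorems. The one step you only assert, the size-$\kappa$ downward construction for the expanded signature, goes through verbatim: the witnesses for $\exists y\, y=c_\alpha$ put every $c_\alpha$ into $X_1$, and over a signature of size $\kappa$ there are at most $\kappa$ formulas with parameters from a set of size at most $\kappa$, so the closure has size exactly $\kappa$.
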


%We will also make frequent use of the compactness theorem, found below.

\begin{theorem}[{\cite[Theorem~2.1.4]{marker2002}}] \label{compactness}
    A set of formulas is satisfiable if and only if every finite subset of it is satisfiable.
\end{theorem}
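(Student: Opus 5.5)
The statement is the compactness theorem, which the paper cites from Marker rather than proving. The cleanest route is through the completeness theorem for first-order logic, so I would use that.

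The left-to-right direction is immediate. An interpretation $\A$ satisfying a set $\Delta$ satisfies every formula in $\Delta$, so it satisfies every finite subset.

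For right-to-left, I would first reduce to sentences. Replace each free variable $x$ occurring in $\Delta$ by a fresh constant symbol $c_x$, obtaining a set $\Delta'$ of sentences over an expanded signature. An interpretation satisfying $\Delta$ gives a structure for $\Delta'$ by setting $c_x$ to the value $x^{\A}$. Conversely, a structure satisfying $\Delta'$ gives an interpretation satisfying $\Delta$ by assigning each $x$ the value of $c_x$. This correspondence also holds for every finite subset, so it suffices to treat sets of sentences.

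Now suppose $\Delta'$ is not satisfiable. By the completeness theorem (G\"odel), $\Delta'$ is then inconsistent: there is a formal derivation of a contradiction from $\Delta'$. A derivation is a finite object, so it uses only finitely many premises, which form a finite $\Delta_0\subseteq\Delta'$. Then $\Delta_0$ is inconsistent, and by soundness it is unsatisfiable. This contradicts the assumption that every finite subset is satisfiable.

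The only substantial ingredient is completeness. The main obstacle would arise only if it is not assumed: one would then need the Henkin construction. That construction extends a finitely satisfiable set to a maximal finitely satisfiable set with witnesses, and builds the term model on equivalence classes of closed terms modulo provable equality. Alternatively, one can bypass proof systems entirely with an ultraproduct argument. Index by finite subsets $F\subseteq\Delta$, choose a model $\A_F$ of each $F$, and take an ultrafilter containing every set $\{F : F\supseteq F_0\}$; these sets have the finite intersection property. By \L{}o\'s's theorem the ultraproduct satisfies every formula of $\Delta$. Because the paper only cites this result from Marker, relying on completeness is the appropriate level of detail.
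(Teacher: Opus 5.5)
Your proof is correct. The paper does not prove this result; it only cites Marker, and your argument (completeness, finiteness of derivations, then soundness) is the standard derivation from G\"odel's completeness theorem. Your fresh-constant reduction is also worth keeping: the paper states compactness for sets of formulas satisfied by interpretations and uses it that way in \Cref{compactness2}, whereas Marker's version is phrased for theories, i.e.\ sets of sentences.
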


% In particular, we will utilize the following consequence of \Cref{LowenheimSkolem,compactness}.

\begin{restatable}{corollary}{corofcomp}
\label{compactness2}
Let $\Sigma$ be a countable signature, $\T$ a $\Sigma$-theory,
and $\phi$ a $\Sigma$-formula.
Suppose $\phi$ is $\T$-satisfiable
but $\aleph_0\notin\spec_{\T}(\phi)$.
Then $\spec_{\T}(\phi)$ is finite.
\end{restatable}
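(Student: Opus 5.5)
We argue by contraposition: assuming $\spec_{\T}(\phi)$ is infinite, we show $\aleph_0\in\spec_{\T}(\phi)$. Since $\phi$ is $\T$-satisfiable, the spectrum is non-empty. If it is infinite, it contains either $\aleph_0$ itself, in which case we are done, or infinitely many finite natural numbers. So it suffices to treat the case where $\spec_{\T}(\phi)\cap\mathbb{N}$ is unbounded.

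In that case we consider the set of formulas
\[
\Delta=\T\cup\{\phi\}\cup\{\psi_{\geq n} : n\in\mathbb{N}\},
\]
using the cardinality formulas of \Cref{card-formulas}. We check that every finite subset $\Delta_0\subseteq\Delta$ is satisfiable. Such a $\Delta_0$ mentions only finitely many formulas $\psi_{\geq n}$, so let $N$ be the largest such $n$. By unboundedness there is some finite $m\geq N$ in $\spec_{\T}(\phi)$. This yields a $\T$-interpretation of size $m$ satisfying $\phi$, and it satisfies every $\psi_{\geq n}$ with $n\leq N$. By compactness (\Cref{compactness}), $\Delta$ is satisfiable.

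Any interpretation $\A$ satisfying $\Delta$ is a $\T$-interpretation satisfying $\phi$ with infinite domain, of some cardinality $\kappa\geq\aleph_0$. By \Cref{LowenheimSkolem}, applied to $\T\cup\{\phi\}$ (which is allowed because the signature $\Sigma$ is countable), there is also a $\T$-interpretation of size $\aleph_0$ satisfying $\phi$. Hence $\aleph_0\in\spec_{\T}(\phi)$, contradicting the hypothesis. Therefore $\spec_{\T}(\phi)$ is finite.

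There is no real obstacle here. The one point needing care is that $\phi$ may have free variables, so satisfiability must be understood for interpretations, which include variable assignments. Both cited theorems are stated for sets of formulas, so this causes no trouble.
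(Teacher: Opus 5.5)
Your proof is correct and takes essentially the same approach as the paper: compactness applied to $\phi$ together with the formulas $\psi_{\geq n}$ produces an infinite $\T$-model of $\phi$, and L\"owenheim--Skolem then contradicts $\aleph_0\notin\spec_{\T}(\phi)$. The differences are cosmetic: the paper applies \Cref{LowenheimSkolem} first, to rule out every infinite cardinality, and indexes its compactness set by elements of the spectrum, whereas you apply L\"owenheim--Skolem at the end to bring the infinite model down to size $\aleph_0$.
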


\subsection{Theory combination properties}
\label{sec:theorycombprops}

\begin{table}[t]
\caption{Notations for sets of theories}

\centering
\renewcommand{\arraystretch}{0.9}
\setlength{\tabcolsep}{6pt}

\begin{tabular}{@{}c p{5.5cm} c@{}}
\toprule
\textbf{Symbol} & \textbf{Description (decidable and \ldots)} & \textbf{Definition} \\
\midrule
$\class{}$ 
& --- 
& \Cref{sec:fol} \\

\midrule
$\class{SI}$ 
& Stably infinite 
& \multirow{7}{*}{\centering\Cref{sec:theorycombprops}} \\
$\class{shiny}$ 
& Shiny 
& \\
$\class{\G}$ 
& Gentle 
& \\
$\class{SI+\G}$ 
& Stably infinite and Gentle 
& \\
$\class{CFS}$ 
& Computable finite spectra 
& \\
$\class{CS}$ 
& Computable spectra 
& \\
$\class{SI+CS}$ 
& Stably infinite and computable spectra 
& \\

\midrule
$\class{CBFS}$ 
& Computable bounded finite spectra 
& \Cref{def:cbfs} \\
$\class{CBS}$ 
& Computable bounded spectra 
& \Cref{def:cbs} \\
\bottomrule
\end{tabular}
\vspace{2mm}
\label{tab:notation-prop-th-def}
\vspace{-8mm}
\end{table}

In \Cref{tab:notation-prop-th-def}, we summarize the theory combination properties of the paper:
essentially, we shall denote the set of decidable theories with property $X$ by $\class{X}$.
%(e.g., $\class{SI}$ is the set of stably infinite theories).

Let $\T$ be a theory, and let us denote $\mathbb{N}\setminus\{0\}$ by $\No$.
$\T$ is \textbf{stably infinite} \cite{NelsonOppen} if
$\spec_{\T}(\phi)\neq\emptyset$ implies $\aleph_0\in\spec_{\T}(\phi)$.
$\T$ has \textbf{computable finite spectra}~\cite{10.1007/978-3-031-99984-0_2}
if there is an algorithm that takes a quantifier-free formula $\phi$ and $n\in\No$, and returns whether $n\in\spec_{\T}(\phi)$;
$\T$ is \textbf{infinitely decidable} \cite{10.1007/978-3-031-99984-0_2} if there is an algorithm that takes a quantifier-free formula $\phi$ and returns whether $\aleph_{0}\in\spec_{\T}(\phi)$;
a theory with both these properties is said to have \textbf{computable spectra}.

$\T$ is \textbf{smooth} when, given a quantifier-free formula $\phi$, a $\T$-interpretation $\A$ that satisfies $\phi$, and any cardinal $\kappa\geq|\dom{\A}|$, there exists a $\T$-interpretation $\B$ that satisfies $\phi$ with $|\dom{\B}|=\kappa$.
$\T$ has the \textbf{finite model property} when, 
$\spec_{\T}(\phi)\neq\emptyset$ implies $\spec_{\T}(\phi)\cap\No\neq\emptyset$,
for every quantifier-free formula $\phi$.
And, if $\T$ is decidable, its \textbf{minimal model function} is the function $\minmod_{\T}$ that takes a quantifier-free formula $\phi$ and returns, if $\phi$ is $\T$-satisfiable, the least element in $\spec_{\T}(\phi)$.
$\T$ is \textbf{shiny} \cite{shiny} when it is smooth, has the finite model property, and its minimal model function is computable.

% \textbf{shiny} \cite{shiny} when there is an algorithm that takes a quantifier-free formula $\phi$ and returns an $n\in\mathbb{N}$ such that, if $\phi$ is $\T$-satisfiable, $\spec_{\T}(\phi)=[n,\aleph_{0}]$ (and, otherwise, $n$ may be chosen arbitrarily);\footnote{The usual definition requires the theory to be smooth, have the finite model property and a computable minimal model function \cite{LPAR}, but this definition is more succinct.}

Denote $\N\setminus\{0\}$ by $\N^{*}$.
$\T$ is \textbf{gentle} \cite{gentle,10.1007/978-3-031-99984-0_2} when there is an algorithm that takes a quantifier-free formula $\phi$ and returns a pair $(S,b)$, where $S\subset\No$ is finite
%\footnote{
%It is not enough that $S$ is simply computable and finite if we are to output it:
%if we were to simply test whether every $n\in\No$ is in $S$, there would be no way of knowing when we are done.
%If our algorithm is to return $S$, as a whole, it must in practice return an algorithm for testing whether $n\in S$, as well as an upper bound $m\in\mathbb{N}$ such that $n\in S$ implies $n\leq m$.} 
and $b$ is a Boolean, such that:
if $b$ is true, $\spec_{\T}(\phi)=S$;
and if $b$ is false, $\spec_{\T}(\phi)=\N^{*}\setminus S$.
The original definition of gentleness, however, is a bit different.
It says that, for every quantifier-free formula $\phi$:
$\spec_{\T}(\phi)$ is either a finite subset of $\No$ or a cofinite subset of $\N^{*}$ containing $\aleph_{0}$;
there is an algorithm that tells us which is the case (corresponding to our Boolean $b$);
and there is an algorithm that, given $\phi$, returns $\spec_{\T}(\phi)$ if this set is finite, and $\N^{*}\setminus\spec_{\T}(\phi)$ otherwise.
We prefer the former one, as it merges both algorithms into one.
% To make it easier to refer to properties of theories, we use symbols for each property, as listed in \Cref{tab:notation-prop-th-def}.
% All notations refer to decidable theories.
% For example, $\class{SI}$ denotes the set of all decidable stably infinite theories.
% %
 %From \cite{POPL}, we have:

\begin{proposition}[\cite{POPL}]\label{prop:known-relations-properties}
$\class{shiny}\subseteq\class{\G}\cap\class{SI}$ and
$\class{\G}\subseteq\class{CFS}$;
the theories of $\class{SI}$ are all
infinitely decidable; and
$\class{SI}\cap\class{CS}=\class{SI}\cap\class{CFS}$.
\end{proposition}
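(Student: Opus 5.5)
We verify the four claims one at a time, directly from the definitions in \Cref{sec:theorycombprops}, using \Cref{compactness2} and \Cref{LowenheimSkolem} as the main tools.

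\emph{Shiny implies stably infinite and gentle.} Let $\T$ be shiny. Stable infiniteness follows from smoothness: if $\spec_{\T}(\phi)\neq\emptyset$, take a model $\A$ of $\phi$. Apply smoothness with $\kappa=\max(|\dom{\A}|,\aleph_0)$. If $\kappa>\aleph_0$, use \Cref{LowenheimSkolem} to get $\aleph_0\in\spec_{\T}(\phi)$.

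For gentleness, given $\phi$, first run the decision procedure. If $\phi$ is unsatisfiable, output $(\emptyset,\mathit{true})$. Otherwise compute $n=\minmod_{\T}(\phi)$. By the finite model property, $n$ is a finite number in $\No$. Smoothness then gives $\spec_{\T}(\phi)=[n,\aleph_0]$, so we output $([1,n-1],\mathit{false})$. Decidability is part of both definitions, so nothing more is needed.

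\emph{Gentle implies computable finite spectra.} Given $\phi$ and $n\in\No$, compute $(S,b)$. Answer $n\in S$ if $b$ is true, and $n\notin S$ otherwise. This works because $S$ is returned explicitly as a finite set.

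\emph{Stably infinite theories are infinitely decidable.} For stably infinite $\T$, we have $\aleph_0\in\spec_{\T}(\phi)$ if and only if $\phi$ is $\T$-satisfiable. The decision procedure for $\T$ therefore decides membership of $\aleph_0$.

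\emph{$\class{SI}\cap\class{CS}=\class{SI}\cap\class{CFS}$.} The inclusion $\subseteq$ is immediate, since computable spectra includes computable finite spectra. For $\supseteq$, a theory in $\class{SI}\cap\class{CFS}$ has computable finite spectra by assumption, and it is infinitely decidable by the previous item. Together these give computable spectra.

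\emph{Main obstacle.} No step is deep. The only place needing care is the shiny-to-gentle translation. There we must check that the finite model property makes $\minmod_{\T}(\phi)$ finite, and that smoothness yields the whole interval up to $\aleph_0$ rather than only infinite cardinals above it. The latter holds because smoothness quantifies over all $\kappa\ge|\dom{\A}|$, including every finite one.
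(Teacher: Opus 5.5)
Your proof is correct: each of the four claims follows by unfolding the definitions as you do. Smoothness together with the finite model property gives $\spec_{\T}(\phi)=[\minmod_{\T}(\phi),\aleph_0]$ for every satisfiable $\phi$, which yields both stable infiniteness and the gentle output $([1,n-1],\mathit{false})$, and stable infiniteness turns the satisfiability test into a test for $\aleph_0\in\spec_{\T}(\phi)$. The paper does not reprove this proposition and imports it from \cite{POPL}; where it uses it, in the proofs of \Cref{thm:newcontainements} and \Cref{lem:SI->CBS}, it argues just as you do (decidable plus stably infinite gives infinitely decidable, and computable spectra is by definition computable finite spectra plus infinite decidability), and \Cref{compactness2}, which you list as a main tool, is never actually needed.
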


We call the following result \emph{Fontaine's lemma}. 
Its ``if'' direction was proved in \cite{gentle},
while its ``only if'' direction was proved in \cite{POPL}.
We could state it for arbitrary quantifier-free formulas, but we follow its original formulation.

\begin{lemma}[Fontaine's lemma]\label{lem-fontaine}
The disjoint combination $\T_1 \sqcup \T_2$ is decidable if and only if the following problem is decidable: given conjunctions of literals $\phi_1$ and $\phi_2$, determine whether $\spec_{\T_1}(\phi_1) \cap \spec_{\T_2}(\phi_2) = \emptyset$.
\end{lemma}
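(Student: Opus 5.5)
The plan is to reduce satisfiability in $\T_1\sqcup\T_2$ to the spectral disjointness problem and back, using purification and the classical amalgamation argument for disjoint signatures.

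\emph{If} direction. Suppose disjointness of spectra is decidable for pairs of conjunctions of literals. Given a quantifier-free $\Sigma_1\sqcup\Sigma_2$-formula, I first put it into disjunctive normal form, so it suffices to decide satisfiability of a conjunction of literals $\psi$. I then purify $\psi$ in the standard way: introduce fresh variables for alien subterms, obtaining an equisatisfiable conjunction $\phi_1\wedge\phi_2$ with $\phi_i$ a conjunction of $\Sigma_i$-literals. The key claim is that $\phi_1\wedge\phi_2$ is $\T_1\sqcup\T_2$-satisfiable if and only if there is an arrangement $\delta$ of the shared variables (an equivalence relation, expressed by the equalities and disequalities it induces) such that
\[
\spec_{\T_1}(\phi_1\wedge\delta)\cap\spec_{\T_2}(\phi_2\wedge\delta)\neq\emptyset .
\]
Since there are finitely many arrangements, this reduces the problem to finitely many calls to the assumed oracle. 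For the forward direction of the claim, take the arrangement induced by a model and its cardinality; the reducts give models of the required sizes. For the backward direction, take $\A_1\vDash_{\T_1}\phi_1\wedge\delta$ and $\A_2\vDash_{\T_2}\phi_2\wedge\delta$ of the same size $n\in\N$. Because both interpret the shared variables according to the same arrangement $\delta$, there is a bijection $h:\dom{\A_1}\to\dom{\A_2}$ extending $x^{\A_1}\mapsto x^{\A_2}$ on the shared variables. Transporting the $\Sigma_2$-structure along $h$ then yields a single $\Sigma_1\sqcup\Sigma_2$-interpretation satisfying $\phi_1\wedge\phi_2$ and both theories. Here I restrict spectra to $\N$, which is harmless by \Cref{LowenheimSkolem}.

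\emph{Only if} direction. Assume $\T_1\sqcup\T_2$ is decidable, and let $\phi_1,\phi_2$ be given. I rename variables so that $\phi_1$ and $\phi_2$ share none. Then
\[
\spec_{\T_1}(\phi_1)\cap\spec_{\T_2}(\phi_2)\neq\emptyset
\]
holds if and only if $\phi_1\wedge\phi_2$ is $\T_1\sqcup\T_2$-satisfiable. For the reverse implication, take models of equal size and glue them along an arbitrary bijection; no shared variables need to be respected. The forward implication is immediate by taking reducts. One subtlety is that the renamed $\phi_2$ uses symbols of the renamed copy of $\Sigma_2$, which is exactly how $\sqcup$ is defined. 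So a single call to the decision procedure answers the disjointness question.

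The main obstacle is the gluing step in the if direction. It requires that the same cardinality, including $\aleph_0$, be realized on both sides, and that the chosen bijection respect the shared variables exactly. This is why the arrangement must be conjoined before computing spectra, rather than using $\spec_{\T_1}(\phi_1)\cap\spec_{\T_2}(\phi_2)$ directly. I also need to check that purification preserves spectra, or at least satisfiability in each size. This holds because the fresh variables are defined by terms, so any model of the purified formula restricts to a model of the original one of the same size, and conversely.
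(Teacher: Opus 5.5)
Your proposal is correct and takes essentially the route the paper relies on. The paper does not reprove the lemma; it cites \cite{gentle} for the ``if'' direction, which is the purification, guess-an-arrangement, glue-equal-cardinality-models argument you give, and \cite{POPL} for the ``only if'' direction, which is your reduction: rename variables apart, so that $\T_1\sqcup\T_2$-satisfiability of $\phi_1\wedge\phi_2$ coincides with the two spectra sharing an element, with \Cref{LowenheimSkolem} keeping cardinalities inside $\N$.
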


\subsection{The lattice of theory combination properties}
\label{sec:pre-gal}

\begin{table}[t]
\caption{Analogy between lattice-theoretic notions and theory combination}
\centering
\small
\setlength{\tabcolsep}{6pt}
\resizebox{\textwidth}{!}{%
\renewcommand{\arraystretch}{1.1}

\begin{tabular}{
  >{\raggedright\arraybackslash}p{2.1cm}
  >{\raggedright\arraybackslash}p{3.6cm}
  >{\raggedright\arraybackslash}p{5.7cm}
}
\toprule
\textbf{Notion} & \textbf{Lattice Theory} & \textbf{Theory Combination} \\
\midrule

$G$
& A Galois connection
& A function that maps $X$ to the set of all decidable theories combinable with $X$ \\
\midrule

$G$ is antitone
& $X \subseteq Y \Rightarrow G(Y) \subseteq G(X)$
& If $\T$ is combinable with $Y$, then it is combinable with every subset of $Y$ \\
\midrule

$G$ is a Galois connection
& $X \subseteq G(Y) \Leftrightarrow Y \subseteq G(X)$
& Combinability is symmetric \\
\midrule

$G \circ G$
& $X \subseteq G(G(X))$
& There is always a combination theorem between $X$ and $G(X)$ \\
\midrule

$X$ is closed
& $X = G(G(X))$
& The combination theorem between $X$ and $G(X)$ is sharp \\
\midrule

% $X \cap Y$ is closed
% & $X \cap Y = G(G(X \cap Y))$
% & There is a sharp combination theorem between $X \cap Y$ and $G(X \cap Y)$ \\
% \midrule

Closing a set under $\cap$ and $G$
& Finding the minimal containing lattice
& Finding the minimal set of properties with sharp combination theorems\\
\bottomrule
\end{tabular}
}
\vspace{2mm}
\label{tab:latcomb}
\end{table}

We review the main definitions of \cite{POPL}.
Two theories are {\bf combinable} 
if their disjoint combination is decidable.
Two sets of theories are combinable
if every theory of one is combinable with every theory of the other. 
We may identify an element with the singleton containing it.
Let $\class{}$ be the set of decidable theories.\footnote{This, in principle, would not constitute a set, but a proper class, but we can just select
all symbols for our (countable) signatures from a countable pool of  symbols.}
For each $X\subseteq\class{}$, $G(X)$ 
is the 
set of all decidable theories  combinable with $X$.

If $X\subseteq Y$ then every theory combinable with $Y$ is also combinable with $X$. 
Hence, $X\subseteq Y$ implies $G(Y)\subseteq G(X)$.
Further, combinability is symmetric: $X\subseteq G(Y)$ iff $Y\subseteq G(X)$.
Thus, $G$ is an \textbf{antitone Galois connection}. % w.r.t. $\subseteq$.
A {\bf combination theorem}  has the form 
$X\subseteq G(Y)$.
We always have a combination theorem between $X$ and $Y=G(X)$.
%($X\subseteq G(G(X))$ as $G$ is a Galois connection).
Such a theorem is {\bf sharp} if, moreover,
$X=G(Y)$.
In this case, $X$ is called {\bf closed under $G$}, or just {\bf closed}.
% For each $X$, $G(G(X))$ is the minimal
% closed superset of $X$.
% minimal -- if $X\subseteq Y\subseteq G(G(X))$
% with $Y=G(G(Y))$ then
% we prove $Y=G(G(X))$.
% Otherwise, $G(G(X))\not\subseteq G(G(Y))$,
% and since $G$ is a Galois connection,
% $G(Y)\not\subseteq G(X)$, and since $G$ is antitone,
% $X\not\subseteq Y$, which is a contradiction.
% Closed -- $X\subseteq G(G(X))$ and since $G$
% is antitone,
% $G(G(G(X)))\subseteq G(X)$ and again
% since $G$ is antitone,
% $G(G(X)) \subseteq G(G(G(G(X))))$.

Take any  set of closed sets of theories and for every
$X$ and $Y$ in it, add infimum $G(G(X\cap Y))$ and supremum $G(G(X \cup Y))$, until fixpoint.
The result is a {\bf lattice}
ordered by $\subseteq$.
If $X$ and $Y$ are closed, then the infimum and supremum are $X\cap Y$ and 
$G(G(X)\cap G(Y))$, resp.
% We always have $Z\subseteq G(G(Z))$ (in particular for $Z=X\cap Y$), and 
% since $X\cap Y\subseteq X,Y$ and $G$ is antitone,
% $G(X),G(Y)\subseteq G(X\cap Y)$ and so
% $G(G(X\cap Y))\subseteq G(G(X)),G(G(Y))$ and so
% $G(G(X\cap Y)\subseteq X\cap Y$.
%
% Further, if $X$ and $Y$ are closed,
% then $G(X\cup Y)=G(X)\cap G(Y)$.
% indeed, $X,Y\subseteq X\cup Y$ and since $G$
% is antitone, $G(X\cup Y)\subseteq G(X)\cap G(Y)$;
% And, since $G(X)\cap G(Y)\subseteq G(X)$ and
% $G(X)\cap G(Y)\subseteq G(Y)$
% and $G$ is antitone,
% $X=G(G(X))\subseteq G(G(X)\cap G(Y))$
% and
% $Y=G(G(Y))\subseteq G(G(X)\cap G(Y))$.
% $G$ is a Galois connection, and so 
% $G(X)\cap G(Y)\subseteq G(X)$ and
% $G(X)\cap G(Y)\subseteq G(Y)$.
Thus it suffices to compute closure 
under $\cap$ and $G$.
\Cref{tab:latcomb} summarizes the lattice-theoretic view of theory combination.

% To define what we call the Galois connection of theory combination, start with the set $\class{}$ of decidable theories\footnote{This, in principle, would not constitute a set, but a proper class, but we can just select
% all symbols for our (countable) signatures from a countable pool of  symbols.} and take its powerset $\mathscr{P}(\class{})$;
% this is a lattice under the order $\subseteq$, our starting lattice $(\mathscr{P}(\class{}),\subseteq)$.
% The Galois connection $G:\mathscr{P}(\class{})\rightarrow\mathscr{P}(\class{})$
% was defined in~\cite{POPL} as 
% \[G(X)=\{\T\in\class{} : \forall \T^{\prime}\in X.\T\sqcup\T^{\prime}\in \class{}\}.\]

% The lattice of its closed elements will be denoted by $(\mathcal{C},\subseteq)$, and of course it is a sublattice of our starting lattice $(\mathscr{P}(\class{}),\subseteq)$.
% %, what we denote by $(\mathcal{C},\subseteq)\leq(\mathscr{P}(\class{}),\subseteq)$.
% Notice that, although the order is the same in both, the infimum and supremum are not:
% while in $\mathscr{P}(\class{})$ these are simply the intersection and the union, in $\mathcal{C}$ we have $X\wedge Y=X\cap Y$ and $X\vee Y=G(G(X\cup Y))$.
% In this paper we will focus most of our efforts on the least lattice closed under $G$ and containing $\class{SI}$, $\class{\G}$ and $\class{shiny}$.
% %which we denote by $\langle \class{SI},\class{\G},\class{shiny}\rangle$.

\section{So, how many combination properties?}\label{sec:what-you-get}

\begin{algorithm}[t]
    \begin{algorithmic}[1]
        \Function{Generate}{$S$}
        \While{true}
            \State $S_{\cap} \gets \{X\cap Y\mid X,Y\in S\}$
            \State $S_{G} \gets \{G(X) \mid X \in S\}$
            \If{$ S = S \cup S_{\cap} \cup S_{G}$}
                 \Return $S$
            \EndIf
            \State $S\gets S \cup S_{\cap } \cup S_{G}$
        \EndWhile
        \EndFunction
    \end{algorithmic}
    \caption{Pseudocode for closure under $\cap$ and $G$.} \label{alg-G}
\end{algorithm}

\begin{table}[t]
\caption{Execution of \Cref{alg-G}}
\centering
\small
\renewcommand{\arraystretch}{0.9} % increase row spacing
\setlength{\tabcolsep}{6pt}       % horizontal padding

\begin{tabular}{ccccc}
\toprule
Iteration & Line & Computation & Result & Proof \\
\midrule
\multirow{3}{*}{0}
  & \multirow{3}{*}{---}
  & \multirow{3}{*}{---}
  & $\class{SI}$ & \multirow{3}{*}{---} \\
  & & & $\class{shiny}$ & \\
  & & & $\class{\G}$ & \\
\midrule
\multirow{3}{*}{1}
  & 3 & $\class{SI}\cap\class{\G}$ & $\class{SI+\G}$ & --- \\
  & 4 & $G(\class{shiny})$ & $\class{}$ & \Cref{tab:sharp-summary} \\
  & 4 & $G(\class{\G})$ & $\class{CFS}$ & \Cref{tab:sharp-summary} \\
\midrule
\multirow{2}{*}{2}
  & 3 & $\class{SI}\cap\class{CFS}$ & $\class{SI+CS}$ & \Cref{prop:known-relations-properties} \\
  & 4 & $G(\class{SI+\G})$ & $\class{CBFS}$ & \Cref{cor:gsigeqcbfs} \\
\midrule
3
  & 4 & $G(\class{SI+CS})$ & $\class{CBS}$ & \Cref{cor:gsicseqcbs} \\
\midrule
4
  & 3 & $\class{CFS}\cap\class{CBS}$ & $\class{CS}$ & \Cref{lem:SI->CBS} \\
\bottomrule
\end{tabular}
\vspace{2mm}
\label{tab:notation-prop-th-when}
\vspace{-5mm}
\end{table}

\Cref{alg-G}
takes a set $S$ of properties, and closes it
under $G$ and intersection.
As seen in \Cref{sec:pre-gal}, this  suffices
for computing the minimal lattice that contains $S$.
A priori, this process does not have to terminate.

Remarkably, it terminates when the input set 
is
$\{\class{SI}, \class{shiny}, \class{\G}\}$.
The resulting set is presented in \Cref{fig:hasse-diagram} as a Hasse diagram.
It mostly contains properties defined in \Cref{sec:prem}, though it also involves two new properties, namely {\em computable bounded finite spectra} and {\em computable bounded spectra}.
These give rise to two new combination theorems that we introduce in \Cref{sec:newcombmethods}.
%In other words:

\begin{theorem}
\label{thm:howmany}
\textsc{Generate}($\{\class{SI}, \class{shiny}, \class{\G}\}$) terminates, and its output consists of the 10 properties that appear in \Cref{fig:hasse-diagram}. 
\end{theorem}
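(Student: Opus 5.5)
The proof traces the run of \textsc{Generate} recorded in \Cref{tab:notation-prop-th-when}. At each iteration we check two things. First, every new set produced in lines 3--4 is one of the ten properties of \Cref{fig:hasse-diagram}. Second, once all ten are present, the set $S$ is closed under $\cap$ and $G$, so the loop stops. The results we rely on are \Cref{tab:sharp-summary}, \Cref{prop:known-relations-properties}, and the facts cited in the last column of the table: $G(\class{SI+\G})=\class{CBFS}$, $G(\class{SI+CS})=\class{CBS}$, and $\class{CFS}\cap\class{CBS}=\class{CS}$. For the final step we also need the mirror identities $G(\class{CBFS})=\class{SI+\G}$, $G(\class{CBS})=\class{SI+CS}$, $G(\class{CS})=\class{CS}$ and $G(\class{SI})=\class{SI}$.

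Mirror identities of the first two kinds are automatic once a set is known to be $G$-closed. Since $G$ is an antitone Galois connection, $GGG=G$. Hence if $Y=G(X)$ with $X$ closed, then $G(Y)=X$. Closed sets are exactly images under $G$, and intersections of closed sets are closed.

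\begin{itemize}
\item $\class{SI}$ is closed by \Cref{tab:sharp-summary}, and $\class{\G}=G(\class{CFS})$ is closed as well.
\item $\class{SI+\G}=\class{SI}\cap\class{\G}$ is therefore closed, which gives $G(\class{CBFS})=GG(\class{SI+\G})=\class{SI+\G}$.
\item $\class{SI+CS}=\class{SI}\cap\class{CFS}$ is closed, so $G(\class{CBS})=\class{SI+CS}$.
\item $\class{shiny}=G(\class{})$ is closed, and $G(\class{})=\class{shiny}$.
\item $\class{CS}=\class{CFS}\cap\class{CBS}$ is closed. We then compute $G(\class{CS})=G(G(\class{\G})\cap G(\class{SI+CS}))=GG(\class{\G}\cup\class{SI+CS})$. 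To show this equals $\class{CS}$, note that $\class{CS}$ is closed and contains both $\class{\G}$ and $\class{SI+CS}$; this gives $\supseteq$ by minimality of $GG$. For $\subseteq$ I expect to need a genuine argument: that $\class{CS}$ is the least closed set containing $\class{\G}$ and $\class{SI+CS}$. Presumably the paper proves $G(\class{CS})=\class{CS}$ directly, via Fontaine's lemma (\Cref{lem-fontaine}) and a separating theory, in the section on self-combinability or sharpness. I would cite that result, or prove it by showing that any decidable theory combinable with all of $\class{CS}$ has computable spectra. That would be done by combining it with suitable theories in $\class{CS}$ that force exact finite cardinalities, and with ones forcing infinite models.
\end{itemize}

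With these in hand, closure under $G$ of the ten-element set is immediate: $G$ swaps the pairs $(\class{},\class{shiny})$, $(\class{CBFS},\class{SI+\G})$, $(\class{CFS},\class{\G})$ and $(\class{CBS},\class{SI+CS})$, and it fixes $\class{CS}$ and $\class{SI}$.

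Closure under $\cap$ requires computing all pairwise intersections. Using the inclusions of \Cref{fig:hasse-diagram}, comparable pairs are trivial, so only incomparable pairs need work. The nontrivial meets are:
\begin{itemize}
\item $\class{CFS}\cap\class{CBS}=\class{CS}$ and $\class{SI}\cap\class{CFS}=\class{SI+CS}$, both cited above;
\item $\class{\G}\cap\class{SI}=\class{SI+\G}$, by definition;
\item $\class{\G}\cap\class{CBS}$ and $\class{\G}\cap\class{SI+CS}$;
\item $\class{CS}\cap\class{SI}=\class{SI+CS}$, by \Cref{prop:known-relations-properties};
\item $\class{CFS}\cap\class{SI}$.
\end{itemize}
For $\class{\G}\cap\class{CBS}$ we have $\class{\G}\subseteq\class{CFS}$, so $\class{\G}\cap\class{CBS}=\class{\G}\cap\class{CFS}\cap\class{CBS}=\class{\G}\cap\class{CS}$, and $\class{\G}\subseteq\class{CS}$ by the diagram, so this meet is $\class{\G}$. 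Likewise $\class{\G}\cap\class{SI+CS}=\class{SI+\G}$.

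The inclusions themselves come from \Cref{subsec:rel-between-prop}, and that is where I expect the main technical work to lie. Two of them are the obstacle. The first is that gentle theories are infinitely decidable, i.e.\ $\class{\G}\subseteq\class{CS}$; this should follow because the Boolean $b$ output by the gentleness algorithm tells whether $\aleph_0$ is in the spectrum. The second is that every $\class{SI}$ theory lies in $\class{CBS}$, which is \Cref{lem:SI->CBS}. Once these inclusions and the cited $G$-computations are in place, the set of ten is closed, so \textsc{Generate} halts after iteration~4 with exactly those properties. Distinctness of the ten is a separate matter, settled by the example theories of \Cref{tab:extheories}.
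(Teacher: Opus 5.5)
There is a genuine gap in your treatment of $G(\class{CS})=\class{CS}$: you prove the same inclusion twice and never the one that needs work. Since $\class{CS}$ is closed and contains $\class{\G}\cup\class{SI+CS}$, minimality of $GG$ gives exactly $G(\class{CS})=GG(\class{\G}\cup\class{SI+CS})\subseteq\class{CS}$. That is the \emph{sharpness} half, and it comes for free. Your fallback argument combines an arbitrary $\T\in G(\class{CS})$ with theories in $\class{CS}$ forcing exact finite cardinalities or infinite models, such as $\Testcfs$ and $\Tinfty$. That proves $G(\class{CS})\subseteq\class{CS}$ again.

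The half that needs a real argument is $\class{CS}\subseteq G(\class{CS})$, which is a combination theorem: any two decidable theories with computable spectra are combinable. Without it you only know that $G(\class{CS})$ is some closed set between $\class{\G}\cup\class{SI+CS}$ and $\class{CS}$. Such a set need not be one of the ten, so closure of the ten sets under $G$, and hence the fixpoint claim, is not established. You also cannot cite it from this paper. Its proof asserts $G(S_0)=S_0$ using only \Cref{cor:gsigeqcbfs} and \Cref{tab:sharp-summary}, tacitly taking $G(\class{CS})=\class{CS}$ from prior work; \Cref{prop:self-comb-class-methods} relies on it the same way.

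The missing step is short, via \Cref{lem-fontaine}. Given $\phi_1,\phi_2$, decide whether each is satisfiable and whether $\aleph_0\in\spec_{\T_i}(\phi_i)$ for $i=1,2$, using infinite decidability.
\begin{enumerate}
\item If either formula is unsatisfiable, the intersection is empty.
\item If $\aleph_0$ lies in both spectra, the intersection is non-empty.
\item Otherwise some satisfiable $\phi_i$ has $\aleph_0\notin\spec_{\T_i}(\phi_i)$, so its spectrum is finite by \Cref{compactness2}. Find the least $m$ such that $\phi_i\wedge\neq(x_1,\ldots,x_m)$ is not $\T_i$-satisfiable. Then test each $n\in[1,m-1]$ for membership in both spectra, using computable finite spectra.
\end{enumerate}

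With this added, the rest of your argument works, and part of it is slicker than the paper's. You derive $G(\class{CBFS})=\class{SI+\G}$ and $G(\class{CBS})=\class{SI+CS}$ from the closedness of $\class{SI+\G}=G(\class{SI})\cap G(\class{CFS})$ and $\class{SI+CS}=G(\class{SI})\cap G(\class{\G})$. The paper instead gets those halves of \Cref{cor:gsigeqcbfs} from the test-theory arguments in \Cref{thm:G-cbfs,thm:G-cbs}, which your route avoids. A minor point: $\class{\G}\subseteq\class{CS}\subseteq\class{CBS}$, so $\class{\G}\cap\class{CBS}$ is a comparable pair and needs no separate computation.
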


\begin{proof}
The proof has two parts:
First, we show that each property in \Cref{fig:hasse-diagram} must indeed be in the resulting set, by justifying inclusion.
Second, we prove that there are no other added properties: after $4$ iterations of the loop of \Cref{alg-G}, we reach a fixed-point.

\medskip
\noindent
\underline{\bf Addition of properties:}
We justify the addition of each property in \Cref{fig:hasse-diagram} to the resulting set, by specifying in which step of \Cref{alg-G} it is added, and explaining why.
A tabular representation of this process is available in \Cref{tab:notation-prop-th-when}.

{\bf Iteration 0:}
Clearly, $\class{SI}$, $\class{shiny}$
and $\class{\G}$ must be included, as they are the input to the algorithm, and the algorithm never removes a property.

{\bf Iteration 1:}
The intersections of $\class{SI}$, $\class{shiny}$ and $\class{\G}$ are
added in line 3 of the first iteration of the loop.
Every shiny theory is stably infinite and gentle from \Cref{prop:known-relations-properties}, and thus we only have to add $\class{SI}\cap\class{\G}$,
denoted $\class{SI+\G}$.

Also, $G(X)$ must be added for each property $X$ in the input set, which is done in line 4.
By \Cref{tab:sharp-summary},
$G(\class{shiny})=\class{}$,
$G(\class{SI})=\class{SI}$,
and
$G(\class{\G})=\class{CFS}$.
Thus, $\class{}$ and $\class{CFS}$ are added (\class{SI} is already included).

{\bf Iteration 2:}
Now we close the $6$ properties collected so far under intersection.
Most of the resulting intersections are already
within these $6$ properties:
$\class{shiny}\subseteq\class{SI+\G}\subseteq\class{\G}\subseteq\class{CFS}\subseteq\class{}$,
and also $\class{shiny}\subseteq\class{SI+\G}\subseteq\class{SI}\subseteq\class{}$ (see \Cref{prop:known-relations-properties}), and 
the intersections of every two properties in the same
sequence of inclusions is just the smaller one.
Thus, we only
add a single property,
namely the intersection of $\class{SI}$ and $\class{CFS}$, that is,
$\class{SI}\cap\class{CFS}$. 
By \Cref{prop:known-relations-properties},
$\class{SI}\cap\class{CFS}=\class{SI}\cap\class{CS}$, and we choose to denote
this property 
by \class{SI+CS} because,
as will be seen below,
in the last iteration of the loop we will 
add $\class{CS}$, and the notation
$\class{SI+CS}$ will then better reflect
the placing of this property in \Cref{fig:hasse-diagram}. 

The next addition is due to the closure under $G$ computed in iteration 2.
We have already closed the three initial properties
under $G$ in iteration 1.
Also, $G(\class{})=\class{shiny}$
and $G(\class{CFS})=\class{\G}$, as shown 
in \Cref{tab:sharp-summary}.
Thus in this iteration, we only add
$G(\class{SI+\G})$.
This requires a new combination theorem, based
on a new property of theories,
that we call {\em computable bounded finite spectra},
denoted $\class{CBFS}$:
a theory has this property when it has an algorithm that, given a quantifier-free $\phi$ and $m,n\in\No$, returns whether $n$ is in $\spec_{\T}(\phi)$ as long as $\phi\wedge\neq(x_{1},\ldots,x_{m})$ (for fresh variables $x_{i}$) is not $\T$-satisfiable.  
It is formally defined in \Cref{def:cbfs},
and its sharp combination theorem is proven
in \Cref{cor:gsigeqcbfs}.

{\bf Iteration 3:}
The set obtained by the end of iteration 2 is already closed under intersection (see \Cref{prop:known-relations-properties,lem:CFS->CBFS,lem:SI->CBS}).
Thus, we close it under $G$.
From \Cref{tab:sharp-summary},
$G(\class{})=\class{shiny}$
and $G(\class{CFS})=\class{\G}$.
By \Cref{cor:gsicseqcbs},
$G(\class{CBFS})=\class{SI+\G}$.
Thus, we only
need to add $G(\class{SI+CS})$,
which requires yet another new combination theorem,
based on a new property, which we call
{\em computable bounded spectra},
denoted $\class{CBS}$:
fortunately, it can be defined as being infinitely decidable and having computable bounded finite spectra.
Its formal definition is in \Cref{def:cbs},
and a sharp combination theorem for it is proven
in \Cref{cor:gsicseqcbs}.

{\bf Iteration 4:}
In addition to the inclusions specified in Iteration 2,
we also have 
$\class{CFS}\subseteq\class{CBFS}\subseteq\class{}$ and
$\class{SI}\subseteq\class{CBS}\subseteq\class{CBFS}$
from \Cref{lem:CFS->CBFS,lem:SI->CBS},
and the intersection of every two properties in each resulting inclusion path is the smaller of the two.
The only missing intersection is that of 
$\class{CFS}$ and $\class{CBS}$,
which we prove in \Cref{lem:SI->CBS} to be equal to
$\class{CS}$, which we now add.

\medskip
\noindent
\underline{\bf There are no other properties:}
Denote by $S_0$ the set of 10 properties
listed in \Cref{fig:hasse-diagram}.
This set is $S_{\cap}$ after line 3 of the fourth
iteration of \Cref{alg-G}.
From \Cref{cor:gsigeqcbfs,tab:sharp-summary}, we have that after line 4 of the fourth iteration,
we get $S_G=S_0$ as well, that is,
that $G(S_0)=S_0$.
Going to the fifth iteration of the loop,
we have already seen that $S_0$ is  closed under intersection. Thus, in the fifth iteration,
after lines 3 and 4 are executed,
we have $S_{\cap}=S_G=S_0=S$,
and so in line 6 we return $S_0$.
\end{proof}

% \begin{figure}[h!]
%     \centering
%     \begin{tikzcd}[sep=small]
% && \mathfrak{T}\arrow[dash]{d}{} &  & & &\\
% &\phantom{O}\arrow[bend right=90,swap,red,<->]{dddddd}{\Gal}& \class{CBFS}\arrow[dash]{ddl}{}\arrow[dash]{ddrr}{} & & & &\\
% & & & & & & \\
% &\class{CFS}\arrow[dash]{dr}{}\arrow[dash]{dd}{} & & & \class{CBS}\arrow[dash]{dr}{}\arrow[dash]{dll}{} &\\
% \phantom{O}\arrow[dashed,dash]{r}{}&\phantom{O}\arrow[dashed,dash]{r}{}& \class{CS}\arrow[dashed,dash]{rrr}{}\arrow[dash]{dl}{}\arrow[dash]{drr}{} & & & \class{SI}\arrow[dash]{dl}{}\arrow[dashed,dash]{r}{} & \phantom{O}\\
% &\class{\G}\arrow[dash]{ddr}{} & & & \class{SI+CS}\arrow[dash]{ddll}{} & \\
% & & & & & & \\
% & \phantom{O}& \class{SI+\G}\arrow[dash]{d}{} & &  & & \\
% & & \class{shiny} &  & & & \\
% \end{tikzcd}
%     \caption{The Hasse diagram of theory combination properties starting from $\{\class{SI},\class{shiny},\class{\G}\}$ and closed under $G$ and intersections.}
%     \label{fig:hasse-diagram}
% \end{figure}

\subsection{Other resulting sets}

\begin{figure}[h!]
    \centering
    \alttext{Three Hasse diagrams of theory combination properties. The left diagram has all of the theory combination properties considered in this paper except for the set of all decidable theories and the set of shiny theories. The middle diagram has the set of all decidable theories above the set of stably infinite theories above the set of shiny theories. The right diagram has the set of all decidable theories above the set of theories with computable finite spectra above the set of gentle theories above the set of shiny theories.}
    \begin{tikzcd}[sep=tiny]
&\phantom{O}\arrow[bend right=90,swap,red,<->]{dddddd}{\Gal}& \class{CBFS}\arrow[dash]{ddl}{}\arrow[dash]{ddrr}{} & & & & & \class{}\arrow[dash]{ddd}{} & & & \class{}\arrow[dash]{dd}{} & \\
& & & & & & & & & & & \\
&\class{CFS}\arrow[dash]{dr}{}\arrow[dash]{dd}{} & & & \class{CBS}\arrow[dash]{dr}{}\arrow[dash]{dll}{} & & & & & & \class{CFS}\arrow[dash]{dd}{} & \\
\phantom{O}\arrow[dashed,dash]{r}{}&\phantom{O}\arrow[dashed,dash]{r}{}& \class{CS}\arrow[dashed,dash]{rrr}{}\arrow[dash]{dl}{}\arrow[dash]{drr}{} & & & \class{SI}\arrow[dash]{dl}{}\arrow[dashed,dash]{rr}{}& & \class{SI}\arrow[dashed,dash]{rrr}{}\arrow[dash]{ddd}{} & & & \phantom{O}\arrow[dashed,dash]{r}& \phantom{O}\\
&\class{\G}\arrow[dash]{ddr}{} & & & \class{SI+CS}\arrow[dash]{ddll}{} & & & & & &  \class{\G}\arrow[dash]{dd}{} & \\
& & & & & & & & & & & & & \\
& \phantom{O}& \class{SI+\G} & & & & & \class{shiny} & & & \class{shiny} & \\
\end{tikzcd}
    \caption{The Hasse diagrams for the resulting sets given different inputs}
    \label{fig:subdiagrams}
\end{figure}

We have generated a lattice from the three properties of stable infiniteness, gentleness and shininess.
What would happen if we were to restrict ourselves to only two among these properties instead? 

Stable infiniteness and gentleness are enough to generate most of \Cref{fig:hasse-diagram}:
a reasoning similar to the one found at 
the proof of \Cref{thm:howmany}
shows we only miss $\class{}$ and $\class{shiny}$.
For shininess and stable infiniteness, we only get $\class{}$, $\class{SI}$ and $\class{shiny}$.
For gentleness and shininess the situation is similar, except that we must add $G(\class{\G})=\class{CFS}$.
The resulting lattices are shown in \Cref{fig:subdiagrams}.
Formally:

\begin{theorem}~
\begin{enumerate}
\item \textsc{Generate}($\{\class{SI}, \class{\G}\}$) terminates, and its output consists of the 8 properties that appear in the left diagram of \Cref{fig:subdiagrams}.
\item \textsc{Generate}($\{\class{SI}, \class{shiny}\}$) terminates, and its output consists of the 3 properties that appear in the middle diagram of \Cref{fig:subdiagrams}.
\item \textsc{Generate}($\{\class{\G}, \class{shiny}\}$) terminates, and its output consists of the 4 properties that appear in the right diagram of \Cref{fig:subdiagrams}.
\end{enumerate}
\end{theorem}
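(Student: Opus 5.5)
We trace \Cref{alg-G} on each input, using the same facts as in the proof of \Cref{thm:howmany}: the sharpness results of \Cref{tab:sharp-summary}, the inclusions of \Cref{prop:known-relations-properties}, and the results cited there, namely \Cref{cor:gsigeqcbfs}, \Cref{cor:gsicseqcbs}, \Cref{lem:CFS->CBFS} and \Cref{lem:SI->CBS}. In each case the argument has two halves. First we exhibit the iteration at which each listed property appears. Then we check that the listed set is closed under pairwise intersection and under $G$, so the loop returns it.

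\textbf{Case 2: $\{\class{SI},\class{shiny}\}$.}
\begin{itemize}
\item Iteration 1. Since $\class{shiny}\subseteq\class{SI}$, the intersection adds nothing. From $G(\class{SI})=\class{SI}$ and $G(\class{shiny})=\class{}$ we add $\class{}$.
\item Iteration 2. The chain $\class{shiny}\subseteq\class{SI}\subseteq\class{}$ is closed under intersection, and $G(\class{})=\class{shiny}$, so nothing new appears and we return the three properties.
\end{itemize}

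\textbf{Case 3: $\{\class{\G},\class{shiny}\}$.}
\begin{itemize}
\item Iteration 1. We get $G(\class{\G})=\class{CFS}$ and $G(\class{shiny})=\class{}$.
\item Iteration 2. The set forms the chain $\class{shiny}\subseteq\class{\G}\subseteq\class{CFS}\subseteq\class{}$, which is closed under intersection. We also have $G(\class{CFS})=\class{\G}$ and $G(\class{})=\class{shiny}$, so this is a fixed point with four properties.
\end{itemize}

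\textbf{Case 1: $\{\class{SI},\class{\G}\}$.}
\begin{itemize}
\item Iteration 1. We add $\class{SI}\cap\class{\G}=\class{SI+\G}$ and $G(\class{\G})=\class{CFS}$.
\item Iteration 2. We add $\class{SI}\cap\class{CFS}=\class{SI+CS}$ (by \Cref{prop:known-relations-properties}) and $G(\class{SI+\G})=\class{CBFS}$ (by \Cref{cor:gsigeqcbfs}).
\item Iteration 3. We add $G(\class{SI+CS})=\class{CBS}$ (by \Cref{cor:gsicseqcbs}).
\item Iteration 4. We add $\class{CFS}\cap\class{CBS}=\class{CS}$ (by \Cref{lem:SI->CBS}).
\end{itemize}
Neither $\class{}$ nor $\class{shiny}$ is ever produced. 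Here lies the main obstacle: we must argue that no image under $G$ and no intersection lands on them, or on anything outside the eight properties.

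For closure under $G$, we need $G$ to act as an involution on the eight properties, matching the mirror symmetry in the left diagram of \Cref{fig:subdiagrams}:
\begin{itemize}
\item $\class{SI}\leftrightarrow\class{SI}$ and $\class{CS}\leftrightarrow\class{CS}$;
\item $\class{\G}\leftrightarrow\class{CFS}$;
\item $\class{SI+\G}\leftrightarrow\class{CBFS}$;
\item $\class{SI+CS}\leftrightarrow\class{CBS}$.
\end{itemize}
The values for $\class{SI}$, $\class{\G}$ and $\class{CFS}$ come from \Cref{tab:sharp-summary}. Those for $\class{SI+\G}$, $\class{CBFS}$, $\class{SI+CS}$ and $\class{CBS}$ come from \Cref{cor:gsigeqcbfs} and \Cref{cor:gsicseqcbs}. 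The value $G(\class{CS})=\class{CS}$ is used in the proof of \Cref{thm:howmany}, which states that $G(S_0)=S_0$, and we take it from the results cited there. The key point is that no property in this set has $G$-image $\class{}$ or $\class{shiny}$. Indeed, since $G$ is an antitone Galois connection and each listed set is closed, $G$ is injective on closed sets. So $G(X)=\class{}$ would force $X=G(\class{})=\class{shiny}$, and $G(X)=\class{shiny}$ would force $X=\class{}$; neither of these is in the set.

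For closure under intersection, we reuse the inclusion paths from iterations 2 and 4 of the proof of \Cref{thm:howmany}, with $\class{}$ and $\class{shiny}$ deleted. Intersections within one path are trivial, and the only nontrivial meets are $\class{SI}\cap\class{CFS}$ and $\class{CFS}\cap\class{CBS}$, which have already been computed. We must also check that no pairwise intersection drops below $\class{SI+\G}$ to produce $\class{shiny}$. This holds because the smallest meet of any two of the eight properties is $\class{SI}\cap\class{\G}=\class{SI+\G}$ itself, and every other meet contains it. This shows the eight properties form the fixed point.
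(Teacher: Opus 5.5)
Your proposal is correct and takes the same approach as the paper: the paper justifies this theorem only by saying that ``a reasoning similar to the one found at the proof of \Cref{thm:howmany}'' applies, namely tracing \Cref{alg-G} and checking that the resulting set is closed under $\cap$ and $G$. One small slip: your list of nontrivial meets in Case~1 should also include $\class{\G}\cap\class{SI+CS}=\class{SI+\G}$ (\Cref{thm:newcontainements}) and $\class{CS}\cap\class{SI}=\class{SI+CS}$, both immediate, and your remark that no meet equals $\class{shiny}$ is already covered once every meet is shown to lie among the eight.
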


% \subsection{Summary of the lattice}

% \begin{restatable}{theorem}{diagramcorrect}\label{theo:diagram-is-correct}\label{theo:structure}
%     The Hasse diagram in \Cref{fig:hasse-diagram} is correct;
%     that is, whenever $\class{X}$ is above $\class{Y}$, and there is an upwards path connecting the two, then $\class{X}\subseteq\class{Y}$;
%     and $\class{X}\cap\class{Y}$ is the lowest $\class{Z}$ sitting at least as high as $\class{X}$ and $\class{Y}$, with upward paths connecting $\class{X}$ to $\class{Z}$, and $\class{Y}$ to $\class{Z}$.
% \end{restatable}

\subsection{The relationships between the properties}\label{subsec:rel-between-prop}

\begin{table}[t]
\caption{Signatures}
\centering
\small
\setlength{\tabcolsep}{10pt}
\renewcommand{\arraystretch}{0.9}

\begin{tabular}{l c c}
\toprule
\textbf{Sig.} & \textbf{Functions} & \textbf{Predicates} \\
\midrule
$\Sigma_{1}$   & $\emptyset$ & $\emptyset$ \\[2pt]
$\Sigma_{P}$   & $\emptyset$ & $\{P\}$ \\[2pt]
$\Spn$         & $\emptyset$ & $\{P_{n} : n \in \No\}$ \\[2pt]
$\Spnn$        & $\emptyset$ & $\{P_{m,n} : m,n \in \No\}$ \\
\bottomrule
\end{tabular}
\vspace{2mm}
\label{tab:signatures-small}
\vspace{-5mm}
\end{table}

\begin{table}[t]
\caption{Theories defined over signatures from \Cref{tab:signatures-small}.
L.S.\ stands for the least set the theory belongs to.
$P_{\neq}=\{P_{i}\rightarrow\neg P_{j} : i\neq j\}$,
$P_{\infty}=\{P_{1}\rightarrow\psi_{\geq k} : k\in\mathbb{N}\}$, and
$P_{U}(n)=\{P_{n}\rightarrow \neg\psi_{=k} : k\notin U\}$.}
\centering
\small
\setlength{\tabcolsep}{4pt}
\renewcommand{\arraystretch}{1.15}

\begin{tabular}{
  l
  l
  p{7cm}
  c
  c
}
\toprule
\textbf{Sig.} & \textbf{Name} & \textbf{Axiomatization} & \textbf{Source} & \textbf{L.S.} \\
\midrule

$\Spn$ &
$\Th{si}$ &
$\{P_{n}\rightarrow \psi_{\geq m} : n\in \unc, m\in\No\}\cup P_{\neq}$ &
\cite{POPL} &
$\class{SI}$ \\[2pt]

$\Sp$ &
$\Th{cs}$ &
$\{P\rightarrow \psi_{=1}\}\cup\{\neg P\rightarrow \psi_{\geq m} : m\in\mathbb{N}\}$ &
\cite{POPL} &
$\class{CS}$ \\[2pt]

$\Sigma_{1}$ &
$\Tleqn$ &
$\{\psi_{\leq n}\}$ &
\cite{CADE} &
$\class{\G}$ \\

$\Sigma_{1}$ &
$\Tinfty$ &
$\{\psi_{\geq k} : k\in\No\}$ &
\cite{CADE} &
$\class{SI+CS}$ \\

$\Sigma_{1}$ &
$\Tgeqn$ &
$\{\psi_{\geq n}\}$ &
\cite{CADE} &
$\class{shiny}$ \\

\midrule

$\Sigma_{1}$ &
$\T_{\geq n+2}^{=n}$ &
$\{\psi_{=n}\vee\psi_{\geq n+2}\}$ &
New &
$\class{SI+\G}$ \\[2pt]

$\Spn$ &
$\T(m,n)$ &
$\{\psi_{=m}\vee\psi_{=n}\}
 \cup\{P_{k}\rightarrow\psi_{=m} : k\in U\}
 \cup\{P_{1}\rightarrow \psi_{=n}\}
 \cup P_{\neq}$ &
New &
$\class{}$ \\[2pt]

$\Spn$ &
$\Th{cbfs}$ &
$P_{\infty}
 \cup P_{U}(2)
 \cup\{P_{k}\rightarrow\psi_{\leq F(k)} : k>2, F(k)\in\mathbb{N}\}
 \cup P_{\neq}$ &
New &
$\class{CBFS}$ \\[2pt]

$\Spn$ &
$\Th{cfs}$ &
$P_{\infty}
 \cup\{P_{k}\rightarrow\psi_{\leq F(k)} : k>1, F(k)\in\mathbb{N}\}
 \cup P_{\neq}$ &
New &
$\class{CFS}$ \\[2pt]

$\Spn$ &
$\Th{cbs}$ &
$P_{U}(1)
 \cup\{P_{k}\rightarrow\psi_{=k} : k>1\}
 \cup P_{\neq}$ &
New &
$\class{CBS}$ \\

\bottomrule
\end{tabular}
\vspace{2mm}
\label{tab:extheories}
\vspace{-5mm}
\end{table}
To conclude this section, we prove that the properties
 are correctly situated in
  \Cref{fig:hasse-diagram}.
That is,
if there is a line between two properties,
the lower one is stronger than the upper one.
Most of these relationships are already captured by
\Cref{prop:known-relations-properties}.
Relationships that involve the new properties,
$\class{CBS}$ and $\class{CBFS}$, are proven in \Cref{lem:SI->CBS} below.
We just need two more connections:

\begin{restatable}{proposition}{newcontainments}
\label{thm:newcontainements}
$\class{SI}\cap\class{CFS}=\class{SI+CS}$ and $\class{\G}\cap\class{SI+CS}=\class{SI+\G}$.    
\end{restatable}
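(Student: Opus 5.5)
The first equality is a direct consequence of \Cref{prop:known-relations-properties}. I also want to make explicit why that proposition holds, since it is the only nontrivial ingredient. By definition, $\class{SI+CS}$ denotes $\class{SI}\cap\class{CS}$. A theory has computable spectra exactly when it has computable finite spectra and is infinitely decidable, so $\class{CS}\subseteq\class{CFS}$. Hence $\class{SI}\cap\class{CS}\subseteq\class{SI}\cap\class{CFS}$.

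For the reverse inclusion, take $\T\in\class{SI}\cap\class{CFS}$; it suffices to show that $\T$ is infinitely decidable. Given a quantifier-free $\phi$, stable infiniteness gives $\aleph_0\in\spec_{\T}(\phi)$ if and only if $\spec_{\T}(\phi)\neq\emptyset$, that is, if and only if $\phi$ is $\T$-satisfiable. The latter is decidable because $\T\in\class{}$. So the decision procedure for $\aleph_0\in\spec_{\T}(\phi)$ is simply the decision procedure of $\T$ itself. This is exactly the statement in \Cref{prop:known-relations-properties} that theories of $\class{SI}$ are infinitely decidable, and we can cite it directly.

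For the second equality, the inclusion $\class{\G}\cap\class{SI+CS}\subseteq\class{\G}\cap\class{SI}=\class{SI+\G}$ is immediate, since $\class{SI+CS}\subseteq\class{SI}$. Conversely, \Cref{prop:known-relations-properties} gives $\class{\G}\subseteq\class{CFS}$. Therefore
\[\class{SI+\G}=\class{SI}\cap\class{\G}\subseteq\class{SI}\cap\class{CFS}\cap\class{\G},\]
and by the first equality the right-hand side equals $\class{SI+CS}\cap\class{\G}$.

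There is no real obstacle here. The only point needing care is to unfold the notation correctly: $\class{SI+CS}$ is literally $\class{SI}\cap\class{CS}$, and computable spectra is the conjunction of computable finite spectra with infinite decidability. Once that is done, both equalities reduce to the inclusions already recorded in \Cref{prop:known-relations-properties}.
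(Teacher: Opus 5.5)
Your proof is correct and follows essentially the paper's argument. The first equality comes from stable infiniteness implying infinite decidability together with the definition of computable spectra, and the second reduces to the known inclusions of \Cref{prop:known-relations-properties}. The only small difference is in the second equality. The paper invokes ``gentleness implies computable spectra'' directly. You instead combine $\class{\G}\subseteq\class{CFS}$, which is exactly what \Cref{prop:known-relations-properties} states, with the first equality, so infinite decidability comes from stable infiniteness rather than from gentleness.
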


\begin{proof}
    By \Cref{prop:known-relations-properties}, stable infiniteness implies being infinitely decidable (for decidable theories), and by appealing to the definition of having computable spectra we get $\class{SI}\cap\class{CFS}=\class{SI+CS}$.

    By \Cref{prop:known-relations-properties}, gentleness implies having computable spectra, so the conjunction of gentleness, stable infiniteness and computable spectra is the same as that of gentleness and stable infiniteness, that is, $\class{\G}\cap\class{SI+CS}=\class{SI+\G}$.
\end{proof}

We also  prove that all inclusions in \Cref{fig:hasse-diagram} are strict. 
For this, we present theories that separate
between the different sets.
The signatures for these theories can be found in
\Cref{tab:signatures-small}.
$\Sigma_1$ is empty,
$\Sigma_{P}$ includes a single nullary predicate symbol $P$,
and $\Spn$ includes a nullary predicate symbol $P_n$ for every
$n\in\No$.
\Cref{tab:signatures-small} includes one more signature,
with predicate symbols for each pair of natural numbers,
to be used in \Cref{sec:newcombmethods} below.

The axiomatizations of the separating theories
are found in \Cref{tab:extheories}.
% For each $\class{X}$ there is a theory $\T$ for which the least set (among those found in \Cref{fig:hasse-diagram}, of course) it belongs to is precisely $\class{X}$.
In it,
$F:\No\rightarrow\No\cup\{\Inf\}$ is a function such that:
$(i)$ $\{(m,n) \in \No\times\No : F(m)\geq n\}$ is decidable; and $(ii)$ $\{n : F(n)=\aleph_{0}\}$ is undecidable.\footnote{Uses and explicit examples of such functions may be found in \cite{bonacina2019theory,10.1007/978-3-031-99984-0_2}.}
Also,
$\unc \subset \No$ is an undecidable set such that $1\notin U$.

The upper part of \Cref{tab:extheories} includes theories that were defined
in previous works (whose axiomatizations we present for completeness' sake).
Briefly, a $\Th{si}$-interpretation that satisfies a $P_{n}$, for $n\in U$, must be infinite;
a $\Th{cs}$-interpretation either satisfies $P$ and has a single element, or $\neg P$ and is infinite;
$\Tinfty$ has only infinite interpretations, $\T_{\leq n}$ only interpretations of size at most $n$, and $\T_{\geq n}$ interpretations of size at least $n$.

The last five theories will also be used in \Cref{sec:odd} below and are brand new.
$\T_{\geq n+2}^{=n}$ is easy to understand: it has models of cardinality exactly $n$, or strictly greater than $n+1$.
$\T(m,n)$ has models of cardinality $m$ or $n$;
if $P_{k}$ is true, for some $k\in U$, then the cardinality must be $m$;
but if $P_{1}$ is true (and we assumed $1\notin U$), then the cardinality must be $n$.
In $\Th{cbfs}$ the truth of:
$P_{1}$ implies a model is infinite;
$P_{2}$ implies a model has cardinality in $U$;
and $P_{k}$, for $k>2$, implies a model has cardinality at most $F(k)$, whenever $F(k)$ is finite.
$\Th{cfs}$ is similar to $\Th{cfbs}$, except that the truth of $P_{2}$ now implies a model has cardinality at most $F(2)$, if this value is finite.
And, finally, in $\Th{cbs}$ the truth of $P_{1}$ implies a model has cardinality in $U$, while the truth of $P_{k}$ for $k>1$ implies a model has cardinality precisely $k$.

Indeed,
the theories in
\Cref{tab:extheories} separate
the properties of \Cref{fig:hasse-diagram},
meaning that each has a property without having any of the properties below it.
This is represented in \Cref{tab:extheories} by
column L.S., which stands
for the ``least set'' of \Cref{fig:hasse-diagram} the theory belongs to.

\begin{restatable}{proposition}{exdiagramcorrect}\label{theo:ex-diagram-correct}
    The last column of \Cref{tab:extheories} correctly identifies the least set (L.S.) the theories belong to.
\end{restatable}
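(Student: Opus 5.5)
The plan is to treat each row of \Cref{tab:extheories} separately and show two things for its theory $\T$ and its listed set $\class{X}$. First, that $\T\in\class{X}$, which includes decidability. Second, that $\T\notin\class{Y}$ for every $\class{Y}$ directly below $\class{X}$ in \Cref{fig:hasse-diagram}. Since every set below $\class{X}$ lies inside one of these immediate predecessors (\Cref{prop:known-relations-properties}, \Cref{thm:newcontainements}, and the lemmas on $\class{CBFS}$ and $\class{CBS}$ cited in \Cref{thm:howmany}), this is enough. For the first five theories, membership is already established in \cite{POPL,CADE}, so for them only non-membership needs checking.

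All the signatures have only nullary predicates, so I would first prove a normal-form observation. Take a quantifier-free $\phi$. Write it as a disjunction of conjunctions, each made of literals over finitely many $P_i$ together with an equality arrangement. An equality arrangement with $k$ classes is realisable exactly in sizes $[k,\aleph_0]$. Predicates not mentioned in $\phi$ can be set to false, and $P_{\neq}$ allows at most one $P_i$ to be true. Hence $\spec_{\T}(\phi)$ is a finite union of sets $[k,\aleph_0]\cap A$, where $A$ is the set of sizes the axioms allow under a given truth assignment to the $P_i$. For example, with $P_k$ true in $\Th{cbs}$ and $k>1$, $A=\{k\}$; with $P_1$ true, $A=U\cup\{\aleph_0\}$; with all $P_i$ false, $A=\N^*$. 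Every membership claim then reduces to finitely many questions about these sets $A$. The key point is that satisfiability asks only whether $[k,\aleph_0]\cap A\neq\emptyset$, and this is always decidable here: $\aleph_0\in A$ whenever the undecidable $U$ or $F$ is involved, except in $\T(m,n)$, where $A\in\{\{m\},\{m,n\},\{n\}\}$ is nonempty and only its maximum, which is at least $m$, matters.

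The membership arguments follow from the normal form. $\T^{=n}_{\geq n+2}$ is stably infinite and gentle, since every spectrum is cofinite and contains $\aleph_0$. $\Th{cfs}$ has computable finite spectra by property $(i)$ of $F$. For $\Th{cbfs}$, a query ``is $n\in\spec(\phi)$, under the promise that $\phi\wedge\neq(x_1,\ldots,x_m)$ is unsatisfiable'' only ever concerns disjuncts whose $A$ is bounded. So the disjuncts with $P_1$ or $P_2$ (both contain $\aleph_0$) never carry an obligation, and the remaining ones are decided using $(i)$. For $\Th{cbs}$, infinite decidability holds because $\aleph_0\in A$ exactly when no $P_k$ with $k>1$ is true. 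Bounded finite spectra are computable because the only bounded case is $A=\{k\}$.

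Non-membership uses one witness each:
\begin{itemize}
\item $\Th{si}\notin\class{SI+CS}$, since $n\in\spec(P_k)$ iff $k\notin U$.
\item $\Th{cs}$ is not gentle, since $\spec(\neg P)=\{\aleph_0\}$. It is not stably infinite, since $\spec(P)=\{1\}$.
\item $\Tleqn$ is not stably infinite.
\item $\Tinfty$ is not gentle.
\item $\T^{=n}_{\geq n+2}$ is not smooth, hence not shiny.
\item $\T(m,n)$ (taking $m<n$) is not in $\class{CBFS}$: with bound $n+1$, deciding $m\in\spec(P_k)$ decides $k\in U$.
\item $\Th{cbfs}$ is not in $\class{CFS}$, since $j\in\spec(P_2)$ iff $j\in U$. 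It is not in $\class{CBS}$, since $\aleph_0\in\spec(P_k)$ iff $F(k)=\aleph_0$, which is undecidable by $(ii)$.
\item $\Th{cfs}$ is not in $\class{CS}$ by the same use of $(ii)$, and hence also not gentle.
\item $\Th{cbs}$ is not stably infinite, since $\spec(P_2)=\{2\}$. It is not in $\class{CS}$, since the finite part of $\spec(P_1)$ is $U$.
\end{itemize}

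I expect the main obstacle to be the bookkeeping. Each theory has to be handled over arbitrary quantifier-free formulas, not just single atoms, and the promise semantics of $\class{CBFS}$ must be respected when a formula mixes bounded and unbounded disjuncts. I would address this with the normal-form lemma above, stated once and reused for every theory.
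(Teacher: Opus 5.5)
Your overall plan matches the paper's proof: go row by row, show membership, rule out each immediate predecessor, and cite \cite{POPL,CADE} for the first five theories. Your disjunctive normal form over equality arrangements is a tidier way than the paper's per-theory case analysis on conjunctions of literals to handle arbitrary quantifier-free formulas and the promise in $\class{CBFS}$. However, the $\T(m,n)$ row is wrong as written.

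\textbf{The $\T(m,n)$ row.} The axioms in \Cref{tab:extheories} are $P_k\rightarrow\psi_{=m}$ for $k\in U$ and $P_1\rightarrow\psi_{=n}$. So for $k\geq 2$, $\spec_{\T(m,n)}(P_k)$ is $\{m\}$ if $k\in U$ and $\{m,n\}$ otherwise. Two things go wrong.
\begin{itemize}
\item Your query ``$m\in\spec(P_k)$'' is always true and decides nothing.
\item Your choice $m<n$ makes the theory undecidable. The formula $P_k\wedge\neq(x_1,\ldots,x_n)$ is $\T(m,n)$-satisfiable iff $k\notin U$. So with $m<n$ the theory is not even in $\class{}$, and the row's claim fails.
\end{itemize}
Your decidability remark (``only its maximum, which is at least $m$, matters'') breaks exactly here: $\max A$ is $m$ or $n$ depending on whether $k\in U$, so it is not computable. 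Your argument is the correct one with the roles of $m$ and $n$ swapped.

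\textbf{The fix.} Take $m>n$, as the paper does (implicitly in its decidability argument, explicitly in \Cref{ex:self-comp}).
\begin{itemize}
\item Decidability: $\max A=m$ whenever $P_1$ is false, and $A=\{n\}$ when $P_1$ is true. So a disjunct whose arrangement has $c$ classes is satisfiable iff $c\leq m$ (resp.\ $c\leq n$), which is decidable. This is the paper's $\minmod_{\T_{Eq}}(\phi')\leq m$ (resp.\ $\leq n$) test.
\item Non-membership in $\class{CBFS}$: $P_k$ has no model with at least $m+1$ elements, so the promise holds with bound $m+1$. Under it, $n\in\spec(P_k)$ iff $k\notin U$, and an algorithm for this would decide $U$.
\end{itemize}

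\textbf{A smaller inaccuracy.} You claim $\aleph_0\in A$ ``whenever the undecidable $U$ or $F$ is involved.'' This is false for $P_k$ with $F(k)$ finite in $\Th{cbfs}$ and $\Th{cfs}$, where $A=[1,F(k)]$. There, satisfiability of a disjunct means $[c,\aleph_0]\cap[1,F(k)]\neq\emptyset$, i.e.\ $c\leq F(k)$, which is decided by property $(i)$, not by the presence of $\aleph_0$. You do invoke $(i)$ elsewhere, so this only needs to be stated correctly in the decidability step.

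The remaining witnesses, and your handling of the $\class{CBFS}$ promise for $\Th{cbfs}$ and $\Th{cbs}$, are correct.
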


\begin{example}
Theory $\Th{si}$ is stably infinite but does
not have computable spectra.
Thus, it shows that the inclusion 
$\class{SI+CS}\subseteq\class{SI}$ observed
in \Cref{fig:hasse-diagram} is strict.
\end{example}

\section{Two new combination methods}
\label{sec:newcombmethods}
In this section, we introduce two new combination methods.
While they were tailored precisely for ``running'' \Cref{alg-G} in search of the smallest lattice that contains \Cref{fig:not-hasse-diagram},
they are interesting in their own right, and allow combinations of theories that other methods do not support (see \Cref{ex:newpossibility} below).

The new methods are inspired by gentleness and, in particular,
Fontaine's lemma (\Cref{lem-fontaine}) plays a crucial role in their correctness proofs.
% As seen in \Cref{sec:what-you-get}, the gentle combination method later led to the introduction of two properties of theories -- those that have computable finite spectra, and those that have computable spectra.
% Our new combination methods are based on two refinements of these notions:
% theories that have computable {\bf bounded} finite spectra, and those that have computable {\bf bounded} spectra.
% In both cases, dropping the ``finite'' from the name amounts to adding the requirement of infinite decidability.
% In addition to introducing these properties and using them for new combination methods, we study the relations between them and the other properties of \Cref{fig:hasse-diagram}.

\begin{algorithm}[t]
    \begin{algorithmic}[1]
        \Function{CBFS+(SI+GEN)}{$\phi_1, \phi_2$}
            \If{$\phi_2$ is not $\T_2$-satisfiable}
              \Return false
            \EndIf
            \State $m\gets \max(\N\setminus\spec_{\T_2}(\phi_2))$.
            \If{$\phi_1\wedge\neq(x_{1},\ldots,x_{m+1})$ is $\T_1$-satisfiable}\Comment{$x_{1},\ldots,x_{m+1}$ are fresh}
                \State \Return true
            \EndIf
            % \State $n \gets 1$
            \State{$X\gets \spec_{\T_1}(\phi_1)\setminus(\N\setminus\spec_{\T_2}(\phi_2))$}
            \For{$n\gets [1,m-1]$}
            %\While{$n\leq m-1$} 
                \If{$n\in X$}
                %\Comment{This set equals $\spec_{\T_1}(\phi_1)\cap\spec_{\T_2}(\phi_2)$}
                    \State \Return true
                \EndIf
%                \State $n \gets n+1$
            \EndFor
            \State \Return false
        \EndFunction
\end{algorithmic}
\caption{Determining non-emptiness of $\spec_{\T_1}(\phi_1)\cap\spec_{\T_2}(\phi_2)$, assuming
$\T_1\in\class{CBFS}$ and $\T_2\in\class{SI+\G}$.
}
 \label{alg-cbfssig}
\end{algorithm}

\begin{algorithm}[t]
    \begin{algorithmic}[1]
        \Function{(SI+CS)+CBS}{$\phi_1, \phi_2$}
            \If{$\phi_1$ is $\T_1$-satisfiable and $\aleph_0 \in \spec_{\T_2}(\phi_2)$}
                \State \Return true
            \EndIf
            \State $m \gets 1$
            \While{$\phi_2 \land \neq(x_{1},\ldots,x_{m})$ is $\T_2$-satisfiable} \Comment{$x_{1},\ldots,x_{m}$ are fresh}
                \State $m \gets m+1$
            \EndWhile
            \For{$n\gets [1,m-1]$}
                \If{$n\in\spec_{\T_1}(\phi_1)\cap\spec_{\T_2}(\phi_2)$}
                    \State \Return true
                \EndIf
             \EndFor   
            \State \Return false
        \EndFunction
\end{algorithmic}
\caption{Determining non-emptiness of $\spec_{\T_1}(\phi_1)\cap\spec_{\T_2}(\phi_2)$,
assuming $\T_1\in\class{SI+CS}$ and
$\T_2\in\class{CBS}$.
} \label{alg-sicscbs}
\end{algorithm}

\begin{figure}[t]
\centering
\alttext{An Euler diagram of five theory combination properties: computable bounded finite spectra (CBFS), computable finite spectra (CFS), computable bounded spectra (CBS), computable spectra (CS), and infinite decidability (ID). The inclusions represented by this diagram are stated as a proposition.}
\begin{tikzpicture}[scale=0.6]
\fill[blue!15] (2,1) rectangle (5,3);
\fill[red!10] (1,2) rectangle (4,4);
\fill[blue!55!red!20] (2,2) rectangle (4,3);
\draw[loosely dashed] (0,1) rectangle (5,5);
\draw[] (1,2) rectangle (4,4);
\draw[densely dotted, thick] (2,0) rectangle (6,3);
\node[] at (0.8,4.7) {CBFS};
\node[] at (5.6,0.3) {ID};
\node[] at (2.4,2.7) {CS};
\node[] at (4.4,1.3) {CBS};
\node[] at (1.6,3.7) {CFS};
\end{tikzpicture}
\caption{Relationships between having computable bounded finite spectra (CBFS, dashed box), having computable finite spectra (CFS, red), having computable bounded spectra (CBS, blue), having computable spectra (CS, purple), and being infinitely decidable (ID, dotted box)}
\label{fig:connections}
\vspace{-3mm}
\end{figure}

Recall that having computable finite spectra amounts to the existence of an algorithm that decides whether a given natural number is in the spectrum of a given formula.
We here weaken this requirement by allowing the algorithm to return arbitrary values in case the spectra is unbounded:
if it is bounded, however, it must answer correctly whether an element $n$ is or not in the spectra, given an adequate upper bound $m$.

\begin{definition}
\label{def:cbfs}
    A theory $\T$ is said to have {\em computable bounded finite spectra} if there is an algorithm that takes a quantifier-free formula $\phi$ and two $m,n\in\No$, and adheres to the following specification: if 
    $\phi$ is not $\T$-satisfiable by an interpretation
    with at least $m$ elements,
    then the algorithm must return 
    whether $n\in\spec_{\T}(\phi)$.
    Otherwise, the algorithm can return any Boolean value.
\end{definition}

Just like computable spectra adds infinite decidability to computable finite spectra,
a similar change to computable bounded finite spectra is pivotal:

\begin{definition}
\label{def:cbs}
    A theory $\T$ is said to have {\em computable bounded spectra} if it is infinitely decidable and has computable bounded finite spectra.
\end{definition}

We now examine the relations between these properties and others we have met so far.
The result is formalized in the next theorem, depicted in \Cref{fig:connections}.

\begin{restatable}{proposition}{SIimpliesCBS}
    \label{lem:SI->CBS}
    \label{lem:CFS->CBFS}    
    $\class{SI}\subseteq\class{CBS}\subseteq\class{CBFS}$, $\class{CFS}\subseteq\class{CBFS}$,
    and 
    $\class{CFS}\cap\class{CBS}=\class{CS}$.
\end{restatable}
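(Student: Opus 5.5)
I will prove the four claims separately, working directly from \Cref{def:cbfs,def:cbs} and \Cref{prop:known-relations-properties}. Throughout, all theories are decidable. The recurring tool is that, for a quantifier-free $\phi$ and fresh $x_1,\ldots,x_m$, decidability of $\T$ tells us whether $\phi\wedge\neq(x_1,\ldots,x_m)$ is $\T$-satisfiable. This is equivalent to asking whether $\phi$ has a model with at least $m$ elements.

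\textbf{$\class{CBS}\subseteq\class{CBFS}$ and $\class{CFS}\subseteq\class{CBFS}$.} The first inclusion is immediate from \Cref{def:cbs}. For the second, given $\phi,m,n$, run the computable-finite-spectra algorithm on $(\phi,n)$ and ignore $m$. Its answer is always correct, so in particular it is correct whenever the hypothesis of \Cref{def:cbfs} holds.

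\textbf{$\class{SI}\subseteq\class{CBS}$.} By \Cref{prop:known-relations-properties}, every theory in $\class{SI}$ is infinitely decidable, so it remains to give a bounded-finite-spectra algorithm. On input $(\phi,m,n)$, the algorithm checks whether $\phi$ is $\T$-satisfiable and returns false if it is not. Now consider the case where the hypothesis of \Cref{def:cbfs} holds, i.e.\ $\phi$ has no model with at least $m$ elements. If $\phi$ were satisfiable, stable infiniteness would give $\aleph_0\in\spec_\T(\phi)$, so $\phi$ would have a model with at least $m$ elements, contradicting the hypothesis. Hence $\phi$ is unsatisfiable, $\spec_\T(\phi)=\emptyset$, and the answer false is correct. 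When $\phi$ is satisfiable the hypothesis fails, so any output is permitted. (One could even return false unconditionally, since under the hypothesis the spectrum is always empty.)

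\textbf{$\class{CFS}\cap\class{CBS}=\class{CS}$.} For ``$\supseteq$'': a theory with computable spectra is in $\class{CFS}$ by definition, hence in $\class{CBFS}$ by the previous step, and it is infinitely decidable, so it is in $\class{CBS}$. For ``$\subseteq$'': a theory in $\class{CFS}\cap\class{CBS}$ has computable finite spectra and is infinitely decidable, and that conjunction is exactly the definition of computable spectra. None of these steps poses a real obstacle.

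The only point requiring care is the $\class{SI}$ case. There we must note that the promise in \Cref{def:cbfs} forces the spectrum to be finite, and stable infiniteness then forces it to be empty.
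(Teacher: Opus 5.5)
Your proof is correct and follows the paper's argument essentially step for step: you ignore $m$ for $\class{CFS}\subseteq\class{CBFS}$, unfold the definitions for $\class{CBS}\subseteq\class{CBFS}$ and $\class{CFS}\cap\class{CBS}=\class{CS}$, and use infinite decidability of stably infinite theories for $\class{SI}\subseteq\class{CBS}$. In the $\class{SI}$ case you are slightly more careful than the paper, which says the algorithm may return an arbitrary Boolean because $\phi$ always has models larger than $m$; that overlooks unsatisfiable $\phi$, where the promise holds and the answer must be false, which is exactly what your version returns.
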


\begin{proof}
To see that $\class{SI}\subseteq\class{CBS}$, given a quantifier-free $\phi$ and $m,n\in\No$, our algorithm may return an arbitrary boolean, as a stably infinite theory is infinitely decidable by \cite{POPL} and $\phi$ always has models of cardinality greater than $m$.

To see  that $\class{CFS}\subseteq \class{CBFS}$, given a quantifier-free formula $\phi$ and $m,n\in\No$, we can define the algorithm that proves $\T$ has computable bounded finite spectra by simply ignoring $m$ and feeding $\phi$ and $n$ to the algorithm that guarantees $\T$ has computable finite spectra.

That $\class{CBS}\subseteq\class{CBFS}$ follows from the definition of the latter.
Finally, as a theory with computable bounded spectra is infinitely decidable by definition, and computable spectra is defined as being infinitely decidable and having computable finite spectra, $\class{CBS}\cap\class{CFS}=\class{CS}$.
\end{proof}

The next theorem provides a new combination method between theories with computable bounded finite spectra, and theories both stably infinite and gentle.

\begin{restatable}{theorem}{IpartCBFS}
\label{thm:cbfs}
    If $\T_{1}\in\class{CBFS}$ and $\T_{2}\in\class{SI+\G}$, then $\T_{1}\sqcup\T_{2}$ is decidable.
    That is, $\class{CBFS}\subseteq G(\class{SI+\G})$ (and $\class{SI+\G}\subseteq G(\class{CBFS})$).
\end{restatable}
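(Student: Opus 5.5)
By Fontaine's lemma (\Cref{lem-fontaine}), it suffices to decide, given conjunctions of literals $\phi_1$ and $\phi_2$, whether $\spec_{\T_1}(\phi_1)\cap\spec_{\T_2}(\phi_2)\neq\emptyset$. We show that \Cref{alg-cbfssig} is a correct and terminating procedure for this. The second formulation of the statement, $\class{SI+\G}\subseteq G(\class{CBFS})$, then follows from the first by symmetry of $G$ (it is a Galois connection).

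First we make every step of \Cref{alg-cbfssig} effective. Satisfiability of $\phi_2$ is decidable since $\T_2$ is decidable. Let $(S,b)$ be the output of the gentleness algorithm on $\phi_2$. Since $\T_2$ is stably infinite and $\phi_2$ is satisfiable, $\aleph_0\in\spec_{\T_2}(\phi_2)$. Hence $b$ must be false, so $\spec_{\T_2}(\phi_2)=\N^*\setminus S$ with $S\subseteq\No$ finite. Then $m=\max(S)$ is computable; we take $m=0$ if $S=\emptyset$ (the expression $\max(\N\setminus\spec_{\T_2}(\phi_2))$ in the algorithm should be read this way, discarding $0$). Every $n>m$, including $\aleph_0$, lies in $\spec_{\T_2}(\phi_2)$.

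Next, the test in line 5 is a $\T_1$-satisfiability query on a quantifier-free formula, so it is decidable. If it succeeds, $\phi_1$ has a $\T_1$-model with at least $m+1$ elements. That model is either finite of size $k>m$ or infinite; by \Cref{LowenheimSkolem}, in the infinite case $\aleph_0\in\spec_{\T_1}(\phi_1)$. Either way the witnessed size lies in $\spec_{\T_2}(\phi_2)$, so returning true is correct.

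If the test fails, $\phi_1$ has no $\T_1$-model with at least $m+1$ elements. This is exactly the hypothesis under which the CBFS algorithm (\Cref{def:cbfs}), called with bound $m+1$, correctly decides $n\in\spec_{\T_1}(\phi_1)$ for each $n$. Moreover, $\spec_{\T_1}(\phi_1)\subseteq[1,m]$, so the intersection is nonempty iff some $n\in[1,m]$ satisfies $n\in\spec_{\T_1}(\phi_1)$ and $n\notin S$. Both conditions are computable, so the finite loop decides the question.

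One subtlety has to be fixed here: the loop in \Cref{alg-cbfssig} ranges over $[1,m-1]$. This is harmless because $m\in S$ (when $m\geq 1$), so $m\notin\spec_{\T_2}(\phi_2)$ and $n=m$ never contributes. I expect the main obstacle to be exactly this boundary bookkeeping: the off-by-one between ``at least $m$ elements'' in \Cref{def:cbfs} and the $m+1$ fresh variables, the case $S=\emptyset$, and the use of stable infiniteness to force $b$ false. The rest is routine. Correctness of the decision procedure together with \Cref{lem-fontaine} yields decidability of $\T_1\sqcup\T_2$.
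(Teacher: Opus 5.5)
Your proof is correct and takes essentially the same approach as the paper: you use \Cref{lem-fontaine} to reduce the problem to deciding emptiness of $\spec_{\T_1}(\phi_1)\cap\spec_{\T_2}(\phi_2)$, then show that each step of \Cref{alg-cbfssig} is effective, sound and complete. Your extra bookkeeping just makes explicit what the paper leaves implicit: calling the CBFS algorithm with bound $m+1$, noting that $m\notin\spec_{\T_2}(\phi_2)$ so the loop bound $m-1$ loses nothing, and handling $S=\emptyset$ by taking $m=0$.
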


\begin{proof}[sketch]
    By \Cref{lem-fontaine}, it suffices to show a decision procedure
that decides, given $\phi_1$ and $\phi_2$ 
%that are
%$\Sigma_1$ and $\Sigma_2$ formulas, respectively,
whether $\spec_{\T_1}(\phi_1)\cap\spec_{\T_2}(\phi_2)$ is empty.
We present such a decision procedure in \Cref{alg-cbfssig}.
It relies on the properties of $\T_1$ and $\T_2$ in order to determine whether the models of the two formulas share a cardinality, by iterating over all possible cardinalities. Termination is guaranteed by \Cref{compactness2}.
Notice that in line 3, $m\in\No$:
since $\T_2$ is stably infinite we have $\aleph_{0}\in\spec_{\T_{2}}(\phi_{2})$, and since $\T_{2}$ is also gentle the set $\spec_{\T_2}(\phi_2)$ must be cofinite, making of $\N^{*}\setminus\spec_{\T_2}(\phi_2)$ a finite subset of $\No$.
\end{proof}

A similar yet distinct combination theorem is obtained
between theories that are stably infinite and have computable spectra, and theories with computable bounded spectra.
It is proved in a similar fashion to \Cref{thm:cbfs}.
In particular, 
it relies on \Cref{lem-fontaine} and \Cref{alg-sicscbs}.

\begin{restatable}{theorem}{IpartCBS}
\label{thm:cbs}
%Let $\T_1$ and $\T_2$ be $\Sigma_1$ and $\Sigma_2$ theories, respectively.
    If $\T_{1}\in\class{SI+CS}$  and $\T_{2}\in \class{CBS}$, then $\T_{1}\sqcup\T_{2}$ is decidable.
    That is, $\class{CBS}\subseteq G(\class{SI+CS})$ (and $\class{SI+CS}\subseteq G(\class{CBS})$).
\end{restatable}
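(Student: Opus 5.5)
By \Cref{lem-fontaine}, it suffices to decide, given conjunctions of literals $\phi_1$ and $\phi_2$, whether $\spec_{\T_1}(\phi_1)\cap\spec_{\T_2}(\phi_2)\neq\emptyset$. I will prove that \Cref{alg-sicscbs} is such a decision procedure, showing that it terminates and returns the correct answer. The parenthetical form $\class{SI+CS}\subseteq G(\class{CBS})$ then follows from the symmetry of combinability, since $G$ is a Galois connection.

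\textbf{Line 2.} Both conditions are decidable: $\T_1$ is decidable, and $\T_2$ is infinitely decidable since $\class{CBS}$ includes infinite decidability. If both hold, then $\aleph_0\in\spec_{\T_1}(\phi_1)$ because $\T_1$ is stably infinite, and $\aleph_0\in\spec_{\T_2}(\phi_2)$, so returning true is correct. Otherwise one of two cases holds.
\begin{itemize}
\item If $\phi_1$ is not $\T_1$-satisfiable, the intersection is empty. The algorithm must then return false. I will make this explicit by checking $\T_1$-satisfiability of $\phi_1$ first (or by noting the later membership tests fail).
\item If $\aleph_0\notin\spec_{\T_2}(\phi_2)$, then \Cref{compactness2} gives that $\spec_{\T_2}(\phi_2)$ is finite (possibly empty).
\end{itemize}

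\textbf{The while loop.} I will use $\phi_2\wedge\neq(x_1,\ldots,x_m)$ with fresh variables, which is satisfiable in some $\T_2$-model iff $\phi_2$ has a model with at least $m$ elements. Each iteration is decidable since $\T_2$ is decidable. Because $\spec_{\T_2}(\phi_2)$ is finite and contains no $\aleph_0$, its elements are bounded, so the loop stops at the least $m$ such that $\phi_2$ has no model of size $\geq m$. Hence $\spec_{\T_2}(\phi_2)\subseteq[1,m-1]$.

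\textbf{The for loop.} For each $n\in[1,m-1]$, membership $n\in\spec_{\T_1}(\phi_1)$ is decidable because $\T_1\in\class{CS}$ has computable finite spectra. Membership $n\in\spec_{\T_2}(\phi_2)$ is decided by calling the $\class{CBFS}$ algorithm of $\T_2$ on $(\phi_2,m,n)$. The precondition of \Cref{def:cbfs} is exactly what the loop exit guarantees: $\phi_2$ is not satisfiable in a model with at least $m$ elements. So that answer is correct. Since the whole spectrum of $\phi_2$ lies in $[1,m-1]$, scanning these $n$ finds a common element iff one exists.

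\textbf{Main obstacle.} The delicate point is termination of the while loop and the matching of its exit condition with the hypothesis of the $\class{CBFS}$ oracle. Two facts are needed:
\begin{itemize}
\item the loop is entered only when $\aleph_0\notin\spec_{\T_2}(\phi_2)$, or when $\phi_1$ is unsatisfiable, which I will handle separately;
\item the infinite case is already settled in line 2, using stable infiniteness of $\T_1$ together with infinite decidability of $\T_2$.
\end{itemize}
Compactness (\Cref{compactness2}) is what turns ``no infinite model'' into a finite bound.
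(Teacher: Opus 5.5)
Your proof is correct and follows the paper's own argument. Both use Fontaine's lemma with \Cref{alg-sicscbs}: stable infiniteness of $\T_1$ and infinite decidability of $\T_2$ settle the infinite case, \Cref{compactness2} bounds the while loop, and the loop's exit condition supplies exactly the precondition of the $\class{CBFS}$ oracle of $\T_2$.

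Your explicit early check that $\phi_1$ is $\T_1$-satisfiable is genuinely needed. The algorithm as printed never exits its while loop when $\phi_1$ is $\T_1$-unsatisfiable but $\aleph_0\in\spec_{\T_2}(\phi_2)$, a case the paper's termination argument overlooks. So drop your parenthetical alternative of ``noting the later membership tests fail'': in that case those tests are never reached.
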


Examples of theories with computable bounded finite spectra include all of those in \Cref{tab:extheories} but $\T(m,n)$, and examples of theories with computable bounded spectra include all of those in \Cref{tab:extheories} but $\T(m,n)$, $\Th{cbfs}$ and $\Th{cfs}$.
More importantly, $\Th{cbfs}$ is an example of a theory with computable bounded finite spectra but neither computable bounded spectra or computable finite spectra;
and $\Th{cbs}$ is an example of a theory with computable bounded spectra but that neither is stably infinite nor has computable spectra.
Such examples may seem artificial but finding a theory with a given property, and none of the stronger ones, can be non-trivial;
if we relax these restrictions, any stably infinite theory, for example, has both forms of bounded spectra.
The following example shows a combination made possible by \Cref{thm:cbs} not known to be possible by existing methods, in particular those in \Cref{fig:not-hasse-diagram}.

\begin{example}
\label{ex:newpossibility}
The combination of $\Th{cbs}$ and $\Tinfty$ 
from \Cref{tab:extheories}
is decidable
by \Cref{thm:cbs}, because the former has computable bounded spectra, and the latter is stably infinite and has computable spectra.
We don't get this from 
any of the combination methods that correspond
to \Cref{fig:not-hasse-diagram}:
not from Nelson--Oppen (as $\Th{cbs}$ is not stably infinite), nor from the shiny combination method (as neither of the theories is shiny), nor from the gentle combination method (as $\Th{cbs}$ does not have computable finite spectra).

\end{example}

\subsection{Sharpness}\label{subsec:sharpness}

% \begin{table}[t]
% \centering
% \renewcommand{\arraystretch}{1.15}
% \centering
% \begin{tabular}{|c|c|}\hline
% \phantom{O}Theory\phantom{O} & Property \\
% \hline
% \multirow{2}{*}{$\T_{>n}^{P}$} & Decidable, Computable Bounded  Spectra \\ 
% & $n=\max\spec_\T(\phi)\Rightarrow [\phi\wedge P_k \text{ is } \T\sqcup\T_{>n}^{P}\text{-SAT}\Leftrightarrow k\in U ] $
% \\\hline
% \multirow{2}{*}{$\T_{=}^{P}$} & 
% Decidable, Computable Bounded  Spectra \\
%  & $\phi\wedge P_n\text{ is } 
% \T\sqcup\T_{=}^{P}\text{-SAT}\Leftrightarrow 
% n\in\spec_{\T}(\phi)$\\\hline
% % $\T^{n}_{m}$ & ??? & ??? \\\hline
% \multirow{2}{*}{$\T_{\leq}^{S}$} & Decidable, Computable Bounded Finite Spectra \\ 
%  & $S=\No\setminus\spec_{\T}(\phi)\Rightarrow[\phi\wedge P_{k}\text{ is }\T\sqcup\T_{\leq}^{S}\text{-SAT}\Leftrightarrow F(k)=\aleph_{0}]$\\\hline
% \multirow{2}{*}{$\Tinfty$} & Decidable, SI+CS,  Computable Bounded Spectra \\
%  & $\phi\text{ is } 
% \T\sqcup\Tinfty\text{-SAT} \Leftrightarrow 
% \aleph_0\in\spec_{\T}(\phi)$\\\hline
% \multirow{2}{*}{$Th_\T$} & Decidable, Computable Bounded Finite Spectra\\
%  & $\phi\wedge P_{\phi,n}\text{ is }\T\sqcup Th_{\T}\text{-SAT}\Leftrightarrow n>|\No\setminus\spec_{\T}(\phi)|$  \\
% \hline\hline
% \multirow{2}{*}{$\T^{=}_{>n}$} & Decidable, Stably Infinite, Computable Spectra \\
%  & \phantom{OO}$n=\max\spec_\T(\phi)\Rightarrow [\phi\wedge P_k \text{ is } \T\sqcup\T_{>n}^{P}\text{-SAT}\Leftrightarrow k\in\spec_{\T}(\phi) ]$\phantom{OO}\\
% \hline
% \end{tabular}
% \caption{Properties of test theories. 
% }
% \label{tab:testtheoriesproperties}
% \end{table}

We now prove that \Cref{thm:cbfs,thm:cbs} are, in fact, sharp.
For \Cref{thm:cbfs}, this means
that every decidable theory that can be combined with all decidable theories that are both stably infinite and gentle, must have computable bounded finite spectra, and vice-versa.
A similar result holds for \Cref{thm:cbs}.

\begin{restatable}{theorem}{IIpartCBFS}
\label{thm:G-cbfs}
     $G(\class{SI+\G})\subseteq \class{CBFS}$
     and
     $G(\class{CBFS})\subseteq\class{SI+\G}$.
\end{restatable}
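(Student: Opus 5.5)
The inclusion $G(\class{CBFS})\subseteq\class{SI+\G}$ should follow formally from the antitonicity of $G$ and results already stated; the inclusion $G(\class{SI+\G})\subseteq\class{CBFS}$ needs a single test theory. I would do the formal part first.

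\emph{Second inclusion.} By \Cref{lem:SI->CBS}, $\class{SI}\subseteq\class{CBS}\subseteq\class{CBFS}$ and $\class{CFS}\subseteq\class{CBFS}$. Since $G$ is antitone, $G(\class{CBFS})\subseteq G(\class{SI})$ and $G(\class{CBFS})\subseteq G(\class{CFS})$. By \Cref{tab:sharp-summary}, $G(\class{SI})=\class{SI}$ and $G(\class{CFS})=\class{\G}$. Hence $G(\class{CBFS})\subseteq\class{SI}\cap\class{\G}=\class{SI+\G}$.

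\emph{First inclusion.} Let $\T\in G(\class{SI+\G})$. I would build one test theory $\T'$ over the signature $\Spnn$ of nullary predicates $P_{m,n}$. Its axioms are $P_{m,n}\rightarrow(\psi_{=n}\vee\psi_{\geq m})$ for all $m,n\in\No$, together with $P_{m,n}\rightarrow\neg P_{m',n'}$ for $(m,n)\neq(m',n')$. I then verify that $\T'\in\class{SI+\G}$.

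\begin{enumerate}
\item Any quantifier-free $\Spnn$-formula is equivalent to a Boolean combination of the finitely many nullary predicates occurring in it, plus equality literals. To compute its spectrum, enumerate the finitely many truth assignments to those predicates that make at most one of them true.
\item If no occurring predicate is true, only the equality part constrains size. If $P_{m,n}$ is true, the size is further restricted to $\{n\}\cup[m,\aleph_0]$.
\item The equality part of a quantifier-free formula over $k$ variables has spectrum either empty or of the form $[j,\aleph_0]$ with $j\leq k$, computable by enumerating equivalence relations on the variables.
\item Each resulting spectrum is therefore either empty or a computable cofinite subset of $\N^{*}$ containing $\aleph_0$. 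This gives decidability, stable infiniteness and gentleness, with the pair $(S,b)$ computed explicitly.
\end{enumerate}

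By hypothesis $\T\sqcup\T'$ is decidable. The $\class{CBFS}$ algorithm for $\T$, on input $\phi,m,n$, asks whether $\phi\wedge P_{m,n}$ is $\T\sqcup\T'$-satisfiable and returns the answer. To justify this, I would use the spectrum argument behind \Cref{lem-fontaine}: $\phi\wedge P_{m,n}$ is satisfiable iff $\spec_{\T}(\phi)\cap(\{n\}\cup[m,\aleph_0])\neq\emptyset$. Here I rely on the signatures being disjoint and on gluing models of equal cardinality, which is standard.

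If $\phi$ has no $\T$-model with at least $m$ elements, the intersection reduces to $\spec_{\T}(\phi)\cap\{n\}$, so the answer is exactly whether $n\in\spec_{\T}(\phi)$. This holds in both cases $n<m$ and $n\geq m$; in the latter the answer is correctly ``no''. When $\phi$ does have such a model, any output is allowed by \Cref{def:cbfs}.

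The main obstacle I expect is the bookkeeping in showing $\T'$ is genuinely gentle: producing the finite complement set $S$ algorithmically from an arbitrary quantifier-free formula over infinitely many predicate symbols. The reduction to finitely many relevant predicates handles this. A secondary point is that satisfiability in the combination corresponds to a common cardinality of $\T$- and $\T'$-models, which I take from the standard combination argument underlying \Cref{lem-fontaine}.
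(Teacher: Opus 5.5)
Your proof is correct. For $G(\class{SI+\G})\subseteq\class{CBFS}$ you use essentially the paper's test theory $\Testcbfs$. The one difference is in the axioms. The paper imposes $P_{m,n}\rightarrow(\psi_{=n}\vee\psi_{\geq m+1})$ only for $1\leq n\leq m$ and leaves $P_{m,n}$ unconstrained when $n>m$. So its query ``is $\phi\wedge P_{m,n}$ satisfiable?'' is justified only for $n\leq m$, and larger $n$ must be answered ``false'' separately. Your axioms use $\psi_{\geq m}$ for all $m,n$, which matches \Cref{def:cbfs} verbatim and handles every $n$ uniformly.

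The genuine divergence is in $G(\class{CBFS})\subseteq\class{SI+\G}$. The paper proves this directly with four test theories, checking that each has computable bounded finite spectra:
\begin{itemize}
\item $\Testcfs$, to get computable finite spectra;
\item $\Testsi$, to get stable infiniteness;
\item $\Testg$, to get cofiniteness of spectra;
\item $Th_{\T}$, to compute the size of the complement.
\end{itemize}
You instead obtain the inclusion formally from three ingredients: antitonicity of $G$, the inclusions $\class{SI}\subseteq\class{CBFS}$ and $\class{CFS}\subseteq\class{CBFS}$ of \Cref{lem:SI->CBS}, and the known equalities $G(\class{SI})=\class{SI}$ and $G(\class{CFS})=\class{\G}$ from \Cref{tab:sharp-summary}.

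Each of the paper's four test theories is in fact stably infinite or has computable finite spectra. So the paper's argument is essentially yours unfolded: it re-runs the sharpness proofs for $G(\class{SI})$ and $G(\class{CFS})$ inside $\class{CBFS}$. Your route is shorter and modular, and it makes clear that no new test theory is needed for this half. The paper's route is more self-contained, since it exhibits explicit witnesses. Logically it gains nothing over treating the earlier sharpness results as black boxes.
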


\begin{restatable}{theorem}{thmGcbs}
\label{thm:G-cbs}
    $G(\class{CBS})\subseteq\class{SI+CS}$ and $G(\class{SI+CS})\subseteq \class{CBS}$.
\end{restatable}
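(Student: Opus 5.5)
The two inclusions are proved separately. In each case we combine an arbitrary theory of the left-hand side with a suitably chosen \emph{test theory} from the class it is assumed combinable with. We then read off the required algorithm from the decidability of the combination. By \Cref{lem-fontaine}, satisfiability of $\phi\wedge\psi$ in a disjoint combination amounts to $\spec_{\T}(\phi)\cap\spec_{\T'}(\psi)\neq\emptyset$. This is the tool for computing that intersection.

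\textbf{$G(\class{CBS})\subseteq\class{SI+CS}$.} By \Cref{lem:SI->CBS}, $\class{SI}\subseteq\class{CBS}$. Since $G$ is antitone, $G(\class{CBS})\subseteq G(\class{SI})=\class{SI}$, using \Cref{tab:sharp-summary}. By \Cref{thm:newcontainements} it then suffices to show that every $\T\in G(\class{CBS})$ has computable finite spectra.

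For this, take the test theory $\T_{=}$ over $\Spn$ with axioms $\{P_k\rightarrow\psi_{=k}:k\in\No\}\cup P_{\neq}$. A quantifier-free formula mentions only finitely many $P_k$. Its spectrum is therefore computed by case analysis on the possible truth assignments to those symbols, with at most one of them true. If some $P_k$ is true, the spectrum is $\{k\}$ intersected with the spectrum of the pure equality part. If none is true, we get the equality-part spectrum minus the finitely many excluded sizes. The excluded sizes are those $k$ for which $\neg P_k$ occurs; such literals only forbid size $k$ when $P_k$ is forced, so in fact they exclude nothing. All of this is computable, so $\T_{=}$ has computable spectra and lies in $\class{CS}\subseteq\class{CBS}$.

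Then $\T\sqcup\T_{=}$ is decidable, and $\phi\wedge P_n$ is $\T\sqcup\T_{=}$-satisfiable iff $n\in\spec_{\T}(\phi)$. This gives the algorithm witnessing $\T\in\class{CFS}$.

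\textbf{$G(\class{SI+CS})\subseteq\class{CBS}$.} Let $\T\in G(\class{SI+CS})$. We first show infinite decidability. The theory $\Tinfty\in\class{SI+CS}$ (see \Cref{tab:extheories}), and $\phi$ is $\T\sqcup\Tinfty$-satisfiable iff $\aleph_0\in\spec_{\T}(\phi)$.

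For computable bounded finite spectra we use the test theory $\T^{P}$ over $\Spnn$. Its axioms are $P_{\neq}$ (for the symbols $P_{m,n}$) together with
\[P_{m,n}\rightarrow(\psi_{=n}\vee\psi_{\geq m}),\qquad m,n\in\No.\]
This theory is stably infinite. Any satisfiable quantifier-free formula has a model in which at most one $P_{m,n}$ is true, and every such assignment permits infinite models. It also has computable spectra. The spectrum of a quantifier-free formula is again obtained by finitely many case splits: the true $P_{m,n}$, if any, contributes $\{n\}\cup[m,\aleph_0]$, intersected with the decidable spectrum of the equality part. So $\T^{P}\in\class{SI+CS}$, and hence $\T\sqcup\T^{P}$ is decidable.

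Given $\phi$, $m$ and $n$, the algorithm checks whether $\phi\wedge P_{m,n}$ is $\T\sqcup\T^{P}$-satisfiable. This holds iff $\spec_{\T}(\phi)\cap(\{n\}\cup[m,\aleph_0])\neq\emptyset$. Now suppose $\phi$ has no $\T$-model with at least $m$ elements. Then the intersection of $\spec_{\T}(\phi)$ with $[m,\aleph_0]$ is empty, so the answer is exactly whether $n\in\spec_{\T}(\phi)$. This is what \Cref{def:cbfs} requires.

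\textbf{Main obstacle.} The delicate part is verifying the membership claims for the test theories. This means showing that $\T_{=}$ and $\T^{P}$ are decidable, and that their spectra are computable uniformly over infinitely many predicate symbols. The key facts are that only finitely many symbols occur in a given formula, and that $P_{\neq}$ reduces everything to finitely many cases, each with an explicitly describable spectrum.
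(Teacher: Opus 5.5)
Your proof is correct, and most of it coincides with the paper's. In the second inclusion you use $\Tinfty$ for infinite decidability and a $P_{m,n}$-style test theory for computable bounded finite spectra, and the computable-finite-spectra half of the first inclusion uses the same theory $\Testcfs$ as the paper.

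The real difference is how you obtain stable infiniteness in the first inclusion. The paper argues directly: if some $\T$-satisfiable $\phi$ lacks $\aleph_0$ in its spectrum, it takes $n=\max\spec_{\T}(\phi)$ and combines $\T$ with the test theory $\Testsi$, so that $\phi\wedge P_k$ is satisfiable iff $k\notin U$. You instead note $\class{SI}\subseteq\class{CBS}$ and combine antitonicity of $G$ with the sharpness result $G(\class{SI})=\class{SI}$ from \Cref{tab:sharp-summary}. Your route is shorter and shows that this half uses nothing specific to bounded spectra. Indeed $\Testsi$ is itself stably infinite, so the paper's argument is essentially a re-proof of $G(\class{SI})\subseteq\class{SI}$. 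The paper's route, in exchange, stays self-contained within its test-theory framework and parallels \Cref{thm:G-cbfs}.

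Your variant of $\Testcbfs$, with $\psi_{\geq m}$ instead of $\psi_{\geq m+1}$ and axioms for all $m,n$, is also slightly cleaner. It matches the ``at least $m$ elements'' hypothesis of \Cref{def:cbfs} exactly. It also needs no separate case for $n>m$; the paper's theory puts no cardinality constraint on $P_{m,n}$ there, so one must answer false by hand.

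One attribution to fix: you credit \Cref{lem-fontaine} with the equivalence between $\T\sqcup\T'$-satisfiability of $\phi\wedge P_{m,n}$ and $\spec_{\T}(\phi)\cap\spec_{\T'}(P_{m,n})\neq\emptyset$. That lemma is about decidability, not this equivalence. What you need is the elementary fact that, since $P_{m,n}$ has no free variables, a $\T$-model of $\phi$ and a $\T'$-model of $P_{m,n}$ of equal cardinality can be glued along a bijection. \Cref{LowenheimSkolem} then reduces infinite cardinalities to $\aleph_0$.
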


As a consequence, we obtain the following equalities, which are especially useful for the execution of \Cref{alg-G}:

\begin{restatable}{corollary}{GSIGeqCBFS}
\label{cor:gsigeqcbfs}
\label{cor:gsicseqcbs}
    $G(\class{SI+\G})=\class{CBFS}$,
    $G(\class{CBFS})=\class{SI+\G}$,
    $G(\class{SI+CS})=\class{CBS}$, and
    $G(\class{CBS})=\class{SI+CS}$.
\end{restatable}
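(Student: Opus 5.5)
The corollary should follow purely formally from the four inclusions already established in \Cref{thm:cbfs,thm:cbs,thm:G-cbfs,thm:G-cbs}, together with the fact that $G$ is an antitone Galois connection (\Cref{sec:pre-gal}). No further model-theoretic argument should be needed. I would prove each of the four equalities by a pair of inclusions.

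For $G(\class{SI+\G})=\class{CBFS}$: the inclusion $\class{CBFS}\subseteq G(\class{SI+\G})$ is exactly the content of \Cref{thm:cbfs}. The reverse inclusion $G(\class{SI+\G})\subseteq\class{CBFS}$ is the first half of \Cref{thm:G-cbfs}.

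For $G(\class{CBFS})=\class{SI+\G}$: the second half of \Cref{thm:G-cbfs} gives $G(\class{CBFS})\subseteq\class{SI+\G}$. Conversely, \Cref{thm:cbfs} states $\class{CBFS}\subseteq G(\class{SI+\G})$, and by the symmetry of combinability ($X\subseteq G(Y)$ iff $Y\subseteq G(X)$) this is equivalent to $\class{SI+\G}\subseteq G(\class{CBFS})$. This is also the parenthetical clause of that theorem.

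The other two equalities are handled identically. \Cref{thm:cbs} gives $\class{CBS}\subseteq G(\class{SI+CS})$, and by symmetry $\class{SI+CS}\subseteq G(\class{CBS})$. \Cref{thm:G-cbs} supplies the reverse inclusions $G(\class{SI+CS})\subseteq\class{CBS}$ and $G(\class{CBS})\subseteq\class{SI+CS}$.

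There is no real obstacle here. The only point to be careful about is invoking the Galois-connection symmetry correctly, so that each combination theorem is read in both directions. All the substantive work lies in the sharpness theorems \Cref{thm:G-cbfs,thm:G-cbs}, which are assumed here.
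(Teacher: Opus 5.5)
Your proposal is correct and matches the paper's route. The paper states the corollary as an immediate consequence of the combination theorems \Cref{thm:cbfs,thm:cbs}, read in both directions via the symmetry of $G$ as recorded in their parenthetical clauses, together with the sharpness inclusions of \Cref{thm:G-cbfs,thm:G-cbs}, pairing one inclusion from each group for every equality exactly as you do.
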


The proofs of \Cref{thm:G-cbfs,thm:G-cbs}
are carried out using the technique
 of {\em test theories},
introduced in \cite{POPL}.
The idea is, given a reference set, to define a family of {\em test theories} that is, in a sense, ``maximally difficult'' to combine:
then, if a theory is combinable with these test theories, it must be combinable with the entire reference set.
We mostly use the test theories introduced in \cite{POPL}.
For them, we have to prove that they can indeed serve as test theories for the two new properties that we introduce.
We also introduce
one new theory, denoted $\Testcbfs$,
which is axiomatized in \Cref{axnewdouble} over signature
$\Spnn$
from \Cref{tab:signatures-small}.
Essentially, in a $\Testcbfs$-interpretation $\A$, if $P_{m,n}$ holds true then either $|\dom{\A}|=n$ or $|\dom{\A}|>m$.
It has the following properties:

\begin{figure}[t]
\begin{mdframed}
\centering
\vspace{1mm}
$\{P_{m,n}\rightarrow(\psi_{=n}\vee\psi_{\geq m+1}) : 1\leq n\leq m\}\cup \{P_{m,n}\rightarrow\neg P_{p,q} : (m,n)\neq (p,q)\}$
\vspace{1mm}
\end{mdframed}
\vspace{-5mm}
\caption{Axiomatization of $\Testcbfs$}
\label{axnewdouble}
\vspace{-3mm}
\end{figure}

\begin{restatable}{proposition}{propertiesoftestcbfs}\label{prop:prop-testcbfs}
    The theory $\Testcbfs$ is decidable, stably infinite, gentle, and for $\aleph_0\notin\spec_{\T}(\phi)$, $m=\max\spec_\T(\phi)$ finite, and $n \leq m$, we have that $\phi\wedge P_{m,n}$ is $\T\sqcup\Testcbfs$-satisfiable if and only if $n\in\spec_{\T}(\phi)$.
\end{restatable}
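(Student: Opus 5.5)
The plan rests on one structural observation: every quantifier-free $\Spnn$-formula $\psi$ is a Boolean combination of the nullary atoms $P_{m,n}$ and of equalities between variables. We may therefore put $\psi$ in disjunctive normal form. We then reduce each disjunct to a conjunction of literals in equalities and in finitely many $P$-atoms, and compute its spectrum directly from the axioms.

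Fix such a conjunction $\psi$. If it contains two distinct positive atoms $P_{m,n}$ and $P_{p,q}$, the exclusivity axioms make its spectrum empty. If it contains none, the negated atoms impose no constraint: set all $P$'s false, and every $P_{m,n}\rightarrow\cdots$ axiom holds vacuously. Hence the spectrum is $\{k : k\geq d\}\cup\{\aleph_0\}$, where $d$ is the least number of distinct elements forced by the equality literals. If $d$ is undefined because the equality literals are inconsistent, the spectrum is $\emptyset$. If it contains exactly one positive atom $P_{m,n}$, the spectrum is $(\{n\}\cup[m+1,\aleph_0])\cap[d,\aleph_0]$, provided the equality part is consistent.

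All of these sets are computable from $\psi$ and are either finite, or cofinite in $\N^{*}$ and containing $\aleph_0$. Taking the union over the disjuncts preserves this form. This yields at once decidability (test whether the union is nonempty), gentleness (output the finite complement $S$ with $b$ false, or $\emptyset$ with $b$ true when the formula is unsatisfiable), and stable infiniteness (every nonempty case contains $\aleph_0$). The only care needed is the bookkeeping that the equality part contributes exactly the upward-closed constraint $[d,\aleph_0]$. This holds since the signature has no function symbols and the predicates are nullary, so any model of the right size can realize the equality pattern.

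For the last claim, fix $\T$ and $\phi$ with $\aleph_0\notin\spec_{\T}(\phi)$, $m=\max\spec_{\T}(\phi)$ (finite by \Cref{compactness2}), and $n\leq m$. We argue through the standard fact that a conjunction of a $\Sigma$-formula and a $\Spnn$-formula is satisfiable in the disjoint union iff there are models of each, of the same cardinality, agreeing on the equalities between shared variables. Here $P_{m,n}$ mentions no variables, so this reduces to $\spec_{\T}(\phi)\cap\spec_{\Testcbfs}(P_{m,n})\neq\emptyset$. This is exactly the reasoning behind \Cref{lem-fontaine}.

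We have $\spec_{\Testcbfs}(P_{m,n})=\{n\}\cup[m+1,\aleph_0]$, using $1\leq n\leq m$ and the case analysis above. Since $\spec_{\T}(\phi)\subseteq[1,m]$, the intersection is nonempty iff $n\in\spec_{\T}(\phi)$.

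The main obstacle is justifying the amalgamation step precisely, since \Cref{lem-fontaine} is stated as a decidability equivalence rather than this pointwise satisfiability criterion. I would prove it directly. Given models $\A$ of $\T\cup\{\phi\}$ and $\B$ of $\Testcbfs\cup\{P_{m,n}\}$ with $|\dom{\A}|=|\dom{\B}|$, transport $\B$'s interpretation of the nullary predicates onto $\dom{\A}$. This is trivial because the predicates are nullary and the theory has no other symbols, and any structure of the right size satisfying the same $P$-values is a $\Testcbfs$-model. The converse direction is by taking reducts.
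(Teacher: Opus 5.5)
Your proposal is correct and follows essentially the same route as the paper. Both arguments split on the positive $P$-literals of a conjunction and compute the spectrum explicitly: your $(\{n\}\cup[m+1,\aleph_0])\cap[d,\aleph_0]$ is exactly the paper's $\{n\}\cup[M,\aleph_0]$ or $[M,\aleph_0]$ with $M=\max\{m+1,d\}$. For the final claim, both set $P_{m,n}$ true on an $n$-element $\T$-model of $\phi$, and for the converse squeeze the cardinality between ``$n$ or $>m$'' and ``$\leq m$''.

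There is one harmless slip. The axiom $P_{m,n}\rightarrow(\psi_{=n}\vee\psi_{\geq m+1})$ exists only for $1\leq n\leq m$. So for $n>m$ the spectrum of a conjunction whose positive atom is $P_{m,n}$ is just $[d,\aleph_0]$, not your formula. You should also discard conjunctions that contain both $P_{m,n}$ and $\neg P_{m,n}$. Neither point affects your conclusions.
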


With this theory, and its  properties as listed in
\Cref{prop:prop-testcbfs}, we can 
 provide
a sketch for the proof of \Cref{thm:G-cbfs}.
The proof of \Cref{thm:G-cbs} is similar.

\begin{proof}[of \Cref{thm:G-cbfs}, sketch]
We show that $G(\class{SI+\G})\subseteq\class{CBFS}$.
The second part
can be done in a similar fashion, but requires further test theories from \cite{POPL}.

Suppose that $\T$ is decidable and can be combined with every decidable, gentle and stably infinite theory.
We prove that $\T$ has computable bounded finite spectra.
Take a quantifier-free $\phi$, and $m,n\in\No$:
        our algorithm simply returns whether $\phi\wedge P_{m,n}$ is $\T\sqcup\Testcbfs$-satisfiable;
        in case $\phi\wedge\neq(x_{1},\ldots,x_{m})$ is not $\T$-satisfiable this happens iff $n\in \spec_{\T}(\phi)$, for $1\leq n\leq m$, so the algorithm has the required properties.
        We are then done as $\T\sqcup\Testcbfs$ is indeed decidable, given that $\Testcbfs$ is decidable, gentle and stably infinite, see \Cref{prop:prop-testcbfs}.        
\end{proof}

\section{Theories that cannot be combined with themselves}\label{sec:odd}

In \Cref{fig:hasse-diagram},
$G$ behaves as a reflection over the dashed line.
% which makes visualizing which sets go together in a sharp combination theorem much easier:
% in particular, this means that 
For each set of theories $\class{X}$ in the diagram, either $\class{X}$ sits above $G(\class{X})$ (and so $G(\class{X})\subseteq\class{X}$), or $G(\class{X})$ sits above $\class{X}$ (and so $\class{X}\subseteq G(\class{X})$).
This is true for all combination properties we studied so far in this paper, and for all those found in \cite{POPL}, but is it true in general?
In other words, must $G$ look like a reflection?
The answer, surprisingly, is no, as we proceed to show.

We first introduce the notion of self-combinable theories, and prove that all theories up to the  dashed line of \Cref{fig:hasse-diagram}  are self-combinable.
This is an interesting property, and 
to our knowledge it was not studied so far.
We will use it below
to prove that $G$ does not have to be a reflection.

\begin{definition}
    A decidable theory $\T$ is {\em self-combinable} if
    $\T\sqcup\T$ is decidable.
\end{definition}

Recall that $\sqcup$ makes the signatures disjoint before joining the axiomatizations, and so this definition, while strange, is acceptable. 
We show below examples of both self-combinable and non self-combinable theories.

\begin{proposition}
\label{prop:self-comb-class-methods}
    If $\T\in\class{X}$
for $\class{X}\in\{\class{shiny},\class{SI+\G},\class{\G},\class{SI+CS},\class{CS},\class{SI}\}$ then it is self-combinable.
\end{proposition}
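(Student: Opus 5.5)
Since $\class{shiny}\subseteq\class{SI+\G}\subseteq\class{\G}\subseteq\class{CS}$ and $\class{SI+\G}\subseteq\class{SI+CS}\subseteq\class{SI}$ (\Cref{fig:hasse-diagram}), it suffices to treat the two maximal sets, $\class{CS}$ and $\class{SI}$. For each, the plan is to show that the set is contained in its own image under $G$, i.e.\ $X\subseteq G(X)$. Then any $\T\in X$ is combinable with every theory of $X$, in particular with (a renamed copy of) itself.

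For $\class{SI}$ this is immediate from the Nelson--Oppen result in \Cref{tab:sharp-summary}: $\class{SI}=G(\class{SI})$. The disjointly renamed copy of a stably infinite decidable theory is again stably infinite and decidable, since renaming symbols preserves spectra and decidability. Hence $\T\sqcup\T$ is decidable.

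For $\class{CS}$ we need $\class{CS}\subseteq G(\class{CS})$. Reading it off \Cref{fig:hasse-diagram} is not enough, so I would give the procedure directly via Fontaine's lemma (\Cref{lem-fontaine}). Let $\T_1,\T_2\in\class{CS}$ and let $\phi_1,\phi_2$ be conjunctions of literals; we must decide whether $\spec_{\T_1}(\phi_1)\cap\spec_{\T_2}(\phi_2)\neq\emptyset$.
\begin{enumerate}
\item Use decidability to check that both formulas are satisfiable; if not, answer ``empty''.
\item Use infinite decidability to test whether $\aleph_0$ lies in both spectra; if so, answer ``non-empty''.
\item Otherwise some $\phi_i$, say $\phi_1$, has $\aleph_0\notin\spec_{\T_1}(\phi_1)$. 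By \Cref{compactness2}, $\spec_{\T_1}(\phi_1)$ is then finite, so $\phi_1\wedge\neq(x_1,\ldots,x_m)$ is $\T_1$-unsatisfiable for some $m$.
\item Find the least such $m$ by querying the decision procedure of $\T_1$ with $m=1,2,\ldots$; this search terminates by the previous step.
\item For each $n<m$, test $n\in\spec_{\T_1}(\phi_1)$ and $n\in\spec_{\T_2}(\phi_2)$ using the computable-finite-spectra algorithms, and answer accordingly.
\end{enumerate}
This decides the problem, so $\T_1\sqcup\T_2$ is decidable. In particular $\T\sqcup\T$ is decidable for every $\T\in\class{CS}$, and the claim follows for all six listed sets.

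The only delicate step is the termination argument in step 3, which rests on \Cref{compactness2}. We also need renaming to preserve membership in $\class{CS}$ and $\class{SI}$, which is routine. Note that the argument uses $\aleph_0$-decidability essentially: for $\class{CFS}$ alone it would fail, consistent with $\class{CFS}$ lying above the dashed line in \Cref{fig:hasse-diagram}.
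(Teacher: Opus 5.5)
Your proof is correct and uses the same idea as the paper: each listed $\class{X}$ satisfies $\class{X}\subseteq G(\class{X})$, so every $\T\in\class{X}$ is combinable with itself. The paper reads $\class{X}\subseteq G(\class{X})$ off \Cref{fig:hasse-diagram} for all six sets. You instead reduce to the two maximal sets $\class{CS}$ and $\class{SI}$, and prove directly the one inclusion the paper never justifies explicitly, $\class{CS}\subseteq G(\class{CS})$, via Fontaine's lemma and a bounded search in the style of \Cref{alg-sicscbs}; that search is sound.
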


\begin{proof}
%We have
$\class{X}\subseteq G(\class{X})$,
and so $\T\in\class{X}\cap G(\class{X})$, hence $\T\sqcup\T$ is decidable.
\end{proof}

\begin{example}
\label{ex:self-comp}
We show the theory $\T(m,n)$ from \Cref{tab:extheories}, with $m>n$, is not self-combinable.
For simplicity, let $\Sigma_{P^{\prime}}^{\mathbb{N}}$ be a disjoint copy of $\Spn$, with predicates $P_{k}^{\prime}$ instead of $P_{k}$.
    Consider the quantifier-free formulas $P_{k}\wedge P_{1}^{\prime}$:
    they are $\T(m,n)\sqcup\T(m,n)$-satisfiable if and only if $k\notin U$.
    Indeed, if $k\in U$ then $P_{k}$ implies $\psi_{=m}$, while $P_{1}^{\prime}$ implies $\psi_{=n}$;
    and if $k\notin U$, define a $\T(m,n)\sqcup\T(m,n)$-interpretation $\A_{k}$ with $n$ elements, where $P_{k}$ is true, all other $P_{i}$ are false, $P_{1}^{\prime}$ is true, and all other $P_{i}^{\prime}$ are false;
    of course it satisfies $P_{k}\wedge P_{1}^{\prime}$.
    Thus, if $\T(m,n)\sqcup\T(m,n)$ is decidable, we get a decision procedure for the undecidable set $U$, a contradiction.
\end{example}

Now, $\T(m,n)$ is decidable, but has none of the other properties.
We show that for each property strictly above the dashed line of \Cref{fig:hasse-diagram} there
is a theory that exhibits it but is not self-combinable.
% Our understanding of $G$ and the existence of $\T(m,n)$ suggest that for any set of theories that lies above the dashed line in \Cref{fig:hasse-diagram} we should be able to find, in this set, a theory that cannot be combined with itself.
% This is indeed true.
We actually already defined these theories in the bottom part of \Cref{tab:extheories}:

\begin{restatable}{proposition}{bigselfcomb}
\label{prop:bigselfcomb} 
$\Th{cbfs}$
$\Th{cfs}$, and 
$\Th{cbs}$
 from
\Cref{tab:extheories} are not self-combinable.    
\end{restatable}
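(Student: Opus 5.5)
The plan is to follow \Cref{ex:self-comp} exactly. For each of the three theories $\T$, I will exhibit a computable family of quantifier-free formulas of $\T\sqcup\T$ whose satisfiability encodes an undecidable set. If $\T\sqcup\T$ were decidable, that set would be decidable, a contradiction. As in the example, let $\Sigma_{P^{\prime}}^{\mathbb{N}}$ be a disjoint copy of $\Spn$ with predicates $P_k^{\prime}$. In each case I use a conjunction $P_i\wedge P_k^{\prime}$, which asks for a single structure satisfying one cardinality constraint from each copy. The two undecidable sets available are $U$ and $\{n : F(n)=\aleph_0\}$. Changing finitely many elements of an undecidable set leaves it undecidable, so restricting $k$ to $k>1$ or $k>2$ is harmless.

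\emph{$\Th{cbs}$.} Consider $P_1\wedge P_k^{\prime}$ for $k>1$. I claim this is $\Th{cbs}\sqcup\Th{cbs}$-satisfiable if and only if $k\in U$. For the direction ``only if'', $P_k^{\prime}$ forces $\psi_{=k}$, while $P_1$ together with $P_U(1)$ forbids every finite cardinality outside $U$; hence $k\in U$. For the direction ``if'', take a $k$-element structure in which $P_1$ and $P_k^{\prime}$ are the only true predicates. Then $P_{\neq}$ holds in both copies, and $P_U(1)$ holds in the first copy because $k\in U$. The axioms $P_j\rightarrow\psi_{=j}$ for $j>1$ hold vacuously in the first copy, and in the second copy the only non-vacuous one is $P_k^{\prime}\rightarrow\psi_{=k}$, which holds. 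Deciding this family would therefore decide $U\setminus\{1\}$, and hence $U$.

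\emph{$\Th{cfs}$ and $\Th{cbfs}$.} Consider $P_1\wedge P_k^{\prime}$, with $k>1$ for $\Th{cfs}$ and $k>2$ for $\Th{cbfs}$. In both theories, $P_1$ together with $P_\infty$ forces an infinite domain. The second copy imposes $P_k^{\prime}\rightarrow\psi_{\leq F(k)}$ exactly when $F(k)$ is finite. So the conjunction is satisfiable if and only if $F(k)=\aleph_0$. For the direction ``if'', I take a countably infinite domain in which only $P_1$ and $P_k^{\prime}$ are true. All other axioms are then vacuous: $P_{\neq}$, the bounds on the other $P_j$, and, for $\Th{cbfs}$, $P_U(2)$, since $P_2$ is false in both copies. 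Thus a decision procedure for $\T\sqcup\T$ would decide $\{k : F(k)=\aleph_0\}$ up to finitely many values, contradicting property $(ii)$ of $F$. The map from $k$ to the formula is clearly computable.

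The only real obstacle is checking that the witnessing interpretations satisfy every axiom. The delicate points are that the infinite model does not violate $P_U(\cdot)$, which is why I avoid $P_2$ for $\Th{cbfs}$, and that $P_{\neq}$ is respected in each copy separately, which it is because only one predicate is true per copy. Everything else is routine.
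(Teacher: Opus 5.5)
Your proposal is correct and follows essentially the same route as the paper: it uses the same conjunctions of one predicate from each copy (with the roles of the two copies swapped, which is immaterial) and reduces from $U$ for $\Th{cbs}$ and from $\{k : F(k)=\aleph_0\}$ for $\Th{cfs}$ and $\Th{cbfs}$. Your restriction to $k>2$ for $\Th{cbfs}$ and $k>1$ for $\Th{cbs}$ is actually more careful than the paper. The paper's range $k>1$ for $\Th{cbfs}$ includes $k=2$, where $P_2$ is not bounded by $F(2)$, so the claimed equivalence fails there; likewise, for $\Th{cbs}$ its equivalence fails at $k=1$.
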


% here, however, we will only present the proof for $\T(m,n)$, which is simpler.

\medskip

Returning to the reflection principle, we obtain the following negative result:
\begin{proposition}
    There is a closed set $\class{X}$ with  $\class{X}\not\subseteq G(\class{X})$ and $G(\class{X})\not\subseteq \class{X}$.
\end{proposition}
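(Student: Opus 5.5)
We need a closed set X with neither X ⊆ G(X) nor G(X) ⊆ X. The natural candidate is X = G(G({T})) for a non-self-combinable theory T. Sets of the form G(Y) are always closed, so X is closed. Since T ∈ X but T ∉ G(T), and G is antitone, we get G(X) ⊆ G(T), hence T ∉ G(X). Therefore X ⊄ G(X).

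It remains to ensure G(X) ⊄ X, and a cleaner choice makes this easy. Take X = G({T}) for T = T(m,n) with m>n from \Cref{tab:extheories}; it is closed, since G(G(G(Y))) = G(Y). The argument has three steps.

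\textbf{Step 1: X ⊄ G(X).} By \Cref{ex:self-comp}, T is not self-combinable. I will show that X contains a theory that cannot be combined with some other member of X. The obvious witness is T_{Eq}, since its only constraint is nonemptiness. Any theory combinable with T_{Eq} must be decidable and have a decidable "nonempty spectrum" problem, so T_{Eq} ∈ G(T) is plausible but does not by itself give the witness pair. A better route is to find two theories A, B ∈ G(T) that are not combinable with each other. I would look for copies of T(m,n)-style theories built over different cardinalities that each combine with T. For example, T(p,q) with {p,q} disjoint from {m,n}: any conjunction with T(m,n) is then unsatisfiable unless trivially so, so it is combinable with T. Meanwhile T(p,q) is not self-combinable, again by \Cref{ex:self-comp}. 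So A = B = T(p,q) ∈ X with A ∉ G(A), hence X ⊄ G(X).

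\textbf{Step 2: G(X) ⊄ X.} We have G(X) = G(G(T)) ∋ T. If G(X) ⊆ X = G(T), then T ∈ G(T), contradicting \Cref{ex:self-comp}. So G(X) ⊄ X.

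\textbf{Step 3: check T(p,q) ∈ G(T(m,n)).} This is the main obstacle. By \Cref{lem-fontaine}, deciding T(m,n) ⊔ T(p,q) reduces to deciding emptiness of spec_{T(m,n)}(φ₁) ∩ spec_{T(p,q)}(φ₂). These spectra lie in {m,n} and {p,q} respectively, which are disjoint, so the intersection is always empty. The algorithm answers "empty" unconditionally. Equivalently, T(m,n) ⊔ T(p,q) has no models, so every formula is unsatisfiable and the combination is trivially decidable. The delicate point is only the choice of p, q ∉ {m,n} with p>q and p, q ≥ 1, which is clearly possible. Both theories use the undecidable set U from \Cref{tab:extheories}, and this is harmless.

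With these steps, X = G(T(m,n)) is closed and satisfies both non-inclusions.
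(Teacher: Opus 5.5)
Your proposal is correct and follows the paper's proof. You take $\class{X}=G(\{\T(m,n)\})$ with $m>n$, which is closed as an output of $G$. You then use $\T(p,q)$ with $\{p,q\}\cap\{m,n\}=\emptyset$ as a member of $\class{X}$ that is not in $G(\class{X})$, and $\T(m,n)$ as a member of $G(\class{X})$ that is not in $\class{X}$.

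The detours through $G(G(\{\T\}))$ and $\T_{Eq}$ are unnecessary. Membership $\T(p,q)\in G(\{\T(m,n)\})$ also needs $\T(p,q)$ to be decidable; you do not state this, but the paper's appendix establishes it.
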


\begin{proof}
Take $\class{X}$ to be $G(\{\T(m,n)\})$ for some $m>n$:
it is not in \Cref{fig:hasse-diagram} due to \Cref{prop:self-comb-class-methods}.
It is closed because every output of $G$ is closed, thus $G(G(\class{X}))=\class{X}$.
Notice that for any $p>q$ with $\{m,n\}\cap\{p,q\}=\emptyset$, $\T(m,n)$ is combinable with $\T(p,q)$, as the models of one do not share cardinalities with models of the other.

By \Cref{ex:self-comp}, $\T(m,n)$ is not in $\class{X}$ but $\T(p,q)$ is, and because of that $\T(m,n)$ is in $G(\class{X})$ but $\T(p,q)$ is not.
Therefore $\class{X}\not\subseteq G(\class{X})$
and $G(\class{X})\not\subseteq \class{X}$.
\end{proof}

\section{Conclusion}
\label{sec:conclusion}
We have computed the minimal lattice of theory combination properties that contains the classical properties of stable infiniteness, shininess, and gentleness. 
This process generated two new sharp combination methods, based on the novel notions of theories
with computable bounded (finite) spectra.

Going forward, we notice that \cite{POPL} showed the feasibility of proving sharpness of combination theorems.
We here adhere to this raised standard: the two new combination theorems that we present are proven sharp.
However, the sharpness proofs both here and in \cite{POPL} require tailored test theories.
We plan to streamline this process, looking for more general sufficient conditions for sharpness, and study the completeness of test theories, as defined in \cite{POPL}, for the new properties shown here.

\section*{Acknowledgments}
This research was supported in part by 
% the Stanford Center for Blockchain Research, 
Defense Advanced Research Projects Agency (DARPA) contract FA875024-2-1001,
Israel Science Foundation (ISF) grant number 209/25,
and the
Binational Science Foundation (BSF) grant number 2024049. Przybocki was supported by the NSF Graduate Research Fellowship Program under Grant No. DGE-2140739.

%\newpage

\bibliography{bib}

%\newpage

\appendix

\section{Proof of \Cref{compactness2}}
\corofcomp*

\begin{proof}
Since $\aleph_0\notin\spec_{\T}(\phi)$,
we have from \Cref{LowenheimSkolem} that
no infinite cardinality is in $\spec_{\T}(\phi)$.
Suppose for contradiction that $\spec_{\T}(\phi)$
is infinite.
Then every finite subset of 
the infinite set $S=\{\phi\wedge\phi_{\geq n}\mid n\in\spec_{\T}(\phi)\}$ is $\T$-satisfiable, and by
\Cref{compactness},
$S$ itself is $\T$-satisfiable. 
But no finite $\T$-interpretation can satisfy $S$.
\end{proof}

% \begin{proposition}\label{prop:new-test-theory}
%     $\T^{=}_{>n}$ is decidable, stably infinite and gentle.
% \end{proposition}

%\section{Proof of \Cref{lem:SI->CBS}}
%\SIimpliesCBS*

\section{Proof of \Cref{theo:ex-diagram-correct}}

\exdiagramcorrect*

\begin{proof}
We proceed from top to bottom of \Cref{fig:hasse-diagram}.

    \begin{enumerate}[leftmargin=1cm]
        \item $\T(m,n)$
        
        \begin{enumerate}
            \item We start by showing this theory is decidable.
            If $P_{1}$ occurs positively in $\phi$, we state $\phi$ is $\T(m,n)$-satisfiable if and only if $\minmod_{\T_{Eq}}(\phi^{\prime})\leq n$, where $\phi^{\prime}$ is the result of removing from $\phi$ any $P$-literals:
    one direction is easy, as a $\T(m,n)$-interpretation satisfying $\phi$ must have $n$ elements;
    for the other, we use $\T_{Eq}$ is smooth to get a $\T_{Eq}$-interpretation that satisfies $\phi^{\prime}$ with $n$ elements, and turn it into a $\T(m,n)$-interpretation by making $P_{1}$ true and all other $P$-literals false.

    If $\phi$ contains a positive $P$-literal other than $P_{1}$, or no positive $P$-literals at all, we state $\phi$ is $\T(m,n)$-satisfiable if and only if $\minmod_{\T_{Eq}}(\phi^{\prime})\leq m$.
    One direction is easy: 
    $\T(m,n)$ has interpretations of cardinality at most $m$.
    For the other direction, using that $\T_{Eq}$ is smooth (again) we can find a $\T_{Eq}$-interpretation that satisfies $\phi^{\prime}$ with cardinality $m$;
    this is made into a $\T(m,n)$-interpretation by making $P_{i}$ true if and only if it occurs positively in $\phi$, and it satisfies $\phi$ since $P_{1}$ does not occur positively in $\phi$.

    \item Although $P_{k}$ has no models of cardinality greater than $m$, we have that $n\in\spec_{\T(m,n)}(P_{k})$ if and only if $k\notin U$, so $\T(m,n)$ cannot have computable bounded finite spectra.
        \end{enumerate}

        \item $\Th{cbfs}$
        
        \begin{enumerate}
            \item First we prove decidability.
            If the only positive $P$-literal in $\phi$ is $P_{1}$ or $P_{2}$, or $\phi$ has no positive $P$-literals at all, then $\phi$ is $\Th{cbfs}$-satisfiable if and only if $\phi^{\prime}$ is $\T_{Eq}$-satisfiable.
    Indeed, we construct an infinite $\T_{Eq}$-interpretation that satisfies $\phi^{\prime}$ (notice that the consequent in both of the axioms $P_{1}\rightarrow\psi_{\geq k}$ and $P_{2}\rightarrow\neg\psi_{=k}$ are satisfied in an infinite interpretation), and then turn it into a $\Th{cbfs}$-interpretation by making $P_{i}$ true iff it occurs positively in $\phi$;
    of course the resulting interpretation satisfies $\phi$.

    Now suppose the only positive $P$-literal in $\phi$ is $P_{k}$, for some $k>2$.
    $\phi$ is then $\Th{cbfs}$-satisfiable iff $\minmod_{\T_{Eq}}(\phi^{\prime})\leq F(k)$ (remember the problem of whether $n\leq F(m)$ is decidable), and in that case its spectrum is $[\minmod_{\T_{Eq}}(\phi^{\prime}),F(k)]$.
    Indeed, given any $n$ in this interval, we use the smoothness of $\T_{Eq}$ to find a $\T_{Eq}$-interpretation that satisfies $\phi^{\prime}$ with $n$ elements, and transform it into a $\Th{cbfs}$-interpretation by making $P_{k}$ true in it, and all other $P_{i}$ false.
    The result satisfies $\phi$.

            \item  To summarize, we have just proved that if a conjunction of literals $\phi$ contains:
        $P_{1}$, $P_{2}$ or $P_{k}$, for a $k>2$ such that $F(k)=\aleph_{0}$, as its only positive $P$-literal, or no positive $P$-literals at all, then it has an infinite model, and in that case $\phi$ has models of cardinality greater than $m$ for any $m\in\No$;
        $P_{k}$, for a $k>2$ such that $F(k)\in\mathbb{N}$, as its only positive $P$-literal, then its spectrum is $[\minmod_{\T_{Eq}}(\phi^{\prime}),F(k)]$, which is a decidable set that does not contain $\aleph_{0}$.
        This implies $\Th{cbfs}$ has computable bounded finite spectra.

        \item $\Th{cbfs}$ does not have computable finite spectra as $k\in\spec_{\Th{cbfs}}(P_{2})$ iff $k\in U$.
        
        \item And it does not have computable bounded spectra as it is not infinitely decidable, since $\aleph_{0}\in \spec_{\Th{cbfs}}(P_{k})$, for $k>2$, if and only if $F(k)=\aleph_{0}$, leading to an undecidable problem.
        \end{enumerate}

        \item $\Th{cfs}$
        
        \begin{enumerate}
        \item Again we start with the decidability of the theory.
        If $P_{1}$ occurs positively in $\phi$, then $\phi$ is $\Th{cfs}$-satisfiable iff $\phi^{\prime}$ is $\T_{Eq}$-satisfiable:
    for the non-trivial direction, we once again use $\T_{Eq}$ is smooth to get an infinite $\T_{Eq}$-interpretation that satisfies $\phi^{\prime}$, make $P_{1}$ true in it and all other $P_{i}$ false.
    If $\phi$ contains no positive $P$-literals, we state its spectrum in $\Th{cfs}$ is $[\minmod_{\T_{Eq}}(\phi^{\prime}),\aleph_{0}]$ if $\phi^{\prime}$ is $\T_{Eq}$-satisfiable, and empty otherwise.
    Indeed, given a $k$ in this interval, we can find a $\T_{Eq}$-interpretation that satisfies $\phi^{\prime}$ with $k$ elements, and we then make all $P_{i}$ false.

    Now suppose the positive $P$-literal in $\phi$ is $P_{k}$, for $k>1$:
    we state $\phi$ is $\Th{cfs}$-satisfiable iff $\minmod_{\T_{Eq}}(\phi^{\prime})\leq F(k)$, what can be decided algorithmically, and in that case its spectrum is $[\minmod_{\T_{Eq}}(\phi^{\prime}),F(k)]$.
    Indeed, given any $n$ in this interval, we find a $\T_{Eq}$-interpretation that satisfies $\phi^{\prime}$ with $n$ elements, make $P_{k}$ true in it and all other $P_{i}$ false.

    \item In summary, the spectrum of a conjunction of literals $\phi$ in $\Th{cfs}$ equals:
        $\{\aleph_{0}\}$ if $\phi$ contains $P_{1}$;
        $[\minmod_{\T_{Eq}}(\phi^{\prime}),\aleph_{0}]$ if $\phi$ contains no positive $P$-literals;
        and, if $\phi$ contains a $P_{k}$ for $k>1$, $[\minmod_{\T_{Eq}}(\phi^{\prime}),F(k)]$.
        All of these sets are decidable, so $\Th{cfs}$ has computable finite spectra.

    \item To see that $\Th{cfs}$ does not have computable bounded spectra it is enough to notice $\Th{cfs}$ is not infinitely decidable:
    $P_{k}$, for $k>1$, has an infinite model iff $F(k)=\aleph_{0}$, which constitutes an undecidable problem.
             
        \end{enumerate}

        \item $\Th{cbs}$
        
        \begin{enumerate}
            \item First we prove $\Th{cbs}$ is decidable.
            If $\phi$ contains $P_{1}$ as positive $P$-literal, or no positive $P$-literals at all, then $\phi$ is $\Th{cbs}$-satisfiable if and only if $\phi^{\prime}$ is $\T_{Eq}$-satisfiable.
    For the non-trivial direction, we use that $\T_{Eq}$ is smooth to obtain an infinite $\T_{Eq}$-interpretation that satisfies $\phi^{\prime}$:
    we turn it into a $\Th{cbs}$-interpretation by making $P_{1}$ true iff it occurs positively in $\phi$, and all other $P_{i}$ false;
    of course the resulting interpretation satisfies $\phi$.

    If, instead, $\phi$ contains $P_{k}$ as a positive $P$-literal, for $k>1$, then we state $\phi$ is $\Th{cbs}$-satisfiable iff $\minmod_{\T_{Eq}}(\phi^{\prime})\leq k$.
    Indeed, using again $\T_{Eq}$ is smooth, we take a $\T_{Eq}$-interpretation that satisfies $\phi^{\prime}$ with $k$ elements, make $P_{k}$ true in it and all other $P_{i}$ false;
    the result is a $\Th{cbs}$-interpretation that satisfies $\phi$.

            \item To summarize, if a conjunction of literals $\phi$ contains $P_{1}$, or no positive $P$-literals at all, then it has an infinite model, and thus $\phi$ always has models of cardinality greater than $m$, for any $m\in\No$.
        If it has $P_{k}$, for a $k>1$, as its positive $P$-literal, then it does not have $\aleph_{0}$ in its spectrum, which equals $\{k\}$.
        Therefore, not only $\Th{cbs}$ is infinitely decidable, but it has computable bounded finite spectra.

        \item It is obvious $\Th{cbs}$ does not have computable finite spectra as $k\in\spec_{\Th{cbs}}(P_{1})$ iff $k\in U$.
        \end{enumerate}

        \item $\Th{cs}$ 
        
        This theory has computable spectra and is decidable, without being gentle nor stably infinite, as proven in \cite{POPL}.

        \item$\Th{si}$
        
        This theory was proven to be decidable and stably infinite, without having computable finite spectra, in \cite{POPL}.

        \item$\T_{\leq n}$
        
        \cite{POPL} proves this theory is decidable and gentle, without being stably infinite.

        \item$\Tinfty$ 
        
        This theory is proven in \cite{POPL} to have computable spectra and to be decidable and smooth (and thus stably infinite) but not gentle.
        It is not gentle as the spectra of all of its formulas is either empty or $\{\aleph_{0}\}$.

        \item $\T_{\geq n+2}^{=n}$
        
        \begin{enumerate}
            \item This theory is decidable as it satisfies the same quantifier-free formulas as $\T_{Eq}$:
            one direction of this is obvious;
            for the other we use the smoothness of $\T_{Eq}$ to obtain an interpretation with any size $m$ greater than or equal to $\minmod_{\T_{Eq}}(\phi)$ and satisfying the given formula $\phi$, which is easily seen to be a $\T_{\geq n+2}^{=n}$-interpretation whenever $m\in\{n\}\cup[n+2,\aleph_{0}]$, and there is always one such value.
            \item To see that $\T_{\geq n+2}^{=n}$ is stably infinite and gentle, we start by noticing 
            \[\spec_{\T_{\geq n+2}^{=n}}(\phi)=[\minmod_{\T_{Eq}}(\phi),\aleph_{0}]\cap(\{n\}\cup[n+2,\aleph_{0}]),\]
            as seen from the item above.
            This set is, of course, cofinite and contains $\aleph_{0}$, and we can compute a representation of its complement since $\minmod_{\T_{Eq}}(\phi)$ is computable.
            \item $\T_{\geq n+2}^{=n}$ is obviously not shiny, as it is not even smooth:
            it has models of size $n$ but none of size $n+1$.
        \end{enumerate}

        \item $\T_{\geq n}$
        
        \cite{POPL} proves this theory is decidable and shiny.
    \end{enumerate}
\end{proof}

\section{Proof of \Cref{thm:cbfs}}

\IpartCBFS*

\begin{proof}
As we mentioned before, by \Cref{lem-fontaine} it suffices to show a decision procedure
that decides, given $\phi_1$ and $\phi_2$, 
whether $\spec_{\T_1}(\phi_1)\cap\spec_{\T_2}(\phi_2)$ is empty.
This decision procedure was shown in \Cref{alg-cbfssig}.
To prove that it works, first, we show that each line in the algorithm can indeed be executed, and that the algorithm terminates.
Checking $\T_2$-satisfiability can be done as $\T_2$ is decidable.
Computing the maximal element of $\N\setminus\spec_{\T_2}(\phi_2)$ is possible because $\T_{2}$ is gentle and stably infinite, so $\spec_{\T_{2}}(\phi_2)$ is cofinite and there is an algorithm that outputs $\N\setminus \spec_{\T_2}(\phi_2)$.
It is possible to check for $\T_1$-satisfiability because $\T_1$ is decidable. 
And, checking whether $n\in\spec_{\T_1}(\phi_1)\setminus(\N\setminus\spec_{\T_2}(\phi_2))$
is possible because we have an algorithm that outputs $\N\setminus\spec_{\T_2}(\phi_2)$, coming from the fact $\T_2$ is gentle, and as we are in the case when $\phi\wedge\neq(x_{1},\ldots,x_{m+1})$ is not $\T_1$-satisfiable, $\spec_{\T_1}(\phi_1)\cap\mathbb{N}$ is computable.
The loop terminates as $m$ is finite, since $\spec_{\T_2}(\phi_2)$ is cofinite and contains $\aleph_{0}$.

Next, we prove correctness.

$(\Rightarrow)$:
First suppose 
$\spec_{\T_1}(\phi_1)\cap\spec_{\T_2}(\phi_2)$ is not empty.
Let $m=\max(\N\setminus\spec_{\T_{2}}(\phi_{2}))$: 
if $\spec_{\T_1}(\phi_1)\cap\spec_{\T_2}(\phi_2)\cap[m+1,\aleph_{0}]$ is not empty, then we will get true from line $5$;
otherwise $\spec_{\T_1}(\phi_1)\cap\spec_{\T_2}(\phi_2)\cap[1,m-1]$ is not empty, and our loop will find exactly the first element in this set.

$(\Leftarrow)$:
Now suppose 
$\spec_{\T_1}(\phi_1)\cap\spec_{\T_2}(\phi_2)$ is  empty.
Therefore we get that $\phi_1\wedge\neq(x_1,\ldots,x_{m+1})$ is not $\T_1$-satisfiable, for $m=\max\N\setminus\spec_{\T_{2}}(\phi_{2})$;
and for no $1\leq n\leq m-1$ we have $n\in\spec_{\T_1}(\phi_1)\setminus(\N\setminus\spec_{\T_2}(\phi_2)$, as this set equals $\spec_{\T_1}(\phi_1)\cap\spec_{\T_2}(\phi_2)$.
So the algorithm returns false.
\end{proof}

\section{Proof of \Cref{thm:cbs}}

\IpartCBS*

\begin{proof}
By \Cref{lem-fontaine}, it suffices to show a procedure that,
given $\phi_1$ and $\phi_2$, decides whether
$\spec_{\T_1}(\phi_1)\cap\spec_{\T_2}(\phi_2)$ is empty.
\Cref{alg-sicscbs} is an adequate algorithm.
First, we show that each line in the algorithm can indeed be executed, and that the algorithm terminates.
It is possible to check whether 
$\aleph_0 \in \spec_{\T_2}(\phi_2)$, as
$\T_2$ has computable bounded spectra, and therefore is infinitely decidable.
The $\T_2$-satisfiability check for
$\phi_2 \land \neq(x_{1},\ldots,x_{m})$
is also possible as $\T_2$ is decidable.
The loop must terminate, as we must have a maximal
$m$ in $\spec_{\T_2}(\phi_2)$ by \Cref{compactness2}.
Checking whether $n\in\spec_{\T_1}(\phi_1)$ is possible
because $\T_1$ has computable spectra.
Checking whether $n\in\spec_{\T_2}(\phi_2)$
is possible because we are in the case where $\phi_2 \land \neq(x_{1},\ldots,x_{m})$ is $\T_2$-unsatisfiable, and $\T_2$
has computable bounded spectra.

Next, we prove correctness.

$(\Rightarrow)$:
First suppose 
$\spec_{\T_1}(\phi_1)\cap\spec_{\T_2}(\phi_2)$ is not empty.
If $\aleph_0$ is in this intersection then it is in particular in $\spec_{\T_2}(\phi_2)$ and so the algorithm will return true. 
Otherwise, there is some $n\in\No$ in the intersection, which will be found during the loop.

$(\Leftarrow)$:
Now suppose 
$\spec_{\T_1}(\phi_1)\cap\spec_{\T_2}(\phi_2)$ is  empty.
In particular, $\aleph_0$ is not in the intersection.
$\T_1$ is stably infinite and so $\aleph_0\in\spec_{\T_1}(\phi_1)$, and therefore
$\aleph_0\notin\spec_{\T_2}(\phi_2)$.
Also, at no iteration of the loop an $n$ in the intersection will be found, and hence false
 will be returned.
\end{proof}

\section{Proof of \Cref{prop:prop-testcbfs}}

\propertiesoftestcbfs*

\begin{proof}
    Much like we have done several times before, a $P$-literal is now a literal of the form $P_{m,n}$ or $\neg P_{p,q}$, the former being also called positive.
    If a conjunction of literals $\phi$ has a $P$-literal and its negation, or more than one positive $P$-literal, it is clear that it is not $\Testcbfs$-satisfiable from the axioms $\{P_{m,n}\rightarrow\neg P_{m^{\prime},n^{\prime}} : (m,n)\neq(m^{\prime},n^{\prime})\}$.
    So assume $\phi$ contains at most one positive $P$-literal $P_{m,n}$, does not contain its negation, and $\phi^{\prime}$ is the result of removing from $\phi$ any $P$-literals.

    If $\phi$ contains $P_{m,n}$, for $n>m$, we get simply $\phi$ is $\Testcbfs$-satisfiable if and only if $\phi^{\prime}$ is $\T_{Eq}$-satisfiable, and its spectra is then $[\minmod_{\T_{Eq}}(\phi^{\prime}),\aleph_{0}]$:
    after all, there are no axioms governing the behavior of $P_{m,n}$ in this case.
    If $\phi$ contains $P_{m,n}$, for some $1\leq n\leq m$, the interesting case, we state its spectrum equals $[M,\aleph_{0}]$, if $n<\minmod_{\T_{Eq}}(\phi^{\prime})$, and $\{n\}\cup[M,\aleph_{0}]$ otherwise, where $M=\max\{m+1,\minmod_{\T_{Eq}}(\phi^{\prime})\}$.
    Indeed, for any cardinality in these sets, take a $\T_{Eq}$-interpretation that satisfies $\phi^{\prime}$, make $P_{m,n}$ true and all other $P_{p,q}$ false;
    the resulting interpretation is a $\Testcbfs$-interpretation that satisfies $\phi$.
    Obviously $\Testcbfs$ is then seen to be decidable, stably infinite ($\aleph_{0}$ is always in the spectra), and gentle (as the spectrum is cofinite and it is easy to output its complement).

    Finally, suppose $\aleph_{0}\notin\spec_{\T}(\phi)$ and that $m=\max\spec_{\T}(\phi)$.
    If $1\leq n\leq m$ is in $\spec_{\T}(\phi)$ there is a $\T$-interpretation $\A$ that satisfies $\phi$ with $n$ elements in its domain;
    by making $P_{m,n}$ true and all other $P_{p,q}$ false this becomes a $\Testcbfs$-interpretation (and thus a $\T\sqcup\Testcbfs$-interpretation) satisfying $\phi\wedge P_{m,n}$.
    Conversely, suppose $\phi\wedge P_{m,n}$ is satisfied by the $\T\sqcup \Testcbfs$-interpretation $\A$:
    because $\A$ satisfies $P_{m,n}$ and is a $\Testcbfs$-interpretation it has either $n$ or more than $m$ elements;
    because $\A$ is a $\T$-interpretation that satisfies $\phi$ it has at most $m$ elements;
    so it has precisely $n$ elements, and therefore $n\in\spec_{\T}(\phi)$.
\end{proof}

\section{Test theories for the proofs of \Cref{thm:G-cbfs,thm:G-cbs}}

\begin{table}[t]
\centering
\centering
\begin{tabular}{|c|c|c|}
\hline
Sig. & \phantom{O}Functions\phantom{O} & Predicates\\
\hline
$\Sigma_{1}$ & $\emptyset$ & $\emptyset$\\
$\Sigma_{P}$ & 
$\emptyset$ & $\{P\}$\\
$\Spn$ & $\emptyset$ & $\{P_{n} : n\in \No\}$\\
$\Spnn$ & $\emptyset$ & $\{P_{m,n} : m,n\in\No\}$\\
\phantom{O}$\Sigmasigma$\phantom{O} & $\emptyset$ & \phantom{O}$\{P_{\phi,n} : \phi\in QF(\Sigma), n\in\No\}$\phantom{O}\\\hline
\end{tabular}
\caption{Signatures}
\label{tab:signatures}
\end{table}

\begin{table}[!htbp]
\centering
\renewcommand{\arraystretch}{1.15}
\centering
\begin{tabular}{|c|c|>{\centering\arraybackslash}p{10cm}|}\hline
\phantom{O}Th.\phantom{O} & \phantom{O}Sig.\phantom{O} & Details 
\\\hline
$\T_{>n}^{P}$ & $\Spn$ &   
\makecell{
\underline{Axiomatization}\\
$\{P_{m}\rightarrow \psi_{\geq n+1} : m\in \unc\}\cup P_{\neq}$ \\[.5em]
\underline{Properties} \\
Decidable, Computable Bounded  Spectra \\ 
 $\aleph_{0}\notin \spec_{\T}(\phi), n=\max\spec_\T(\phi)\Rightarrow [\phi\wedge P_k \text{ is } \T\sqcup\T_{>n}^{P}\text{-SAT}\Leftrightarrow k\notin U ] $\\[.5em]
 \underline{Usage}\\
 \Cref{thm:G-cbfs,thm:G-cbs}
}\\\hline
$\T_{=}^{P}$ & $\Spn$ &  
\makecell{
\underline{Axiomatization}\\
$\{P_{n}\rightarrow\psi_{=n} : n\in\No\}$ \\[.5em]
\underline{Properties} \\
Decidable, Computable Bounded  Spectra \\
  $\phi\wedge P_n\text{ is } 
\T\sqcup\T_{=}^{P}\text{-SAT}\Leftrightarrow 
n\in\spec_{\T}(\phi)$\\[.5em]
 \underline{Usage}\\
 \Cref{thm:G-cbfs,thm:G-cbs}
}\\\hline
% $\T^{n}_{m}$ & $\Spn$ &  
% \makecell{
% \underline{Axiomatization}\\
% $\{\psi_{=m}\vee\psi_{=n}\}\cup\{P_{k}\rightarrow\psi_{=n} : k\in \unc\}\cup P_{\neq}$ \\[.5em]
% \underline{Properties} \\[.5em]
%  \underline{Usage}\\
% }\\\hline
$\T_{\leq}^{S}$ & $\Spn$ &  
\makecell{
\underline{Axiomatization}\\
$\{P_{n}\rightarrow\psi_{\leq F(n)} : F(n)\in\mathbb{N}\}\cup\{\neg\psi_{=n} : n\notin S\}\cup P_{\neq}$\\[.5em]
\underline{Properties} \\
Decidable, Computable Bounded Finite Spectra \\ 
  $\aleph_{0}\in \spec_{\T}(\phi), S=\No\setminus\spec_{\T}(\phi), |S|\geq \aleph_{0}\Rightarrow$\\$[\phi\wedge P_{k}\text{ is }\T\sqcup\T_{\leq}^{S}\text{-SAT}\Leftrightarrow F(k)=\aleph_{0}]$\\[.5em]
 \underline{Usage}\\
 \Cref{thm:G-cbfs}
}\\\hline
$\Tinfty$ & $\Sigma_{1}$ &  
\makecell{
\underline{Axiomatization}\\
$\{\psi_{\geq n} : n\in\No\}$\\[0.5em]
\underline{Properties} \\
Decidable, SI+CS,  Computable Bounded Spectra \\
  $\phi\text{ is } 
\T\sqcup\Tinfty\text{-SAT} \Leftrightarrow 
\aleph_0\in\spec_{\T}(\phi)$\\[.5em]
 \underline{Usage}\\
 \Cref{thm:G-cbs}
}\\\hline
$Th_\T$ & $\Sigmasigma$ &  
\makecell{
\underline{Axiomatization}\\
\Cref{def:theorytheory}\\[0.5em]
\underline{Properties} \\
Decidable, Computable Bounded Finite Spectra\\
$\phi\wedge P_{\phi,n}\text{ is }\T\sqcup Th_{\T}\text{-SAT}\Leftrightarrow n>|\No\setminus\spec_{\T}(\phi)|$\\[.5em]
%\\[.5em]
 \underline{Usage}\\
 \Cref{thm:G-cbfs}
}\\\hline
\hline
\makecell{
$\Testcbfs$ \\
(new)
}
& $\Spnn$ &
\makecell{%
\underline{Axiomatization}\\
$\{P_{m,n}\rightarrow(\psi_{=n}\vee\psi_{\geq m+1}) : 1\leq n\leq m\}\cup \{P_{m,n}\rightarrow\neg P_{p,q} : (m,n)\neq (p,q)\}$\\[0.5em]
\underline{Properties} \\
Decidable, Stably Infinite, Gentle \\
 $\aleph_{0}\notin \spec_{\T}(\phi), m=\max\spec_\T(\phi)\Rightarrow$\\$[\phi\wedge P_{m,n} \text{ is } \T\sqcup\Testcbfs\text{-SAT}\Leftrightarrow n\in\spec_{\T}(\phi) ]$\\[.5em]
 \underline{Usage}\\
 \Cref{thm:G-cbfs,thm:G-cbs}
}\\\hline
\end{tabular}
\caption{Test theories. 
$F:\No\rightarrow\No\cup\{\Inf\}$ is a function such that:
$(i)$ $\{(m,n) \in \No\times\No : F(m)\geq n\}$ is decidable; and $(ii)$ $\{n\in\No : F(n)=\aleph_{0}\}$ is undecidable.
$\unc \subset \No$ is an undecidable set.
Let $P_{\neq}$ denote the set $\{P_{i}\rightarrow\neg P_{j} : i\neq j\}$.
}
\label{tab:testtheories}
\label{tab:testtheoriesproperties}
\end{table}

The test theories are presented in \Cref{tab:testtheories}.
In it, $F:\No\rightarrow\No\cup\{\Inf\}$ is a function such that:
$(i)$ $\{(m,n) \in \No\times\No : F(m)\geq n\}$ is decidable; 
and $(ii)$ $\{n\in\No : F(n)=\aleph_{0}\}$ is undecidable.
One example of such a function takes the index of a Turing machine (according to some enumeration) and returns the number of steps it takes to halt (or $\aleph_0$ if it does not halt). Let $\unc \subset \No$ be an undecidable set.
We will use the signatures found in \Cref{tab:signatures}.
For $\Spn$ we will call literals of the form $P_{i}$ or $\neg P_{j}$ $P$-literals, the former being also said to be positive;
in $\Sigmasigma$ the $P$-literals are those of the form $P_{\phi,n}$ or $\neg P_{\psi,m}$.

For $\T^{P}_{>n}$, a theory dependent on an $n\in\No$, if $P_{m}$ is true in an interpretation, for some $m\in U$, then the interpretation has at least $n+1$ elements;
that is, $\spec_{\T^{P}_{>n}}(P_{m})=[n+1,\aleph_{0}]$ if $m\in U$, and $[1,\aleph_{0}]$ otherwise.
Now a $\T^{P}_{=}$-interpretation where $P_{n}$ is true must have exactly $n$ elements;
so $\spec_{\T^{P}_{=}}(P_{n})=\{n\}$.
% The theory $\T_{m}^{n}$ depends on two numbers $m$ and $n$, and we will usually assume that $m<n$: 
% an interpretation for this theory must have either $m$ or $n$ elements, and if $P_{k}$ is true, for a $k\in U$, then we must have exactly $n$ elements;
% that is, $\spec_{\T_{m}^{n}}(P_{k})=\{n\}$ if $k\in U$, and $\{m,n\}$ otherwise.
The theory $\T_{\leq}^{S}$, dependent on an infinite computable set $S \subseteq \No$, has only interpretations of cardinality in $S \cup \{\aleph_0\}$, and if $P_{n}$ is true and $F(n)$ is finite, then we have at most $F(n)$ elements;
so $\spec_{\T}(P_{n})=S\cap[1,F(n)]$ if $F(n)$ is finite, and $S\cup\{\aleph_{0}\}$ otherwise.
$\Tinfty$ is simple:
it has only infinite interpretations.
% The only new test theory for this paper is the $\Spnn$-theory $\Testcbfs$:
% for any predicate $P_{m,n}$ with $1\leq n\leq m$, an interpretation where $P_{m,n}$ holds has either exactly $n$ elements, or at least $m+1$ elements;
% much like $\T_{=}^{P}$ it helps us compute whether an element is in the spectrum of a formula, but it does that in a way such that the spectra of all its satisfiable quantifier-free formulas is cofinite.
One more test theory, $Th_{\T}$, is defined below as its axiomatization would not fit in the table.

\begin{definition}\label{def:theorytheory}
Let $\T$ be an arbitrary $\Sigma$-theory with computable finite spectra. Given a  quantifier-free $\Sigma$-formula $\phi$, let $S_\phi = \No \setminus \spec_\T(\phi)$. 
For each $i\in\No$, let 
$s_{\phi,i}$ be its $i$th element in increasing order,
or $\Inf$ if $S_\phi$ has fewer than $i$ elements. Now, let $Th_\T$ be the $\Sigmasigma$-theory axiomatized by
\begin{align*}
    &\{P_{\phi,n} \rightarrow \psi_{=s_{\phi,n}} : \phi \in \qf{\Sigma},\; n \in \No,\; s_{\phi,n} < \Inf\} \cup \\
    &\{P_{\phi,n} \rightarrow \psi_{>m} : \phi \in \qf{\Sigma},\; m,n \in \No,\; s_{\phi,n} = \Inf\} \cup \\
    &\{P_{\phi,n} \rightarrow \lnot P_{\phi',n'} : (\phi,n) \neq (\phi',n')\}.
\end{align*}
\end{definition}

So, in a $Th_{\T}$-interpretation, the truth of $P_{\phi,n}$ implies the interpretation has exactly $s_{\phi,n}$ elements (except when $s_{\phi,n}=\aleph_{0}$, when the interpretation can have any infinite number of elements).
Notice that in all theories but $\Tinfty$ (which has no $0$-ary predicates) and $\T_{=}^{P}$ (where this is implied by the axioms), we also demand that no two $0$-ary predicates can be simultaneously true:
this makes the proofs related to them much simpler and reasonably similar to each other.

\begin{restatable}{proposition}{tableiscorrect}\label{theo:test-theories}
    For each theory in
    \Cref{tab:testtheoriesproperties}, the properties that are listed under it hold.
\end{restatable}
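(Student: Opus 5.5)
I will go through the table one theory at a time, with a single template. Each test theory has only $0$-ary predicate symbols, and (apart from $\Tinfty$ and $\T_{=}^{P}$) an axiom set forcing at most one of them to be true. So I first reduce a quantifier-free formula to a disjunction of conjunctions of literals $\phi$. Any such $\phi$ containing two positive $P$-literals, or a $P$-literal together with its negation, is unsatisfiable. Otherwise let $\phi'$ be $\phi$ with its $P$-literals removed, and let $\mu=\minmod_{\T_{Eq}}(\phi')$, which is computable.

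Using smoothness of $\T_{Eq}$ exactly as in the proof of \Cref{theo:ex-diagram-correct}, I get $\spec(\phi)=[\mu,\aleph_0]\cap\spec(P)$, where $P$ is the unique positive predicate, or $\spec(P)=$ the spectrum of $true$ if there is none. This yields the following spectra.
\begin{itemize}
\item For $\T^{P}_{>n}$: the spectrum is $[\max(\mu,n+1),\aleph_0]$ or $[\mu,\aleph_0]$, according to whether the positive predicate $P_m$ has $m\in U$. The theory is infinitely decidable because $\aleph_0$ is always in the spectrum. It has computable bounded finite spectra because the spectrum is always unbounded, so the CBFS algorithm may answer arbitrarily. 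Satisfiability is just $\T_{Eq}$-satisfiability of $\phi'$.
\item For $\T_{=}^{P}$: the spectrum is $\{k\}\cap[\mu,\aleph_0]$ or $[\mu,\aleph_0]$, all decidable. So the theory is decidable and even has computable spectra.
\item For $\T_{\le}^{S}$: the spectrum is $S\cap[\mu,F(k)]$ when $F(k)$ is finite, and $(S\cup\{\aleph_0\})\cap[\mu,\aleph_0]$ otherwise. Satisfiability is decidable: since $S$ is infinite and computable, the unbounded case is always satisfiable, and in the other case we search $S\cap[\mu,F(k)]$, which is bounded using the decidability of $F(k)\ge n$. For CBFS, given $(\phi,m,n)$ we test whether $F(k)\ge m$ and whether $S\cap[m,\infty)$ is nonempty; if the spectrum is bounded below $m$, then membership of $n$ reduces to $n\in S\cap[\mu,F(k)]$, which is decidable.
\item For $\Tinfty$: everything is immediate.
\item For $Th_\T$: the spectrum of $P_{\phi,n}$ is $\{s_{\phi,n}\}$ or $[\max(\mu,\cdot),\aleph_0]$. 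Decidability of $\{s_{\phi,n}\}\cap[\mu,\aleph_0]$ and the CBFS property use that $\T$ has computable finite spectra: given the bound $m$, we can compute whether $s_{\phi,n}$ is at most $m$ and, if so, find it.
\end{itemize}
The property of $\Testcbfs$ is already \Cref{prop:prop-testcbfs}.

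Next come the combination-specific properties, proved by the two-direction argument of \Cref{prop:prop-testcbfs}. For the forward direction, take a $\T$-model of $\phi$ of the relevant size and interpret the predicates so that only $P_k$ is true. For the reverse direction, a $\T\sqcup T'$-model of $\phi\wedge P_k$ has a size that lies in both spectra. The cases are as follows.
\begin{itemize}
\item For $\T^P_{>n}$ with $n=\max\spec_\T(\phi)$ finite: if $k\in U$, a model must have size greater than $n$, which is impossible; if $k\notin U$, any $\T$-model works.
\item For $\T_{=}^P$: the claim is immediate.
\item For $\Tinfty$: by \Cref{LowenheimSkolem}, any infinite cardinal in the spectrum is as good as $\aleph_0$.
\item For $\T^S_\le$ with $S$ the complement of $\spec_\T(\phi)$ in $\No$: the finite cardinalities available in $\T$ are exactly those outside $S$, so only an infinite model can witness $\phi\wedge P_k$, which forces $F(k)=\aleph_0$. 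Conversely, an infinite $\T$-model works.
\item For $Th_\T$: $\phi\wedge P_{\phi,n}$ is satisfiable iff $s_{\phi,n}\in\spec_\T(\phi)$, which is false whenever $s_{\phi,n}$ is finite. So it is satisfiable iff $s_{\phi,n}=\aleph_0$ and $\aleph_0$ is in the spectrum.
\end{itemize}

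The main obstacle is $Th_\T$. Its stated equivalence $\Leftrightarrow n>|\No\setminus\spec_\T(\phi)|$ tacitly needs $\aleph_0\in\spec_\T(\phi)$, or else a reading of the table's condition under that hypothesis. I would make that hypothesis explicit, noting that it matches how the theory is used in \Cref{thm:G-cbfs}. I would also check carefully that the decision procedure for $Th_\T$ handles conjunctions whose positive literal $P_{\psi,n}$ involves a formula $\psi$ different from the queried $\phi$. It should, since everything depends only on computing $s_{\psi,n}$ up to a bound via the finite-spectrum algorithm for $\psi$.
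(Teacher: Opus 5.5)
Your handling of the combination-specific properties follows the paper's two-direction argument. However, the ``main obstacle'' you identify for $Th_{\T}$ does not exist. Adding the hypothesis $\aleph_0\in\spec_{\T}(\phi)$ would leave you proving a weaker statement than the table asserts, and would wrongly suggest that the stated equivalence may fail.

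Your own characterization already contains what you need: $\phi\wedge P_{\phi,n}$ is $\T\sqcup Th_{\T}$-satisfiable iff $s_{\phi,n}=\aleph_0$ and $\aleph_0\in\spec_{\T}(\phi)$. The missing step is that the second conjunct follows from the first by \Cref{compactness2}:
\begin{itemize}
\item $s_{\phi,n}=\aleph_0$ means $|S_\phi|<n$, so $S_\phi$ is finite.
\item Hence $\spec_{\T}(\phi)$ contains all but finitely many elements of $\No$.
\item So $\phi$ is $\T$-satisfiable with an infinite spectrum, which by \Cref{compactness2} forces $\aleph_0\in\spec_{\T}(\phi)$.
\end{itemize}
The equivalence with $n>|\No\setminus\spec_{\T}(\phi)|$ therefore holds unconditionally. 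For $\T$-unsatisfiable $\phi$ both sides are false, since $S_\phi=\No$. This is exactly how the paper closes the converse direction.

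For the ``decidable / bounded spectra'' entries you take a different, more self-contained route. You compute every spectrum explicitly via smoothness of $\T_{Eq}$. The paper instead cites \cite{POPL} for decidability and the stronger properties (stable infiniteness, computable (finite) spectra), and then applies \Cref{lem:SI->CBS}. Your route works, but two slips for $Th_{\T}$ need fixing:
\begin{itemize}
\item When $s_{\psi,n}=\aleph_0$, the spectrum of a conjunction with positive literal $P_{\psi,n}$ is $\{\aleph_0\}$, by the axioms $P_{\psi,n}\rightarrow\psi_{>m}$. It is not an interval.
\item If your bounded-spectra procedure answers arbitrarily whenever $s_{\psi,n}>m$, it is wrong when $m<s_{\psi,n}<\mu$. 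In that case the formula is unsatisfiable, so the required answer is false.
\end{itemize}
It is cleaner to note that $Th_{\T}$ has computable finite spectra outright. A finite $j$ lies in the spectrum iff three conditions hold: the $P$-free part is $\T_{Eq}$-satisfiable, $j\geq\mu$, and $s_{\psi,n}=j$. The last condition is decidable using the finite-spectrum algorithm of $\T$.
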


% \begin{proposition}\label{prop:list-of-properties}
%     \begin{enumerate}
%         \item $\T^{P}_{>n}$ has computable bounded spectra, and thus computable bounded finite spectra.
%         \item $\Tinfty$ has computable bounded spectra, and is stably infinite with computable spectra.
%         \item $\T^{P}_{=}$ has computable bounded spectra, and thus computable bounded finite spectra.
%         \item $\T^{S}_{\leq}$ has computable bounded finite spectra.
%         \item $Th_{\T}$ has computable bounded finite spectra.
%     \end{enumerate}
% \end{proposition}

\begin{proof}
We start with the theories defined in \cite{POPL}.
    \begin{enumerate}
        \item That $\T^{P}_{>n}$ is decidable and stably infinite was proven in \cite{POPL}, and stable infiniteness implies having computable bounded spectra (which in turn implies having computable bounded finite spectra) by \Cref{lem:SI->CBS}.

        Now take a $\T$-satisfiable, quantifier-free $\phi$ without $\aleph_{0}$ in its spectrum, and by \Cref{compactness2} we get there exists a finite $n=\max\spec_{\T}(\phi)$.
        If $\phi\wedge P_{m}$ is $\T\sqcup\T^{P}_{>n}$-satisfiable, then $P_{m}$ must have a model with at most $n$ elements, and by the axiomatization of $\T^{P}_{>n}$ we get $m\notin U$.
        Reciprocally, if $m\notin U$, we take a $\T$-interpretation that satisfies $\phi$ with $n$ elements, make $P_{m}$ true and all other $P_{k}$ false, and the result is a $\T\sqcup\T^{P}_{>n}$-interpretation.

        \item \cite{POPL} proved that $\T^{P}_{=}$ is decidable and has computable spectra, and this last property implies having computable bounded spectra by \Cref{lem:CFS->CBFS}, and the facts that having computable spectra is equivalent to being infinitely decidable and having computable finite spectra, and that having computable bounded spectra is equivalent to being infinitely decidable and having computable bounded finite spectra.

        Suppose $\phi\wedge P_{n}$ is $\T\sqcup\T^{P}_{=}$-satisfiable:
        since $P_{n}$ only has models of cardinality $n$, from the axiom $P_{n}\rightarrow\psi_{=n}$, we get $n\in \spec_{\T}(\phi)$.
        Reciprocally, if $n\in\spec_{\T}(\phi)$, we take a $\T$-interpretation that satisfies $\phi$ with $n$ elements, make $P_{n}$ true in it and all other $P_{k}$ false, and the result is a $\T\sqcup\T^{P}_{=}$-interpretation that satisfies $\phi\wedge P_{n}$.

        \item $\T_{\leq}^{S}$ is decidable and has computable finite spectra as proven in \cite{POPL}, and \Cref{lem:CFS->CBFS} shows the latter implies having computable bounded finite spectra.

        Next we suppose $S=\No\setminus\spec_{\T}(\phi)$ is infinite and $\aleph_{0}\in\spec_{\T}(\phi)$.
        If $\phi\wedge P_{k}$ is $\T\sqcup\T_{\leq}^{S}$-satisfiable, since $\T_{\leq}^{S}$ only has models of cardinality in $S$ (from the axioms $\neg\psi_{=n}$ for $n\notin S$) or infinite, and $\phi$ only has models of cardinality not in $S$, we get $P_{k}$ must have an infinite model, meaning $F(k)=\aleph_{0}$.
        Reciprocally, if $F(k)=\aleph_{0}$, there is an infinite $\T_{\leq}^{S}$-interpretation that satisfies $P_{k}$, and since $\phi$ has an infinite model by hypothesis we get $\phi\wedge P_{k}$ is $\T\sqcup\T_{\leq}^{S}$-satisfiable.
        
        \item $\Tinfty$ is decidable and stably infinite, and has computable spectra by \cite{POPL}, and, again, both of these properties imply having computable bounded spectra by \Cref{lem:SI->CBS}.

        Now, if $\phi$ is $\T\sqcup\Tinfty$-satisfiable, since $\Tinfty$ only has infinite models we get $\aleph_{0}\in\spec_{\T}(\phi)$.
        Reciprocally, if $\aleph_{0}\in\spec_{\T}(\phi)$, there is an infinite $\T$-interpretation that satisfies $\phi$, and of course this interpretation doubles as a $\Tinfty$-interpretation, meaning $\phi$ is $\T\sqcup\Tinfty$-satisfiable.

        \item It was shown in \cite{POPL} that $Th_{\T}$ is decidable and has computable finite spectra, and, again, this property implies having computable bounded finite spectra as shown in \Cref{lem:CFS->CBFS}.

        Now suppose $\phi\wedge P_{\phi,n}$ is $\T\sqcup Th_{\T}$-satisfiable, so $\phi$ and $P_{\phi,n}$ have models of equal cardinalities:
        if $S_{\phi}=\No\setminus\spec_{\T}(\phi)$\footnote{Notice this is a set of finite numbers.} has $n$ elements or more, and $s_{\phi,i}$ is its $i$th element (if it has one, otherwise $s_{\phi,i}=\aleph_{0}$), we have $s_{\phi,n}$ must be finite;
        from the axiom $P_{\phi,n}\rightarrow\psi_{=s_{\phi,n}}$, valid as $s_{\phi,n}$ is finite, we obtain $P_{\phi,n}$ has only models of cardinality $s_{\phi,n}$.
        This leads to a contradiction, as $s_{\phi,n}$ is not in the spectrum of $\phi$, and we must therefore conclude $|\No\setminus\spec_{\T}(\phi)|<n$.

        Conversely, let $S_{\phi}=\No\setminus\spec_{\T}(\phi)$ and suppose $|S_{\phi}|<n$:
        if $s_{\phi,i}$ is the $i$th element of $S_{\phi}$ if it exists, and $\aleph_{0}$ otherwise, we get $s_{\phi,n}=\aleph_{0}$.
        The fact that $|S_{\phi}|<n$ furthermore implies $\spec_{\T}(\phi)$ is infinite, and by \Cref{compactness2} this means $\aleph_{0}\in\spec_{\T}(\phi)$, so take an infinite $\T$-interpretation $\A$ that satisfies $\phi$, make $P_{\phi,n}$ true and all other $P_{\phi,m}$ false.
        From the axioms $P_{\phi,n}\rightarrow \psi_{\geq m}$, valid when $s_{\phi,n}=\aleph_{0}$, we conclude $\A$ is also a $Th_{\T}$-interpretation now, so $\phi\wedge P_{\phi,n}$ is $\T\sqcup Th_{\T}$-satisfiable.
    \end{enumerate}

    The proofs for $\Testcbfs$ can be found in \Cref{prop:prop-testcbfs}.
\end{proof}

\section{Proof of \Cref{thm:G-cbfs}}

\IIpartCBFS*

\begin{proof}
    \begin{enumerate}
        \item Suppose that $\T$ is decidable and can be combined with every decidable theory with computable bounded finite spectra.
        It has computable finite spectra, as $P_{k}\wedge \phi$ is $\T\sqcup\Testcfs$-satisfiable if and only if $k\in \spec_{\T}(\phi)$, and $\Testcfs$ is decidable and has computable bounded finite spectra (as shown in \Cref{tab:testtheoriesproperties}), due to \Cref{lem:CFS->CBFS} and the fact $\Testcfs$ has computable finite spectra.

        Now, we prove that $\T$ is stably infinite.
        If $\phi$ is $\T$-satisfiable but $\aleph_{0}\notin \spec_{\T}(\phi)$, by \Cref{compactness2} there is a maximum $n$ in $\spec_{\T}(\phi)$.
        Thus $P_{k}\wedge\phi$ is $\T\sqcup\Testsi$-satisfiable if and only if $k\notin U$, a contradiction as $\Testsi$ is decidable and has computable bounded finite spectra, as we stated in \Cref{tab:testtheoriesproperties}.

Next, we show gentleness. For that, we first
         show $\spec_\T(\phi)$ is cofinite, for all $\T$-satisfiable quantifier-free $\phi$. If it is not, then $S = \No \setminus \spec_\T(\phi)$ is infinite. 
        Since $\T$ has computable finite spectra, $S$ is a decidable set. 
        Yet $P_{k}\wedge\phi$ is $\T\sqcup\Testg$-satisfiable if and only if $F(k)=\Inf$ (see the proof of this in \Cref{theo:test-theories}), an undecidable problem, despite $\Testg$ being decidable and having computable bounded finite spectra, as stated in \Cref{tab:testtheoriesproperties}.

    We now know $\spec_{\T}(\phi)$ must be cofinite for every quantifier-free $\phi$, but we must still prove gentleness. 
    Since $\T$ has computable finite spectra, it suffices to show that, given $\phi$, we can compute the cardinality of $S = \No \setminus \spec_{\T}(\phi)$ (and then we can output $S$ by testing whether $n\in S$, starting at $n=1$, until we have $|S|$ elements, and otherwise we increase $n$ to $n+1$). 
    Checking \Cref{tab:testtheoriesproperties} we see $Th_{\T}$ is decidable and has computable bounded finite spectra, as it has computable finite spectra, so $\T \sqcup Th_{\T}$ is decidable. 
    Now, $\phi \wedge P_{\phi,n}$ is $\T \sqcup Th_{\T}$-satisfiable if and only if $n > |S|$. Thus, by testing formulas of this form for satisfiability, we can compute the cardinality of $S$.

        \item Suppose that $\T$ is decidable and can be combined with every decidable, gentle and stably infinite theory, and we take a quantifier-free $\phi$, and $m,n\in\No$:
        our algorithm simply returns whether $\phi\wedge P_{m,n}$ is $\T\sqcup\Testcbfs$-satisfiable;
        in case $\phi\wedge\neq(x_{1},\ldots,x_{m})$ is not $\T$-satisfiable this happens if and only if $n\in \spec_{\T}(\phi)$, for $1\leq n\leq m$, so the algorithm has the required properties.
        We are then done as $\T\sqcup\Testcbfs$ is indeed decidable, given that $\Testcbfs$ is decidable, gentle and stably infinite, see \Cref{tab:testtheoriesproperties}.

    \end{enumerate}
\end{proof}

\section{Proof of \Cref{thm:G-cbs}}

\thmGcbs*

\begin{proof}
    \begin{enumerate}
    
        \item Suppose that $\T$ is decidable and can be combined with every decidable theory with computable bounded spectra, but is not stably infinite.
        Then there is a $\T$-satisfiable quantifier-free $\phi$ such that $\aleph_{0}\notin\spec_{\T}(\phi)$, and by \Cref{compactness2} there is a maximum $n$ in $\spec_{\T}(\phi)$.
        Therefore $\phi\wedge P_{k}$ is $\T\sqcup\Testsi$-satisfiable iff $k\notin U$, despite the fact that $\Testsi$ is decidable and has computable bounded spectra, as stated in \Cref{tab:testtheoriesproperties}, and therefore $\T\sqcup\Testsi$ should be decidable, by \Cref{thm:cbs}.
        This means $\T$ must be stably infinite.
        
        Now, $\T$ has computable spectra, as:
        $\T$ is infinitely decidable, given that it is stably infinite and \Cref{prop:known-relations-properties};
        and we can compute whether $m\in\spec_{\T}(\phi)$ by checking if $\phi\wedge P_{m}$ is $\T\sqcup\Testcfs$-satisfiable, as $\Testcfs$ is decidable and has computable bounded spectra, as proven in \Cref{theo:test-theories}.

        \item Suppose now $\T$ is decidable and can be combined with every decidable, stably infinite theory with computable spectra.
        It is infinitely decidable, as $\aleph_{0}\in\spec_{\T}(\phi)$ if and only if $\phi$ is $\T\sqcup\Tinfty$-satisfiable, and $\Tinfty$ is decidable, stably infinite and has computable spectra (what was shown in \Cref{tab:testtheoriesproperties}).
        Suppose that $\aleph_{0}\notin \spec_{\T}(\phi)$, and by \Cref{compactness2} there exists a maximum $m$ in $\spec_{\T}(\phi)$;
        then $\phi\wedge P_{m,n}$ is $\T\sqcup\Testcbfs$-satisfiable iff $n\in\spec_{\T}(\phi)$, and so $\spec_{\T}(\phi)$ is computable, as $\Testcbfs$ is decidable, stably infinite and gentle, and thus has computable spectra by \Cref{prop:known-relations-properties}, as stated in \Cref{tab:testtheoriesproperties}.
    \end{enumerate}
\end{proof}

\section{Proof of \Cref{prop:bigselfcomb}}
\bigselfcomb*

\begin{proof}
    \begin{enumerate}
        \item We will use the quantifier-free formulas $P_{1}^{\prime}\wedge P_{k}$, for $k>1$.
        They are $\Th{cbfs}\sqcup\Th{cbfs}$-satisfiable iff $F(k)=\aleph_{0}$, and deciding whether that happens cannot be achieved algorithmically.

        Indeed, start by assuming $P_{1}^{\prime}\wedge P_{k}$ is $\Th{cbfs}\sqcup\Th{cbfs}$-satisfiable:
        as $P_{1}^{\prime}$ has only infinite models (by the axioms $P_{1}\rightarrow\psi_{\geq k}$), this implies $P_{k}$ has an infinite model, and therefore that $F(k)=\aleph_{0}$ (since, otherwise, $P_{k}\rightarrow \psi_{\leq F(k)}$).
        Conversely, if $F(k)=\aleph_{0}$, we transform an infinite set into a $\Th{cbfs}$-interpretation by making $P_{1}^{\prime}$ and $P_{k}$ true, and all other $P_{i}$ and $P_{j}^{\prime}$ false;
        of course the resulting interpretation satisfies $P_{1}^{\prime}\wedge P_{k}$.

    \item Consider the quantifier-free formulas $P_{1}^{\prime}\wedge P_{k}$.
    They are $\Th{cbs}\sqcup\Th{cbs}$-satisfiable if and only if $k\in U$, an undecidable problem, proving the theory is undecidable.

    Indeed, suppose first that $P_{1}^{\prime}\wedge P_{k}$ is $\Th{cbs}\sqcup\Th{cbs}$-satisfiable, and that $k\notin U$:
    from the axioms $P_{1}^{\prime}\rightarrow \neg\psi_{=k}$ and $P_{k}\rightarrow\psi_{=k}$, this leads to a contradiction.
    Conversely, suppose $k\in U$:
    we take an interpretation with $k$ elements in its domain, make $P_{1}^{\prime}$ true and all other $P_{i}^{\prime}$ false, $P_{k}$ true and all other $P_{i}$ false, and the result is a $\Th{cbs}\sqcup\Th{cbs}$-interpretation that satisfies $P_{1}^{\prime}\wedge P_{k}$.

        \item Consider the quantifier-free formulas $P_{1}^{\prime}\wedge P_{k}$, for $k>1$:
        they are $\Th{cfs}\sqcup\Th{cfs}$-satisfiable iff $F(k)=\aleph_{0}$, which constitutes an undecidable problem.

        Suppose first that $P_{1}^{\prime}\wedge P_{k}$ is $\Th{cfs}\sqcup\Th{cfs}$-satisfiable:
        since $P_{1}^{\prime}$ has only infinite models, this means $P_{k}$ has an infinite model, which only happens if $F(k)=\aleph_{0}$.
        Conversely, if $F(k)=\aleph_{0}$, we take an infinite set, make $P_{1}^{\prime}$ and $P_{k}$ true and all other $P_{i}$ and $P_{j}^{\prime}$ false:
        the result is a $\Th{cfs}$-interpretation that satisfies $P_{1}^{\prime}\wedge P_{k}$.
        % \item (Non self-combinability of $\T(m,n)$ was proven in \Cref{Tmnnonself}.)
    \end{enumerate}
\end{proof}

\end{document}

%% file: mycommands.tex
\newcommand{\Exists}[1]{\exists\,#1.\:}

\newcommand{\Forall}[1]{\forall\,#1.\:}

\newcommand{\A}{\ensuremath{\mathcal{A}}}

\newcommand{\B}{\ensuremath{\mathcal{B}}}

\newcommand{\T}{\ensuremath{\mathcal{T}}}

\newcommand{\G}{\ensuremath{gen}}

\newcommand{\N}{\ensuremath{\mathbb{N}_{\omega}}}

\newcommand{\qf}[1]{\ensuremath{QF(#1)}}

\newcommand{\minmod}{\ensuremath{\mathrm{minmod}}}

\newcommand{\Tgeqn}{\ensuremath{\T_{\geq n}}}

\newcommand{\Tinfty}{\ensuremath{\T_{\infty}}}

\newcommand{\Tleqn}{\ensuremath{\T_{\leq n}}}

\newcommand{\NNEQ}[1]{\ensuremath{\neq(#1_{1},\ldots,#1_{n})}}

\newcommand{\distinct}[1]{\NNEQ{x}}

\renewcommand{\int}[2]{\mathcal{#1}/\mathcal{#2}}

\newcommand{\Sp}{\ensuremath{\Sigma_{P}}}

\newcommand{\Spn}{\ensuremath{\Sigma_{P}^{\mathbb{N}}}}

\newcommand{\Spnn}{\ensuremath{\Sigma_{P}^{\mathbb{N} \times \mathbb{N}}}}

\newcommand{\No}{\ensuremath{\mathbb{N}^{*}}}

\newcommand{\Inf}{\ensuremath{\aleph_{0}}}

\newcommand{\spec}{\ensuremath{\textit{Spec}}}

\newcommand{\dom}[1]{\ensuremath{dom(#1)}}

\newcommand{\Th}[1]{\ensuremath{\T_{\langle\scriptscriptstyle #1\rangle}}}

\newcommand{\class}[1]{\ensuremath{\mathfrak{T}_{\text{#1}}}}

\newcommand{\Testsi}{\ensuremath{\T_{>n}^{P}}}

\newcommand{\Testcfs}{\ensuremath{\T_{=}^{P}}}

\newcommand{\Testg}{\ensuremath{\T_{\leq}^{S}}}

\newcommand{\Testcbfs}{\ensuremath{\T_{=}^{>}}}

\newcommand{\Sigmasigma}{\ensuremath{Sig^{P}_{\Sigma}}}

\newcommand{\Gal}{\ensuremath{G}}

\newcommand{\unc}{\ensuremath{U}}

%% file: bib.bib
@article{OppenSI,
title = {Complexity, convexity and combinations of theories},
journal = {Theoretical Computer Science},
volume = {12},
number = {3},
pages = {291-302},
year = {1980},
issn = {0304-3975},
doi = {https://doi.org/10.1016/0304-3975(80)90059-6},
author = {Derek C. Oppen}
}

@InProceedings{CADE,
author="Toledo, Guilherme V.
and Zohar, Yoni
and Barrett, Clark",
editor="Pientka, Brigitte
and Tinelli, Cesare",
title="Combining Combination Properties: An Analysis of Stable Infiniteness, Convexity, and Politeness",
booktitle="Automated Deduction -- CADE 29",
year="2023",
publisher="Springer Nature Switzerland",
address="Cham",
pages="522--541"
}

@InProceedings{POPL,
author="Przybocki, Benjamin 
and Toledo, Guilherme V.
and Zohar, Yoni",
title="Characterizing Sets of Theories That Can Be Disjointly Combined",
booktitle=" 53rd ACM SIGPLAN Symposium on Principles of Programming Languages (POPL 2026)",
year="2026"
}

@inproceedings{CasalRasga,
  author    = {Filipe Casal and
               Jo{\~{a}}o Rasga},
  title     = {Revisiting the Equivalence of Shininess and Politeness},
  booktitle = {{LPAR}},
  series    = {Lecture Notes in Computer Science},
  volume    = {8312},
  pages     = {198--212},
  publisher = {Springer},
  year      = {2013},
  doi={10.1007/978-3-642-45221-5_15}
}

@book {marker2002,
    AUTHOR = {Marker, David},
     TITLE = {Model theory: {An} introduction},
    SERIES = {Graduate Texts in Mathematics},
    VOLUME = {217},
 PUBLISHER = {Springer-Verlag, New York},
      YEAR = {2002},
     PAGES = {viii+342},
      ISBN = {0-387-98760-6},
}

@InProceedings{gentle,
author="Fontaine, Pascal",
editor="Ghilardi, Silvio
and Sebastiani, Roberto",
title="Combinations of Theories for Decidable Fragments of First-Order Logic",
booktitle="Frontiers of Combining Systems",
year="2009",
publisher="Springer Berlin Heidelberg",
address="Berlin, Heidelberg",
pages="263--278",
isbn="978-3-642-04222-5"
}

@TECHREPORT{shiny,
  author =	 {Cesare Tinelli and Calogero Zarba},
  title =	 {Combining decision procedures for theories in sorted logics},
  institution =	 {Department of Computer Science, The University of Iowa},
  number = 	 {04-01},
  month =	 feb,
  year =	 2004,
  url =          {ftp://ftp.cs.uiowa.edu/pub/tinelli/papers/TinZar-RR-04.pdf},
}

@InProceedings{polite,
    TITLE = {Combining data structures with nonstably infinite theories using many-sorted logic},
    AUTHOR = {Ranise, Silvio and Ringeissen, Christophe and Zarba, Calogero G.},
    URL = {https://hal.inria.fr/inria-00000570},
    BOOKTITLE = {5th International Workshop on Frontiers of Combining Systems - FroCoS'05},
    ADDRESS = {Vienna},
    EDITOR = {Bernard Gramlich},
    PUBLISHER = {Springer},
    SERIES = {Lecture Notes in Artificial Intelligence},
    VOLUME = {3717},
    PAGES = {48--64},
    YEAR = {2005},
    MONTH = Sep,
    DOI = {10.1007/11559306}
}

@article{NelsonOppen,
author = {Nelson, Greg and Oppen, Derek C.},
title = {Simplification by Cooperating Decision Procedures},
year = {1979},
issue_date = {Oct. 1979},
publisher = {Association for Computing Machinery},
address = {New York, NY, USA},
volume = {1},
number = {2},
issn = {0164-0925},
url = {https://doi.org/10.1145/357073.357079},
doi = {10.1145/357073.357079},
journal = {ACM Trans. Program. Lang. Syst.},
month = oct,
pages = {245–257},
numpages = {13}
}

@incollection{BSST21,
   author = {Clark Barrett and Roberto Sebastiani and Sanjit Seshia and
	Cesare Tinelli},
   editor = {Armin Biere and Marijn J. H. Heule and Hans van Maaren and
	Toby Walsh},
   title = {Satisfiability Modulo Theories},
   booktitle = {Handbook of Satisfiability, Second Edition},
   series = {Frontiers in Artificial Intelligence and Applications},
   volume = {336},
   chapter = {33},
   pages = {825--885},
   publisher = {IOS Press},
   month = feb,
   year = {2021},
   url = {http://www.cs.stanford.edu/~barrett/pubs/BSST21.pdf}
}

@inproceedings{JB10-LPAR,
  author       = {Dejan Jovanovic and
                  Clark W. Barrett},
  title        = {Polite Theories Revisited},
  booktitle    = {{LPAR} (Yogyakarta)},
  series       = {Lecture Notes in Computer Science},
  volume       = {6397},
  pages        = {402--416},
  publisher    = {Springer},
  year         = {2010}
}

@article{bonacina2019theory,
  title={Theory combination: beyond equality sharing},
  author={Bonacina, Maria Paola and Fontaine, Pascal and Ringeissen, Christophe and Tinelli, Cesare},
  journal={Description Logic, Theory Combination, and All That: Essays Dedicated to Franz Baader on the Occasion of His 60th Birthday},
  pages={57--89},
  year={2019},
  publisher={Springer}
}

@inproceedings{10.1007/978-3-031-99984-0_2,
author = {V. Toledo, Guilherme and Przybocki, Benjamin and Zohar, Yoni},
title = {Being Polite Is Not Enough (and Other Limits of Theory Combination)},
year = {2025},
isbn = {978-3-031-99983-3},
publisher = {Springer-Verlag},
address = {Berlin, Heidelberg},
url = {https://doi.org/10.1007/978-3-031-99984-0_2},
doi = {10.1007/978-3-031-99984-0_2},
booktitle = {Automated Deduction – CADE 30: 30th International Conference on Automated Deduction, Stuttgart, Germany, July 28-31, 2025, Proceedings},
pages = {17–34},
numpages = {18},
location = {Stuttgart, Germany}
}

@book{Enderton,
	publisher = {New York, Academic Press},
	author = {Herbert Bruce Enderton},
	year = {1972},
	title = {A Mathematical Introduction to Logic},
isbn={978-0122384523}
}
